\newcommand{\Rmnum}[1]{\expandafter\@slowromancap\romannumeral #1@}
\newcommand{\deleted}[1]{}
\begin{document}

\title{A Unified Framework for Clustering Constrained Data without Locality Property\thanks{This work was supported in part by NSF through  grants IIS-1115220, IIS-1422591, CCF-1422324, CNS-1547167, CCF-1656905, and CCF-1716400. The first author was also supported by a start-up fund from Michigan State University. A preliminary version of this paper has appeared in Twenty-Sixth Annual ACM-SIAM Symposium on Discrete Algorithms (SODA 2015)\cite{DX15}.}}


\author{Hu Ding        \and
        Jinhui Xu 
}


\institute{Hu Ding \at
            Department of Computer Science and Engineering\\
           Michigan State University\\
           School of Computer Science and Technology\\
University of Science and Technology of China\\           
              \email{huding@msu.edu, huding@ustc.edu.cn}           
           \and
           Jinhui Xu \at
           Department of Computer Science and Engineering\\
State University of New York at Buffalo\\
 \email{jinhui@buffalo.edu}}

\date{Received: date / Accepted: date}

\maketitle

\begin{abstract}
In this paper, we consider a  class of constrained clustering problems of points in $\mathbb{R}^{d}$, where $d$ could be rather high. A common feature of these problems is that their optimal clusterings no longer have the locality property (due to the additional constraints), which is a key property required by many algorithms for their unconstrained counterparts.  To overcome the difficulty caused by the loss of locality, we present in this paper a unified framework, called {\em Peeling-and-Enclosing (PnE)}, to iteratively solve two variants of the constrained clustering problems, {\em constrained $k$-means clustering} ($k$-CMeans) and {\em constrained $k$-median clustering} ($k$-CMedian). Our framework generalizes Kumar {\em et al.}'s elegant $k$-means clustering approach \cite{KSS} from unconstrained data to constrained data, and is based on two standalone geometric techniques, called {\em Simplex Lemma} and {\em Weaker Simplex Lemma}, for $k$-CMeans and $k$-CMedian, respectively. The simplex lemma (or weaker simplex lemma) enables us to efficiently approximate the mean  (or median) point of an unknown set of  points  by searching a small-size grid, independent of the dimensionality of the space,  in a simplex (or the surrounding region of a simplex), and thus can be used to handle high dimensional data. If $k$ and $\frac{1}{\epsilon}$ are fixed numbers, our framework generates,  in nearly linear time ({\em i.e.,} $O(n(\log n)^{k+1}d)$), $O((\log n)^{k})$ $k$-tuple candidates for the $k$ mean or median points, and one of them induces a $(1+\epsilon)$-approximation for  $k$-CMeans or $k$-CMedian, where $n$ is the number of points. Combining this unified framework with a problem-specific selection algorithm (which determines the best $k$-tuple candidate), 
we obtain a $(1+\epsilon)$-approximation for each of the constrained clustering problems. Our framework  improves considerably the best known results for these problems. We expect that our technique will be applicable to  other constrained clustering problems without locality.

\keywords{constrained clustering \and k-means/median \and approximation algorithms \and high dimensions}

\end{abstract}

%
%

\section{Introduction}
\label{sec-intro}

Clustering is one of the most fundamental problems in computer science, and finds numerous applications in many different areas, such as data management, machine learning, bioinformatics, networking, {\em etc.} \cite{JMF}. The common goal of many clustering problems is to partition a set of given data items into a number of clusters so as to minimize the total cost measured by a certain objective function.  For example, the popular {\em $k$-means} (or {\em $k$-median}) clustering seeks $k$ mean (or median) points to induce a partition of the given data items so that the average squared distance (or the average distance) from each data item to its closest mean (or median) point is minimized.
Most existing clustering techniques assume that the data items are independent from each other and therefore can ``freely'' determine their memberships in the resulting clusters ({\em i.e.,}  a data item does not need to pay attention to the clustering of others). 
However, in many real-world applications, data items are often constrained or correlated, which require a great deal of effort to handle such additional constraints.  
In recent years, considerable attention has been paid to various types of constrained clustering problems  and a number of techniques, such as  {\em $l$-diversity clustering} \cite{LYZ}, {\em $r$-gather clustering} \cite{APF,EHR,HR13},  {\em capacitated clustering} \cite{ABS13,CHK,KS00},  {\em chromatic clustering} \cite{DX11,ADH}, and {\em probabilistic clustering} \cite{GM09,CM08,LSS}, have been obtained. 
In this paper, we study a class of constrained clustering problems  of points in Euclidean space.  


{\em Given a set of points $P$ in $\mathbb{R}^d$, a positive integer $k$, and a constraint $\mathbb{C}$, the constrained $k$-means (or $k$-median) clustering problem is to partition $P$ into $k$ clusters so as to  minimize the objective function of the ordinary $k$-means (or $k$-median) clustering and satisfy the constraint $\mathbb{C}$. In general, the problems are denoted by $k$-CMeans and  $k$-CMedian, respectively.}
%

The detailed definition for each individual problem is given in Section~\ref{sec-application}. 
Roughly speaking, data constraints can be imposed at either {\em cluster} or {\em item} level. Cluster level constraints are restrictions on the resulting clusters, such as the size of the clusters \cite{APF} or their mutual differences \cite{ZLM}, while item level constraints  are mainly on data items inside each cluster, such as the coloring constraint  which prohibits  items of the same color being clustered into one cluster  \cite{ADH,DX11,LYZ}.  

\begin{figure}
  \begin{center}
  \subfloat[]{\label{fig-voronoi1}\includegraphics[width=0.3\textwidth]{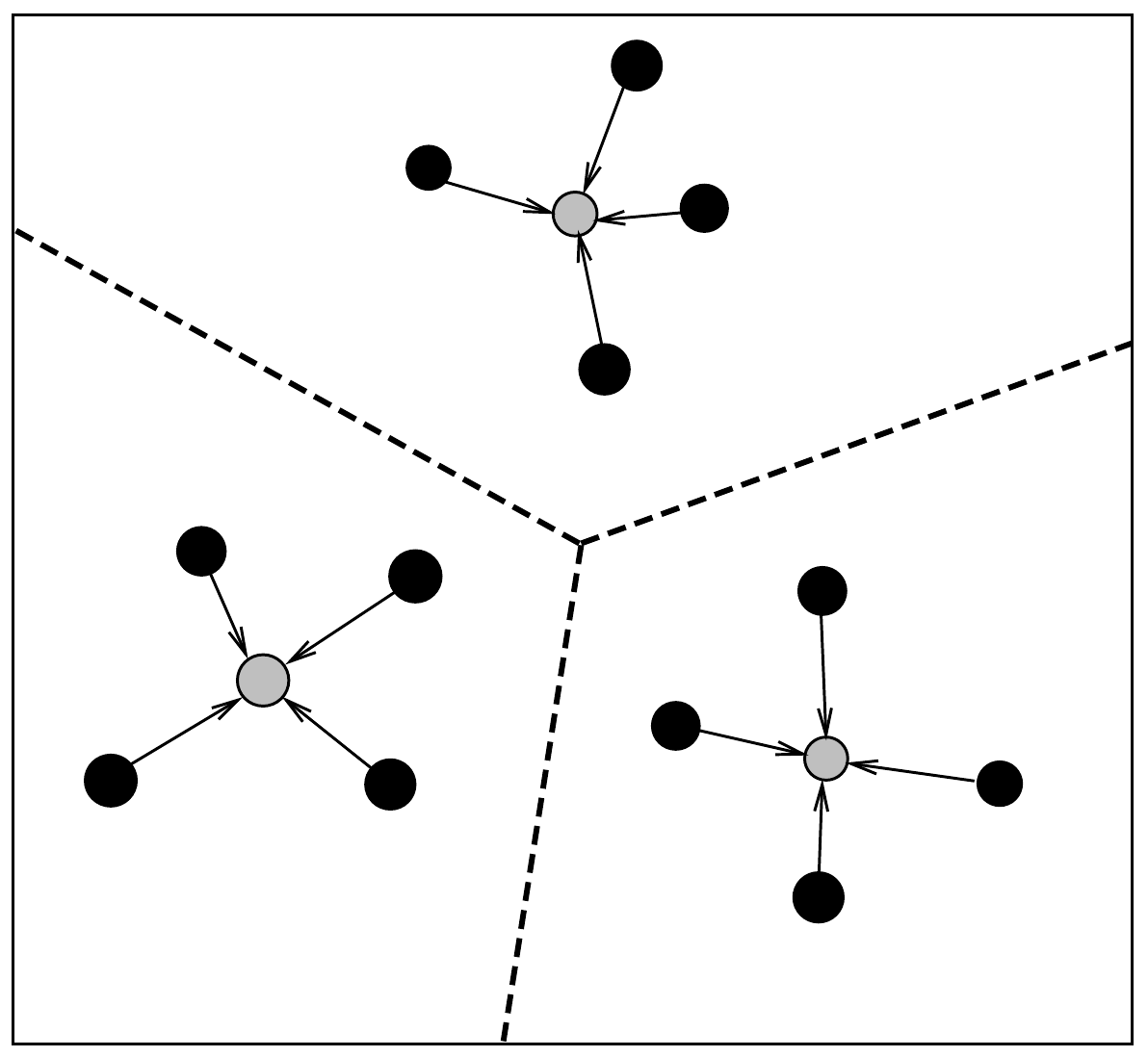}}
  \hspace{1in}
  \subfloat[]{\label{fig-voronoi2}\includegraphics[width=0.3\textwidth]{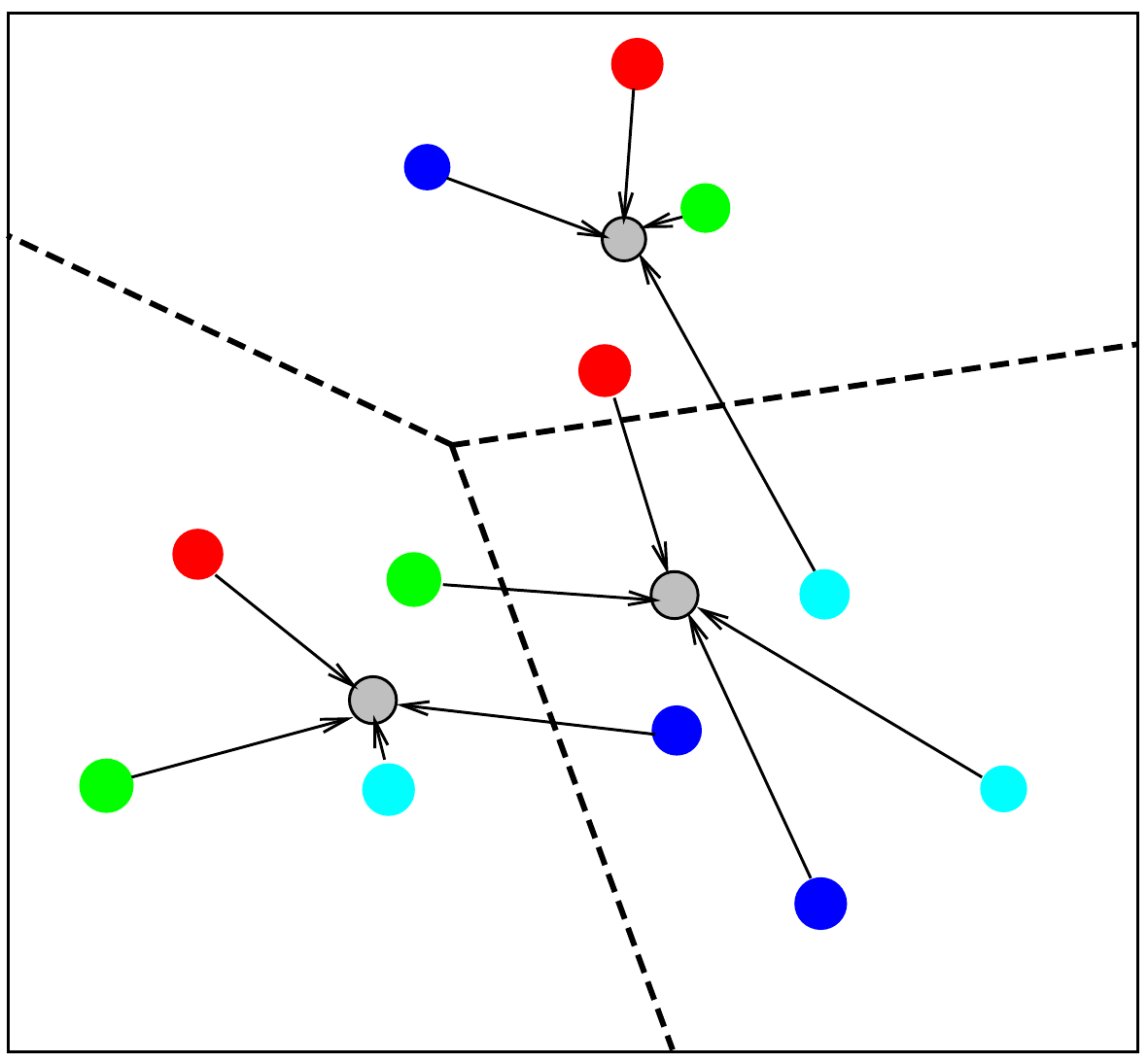}}
   \end{center}
  \caption{(a) The Voronoi diagram induced by the mean points ({\em i.e.,} the grey points) of $k$-means clustering for $k=3$; (b) the Voronoi diagram induced by the mean points of chromatic $k$-means clustering, where the points sharing the same color should be in different clusters.}
\end{figure}

The additional constraints could considerably change the nature of the clustering problems. For instance, one key property exhibited in many unconstrained clustering problems is the so called {\em locality} property, which  indicates that each cluster is located entirely inside the Voronoi cell  of its center  ({\em e.g.,} the mean, median, or center point) in the Voronoi diagram of all the centers \cite{IKI} (see Figure~\ref{fig-voronoi1}). Existing algorithms for these clustering problems often  rely on such a property \cite{KSS,BHI,AV07,C09,FKK,IKI,M00,OSS}. However, due to the additional constraints, the locality property may no longer exist (see Figure~\ref{fig-voronoi2}). Therefore, we need new techniques to overcome  this challenge. 

\subsection{Our Main Results}
\label{sec-mainresult}
In this paper we present a unified framework called {\em Peeling-and-Enclosing (PnE)}, based on two standalone geometric techniques called {\em Simplex Lemma} and {\em Weaker Simplex Lemma},  to solve a class of constrained clustering problems  without the locality property in Euclidean space, where the dimensionality of the space could be rather high and the number $k$ of clusters is assumed to be some fixed number. Particularly, we investigate the constrained $k$-means ($k$-CMeans) and $k$-median ($k$-CMedian) versions of these problems.
For the $k$-CMeans problem, our unified framework generates in $O(n(\log n)^{k+1}d)$ time a set of $k$-tuple candidates of cardinality $O((\log n)^{k})$  for the to-be-determined $k$ mean points.  We show that among the set of candidates,  one of them  induces a $(1+\epsilon)$-approximation for $k$-CMeans. To find out the best $k$-tuple candidate, a problem-specific selection algorithm is needed for each individual constrained clustering problem (note that due to the additional constraints, the selection problems may not be trivial).
Combining the unified framework with the selection algorithms, we obtain a $(1+\epsilon)$-approximation for each constrained clustering problem in the considered class. Our results considerably improve (in various ways) the best known algorithms for all these problems (see the table in Section~\ref{sec-related}). 
Our techniques can also be extended to $k$-CMedian to achieve $(1+\epsilon)$-approximations for these problems with the same time complexities. Below is a list of the constrained clustering problems considered in this paper. We expect that our technique will be applicable to other clustering problems without locality property, as long as the corresponding selection problems  can be solved.

\begin{enumerate}

\item \textbf{$l$-Diversity Clustering.} In this problem, each input point is associated with a color, and each cluster has no more than a fraction  $\frac{1}{l}$ (for some constant $l>1$) of its points sharing the same color.
The problem is motivated by a widely-used privacy preserving model called {\em $l$-diversity}  \cite{MGK,LYZ}  in data management, which requires that each block contains no more than a fraction $\frac{1}{l}$ of items sharing the same sensitive attribute.  

\item \textbf{Chromatic Clustering.} In \cite{DX11}, Ding and Xu  introduced a new clustering problem called {\em chromatic clustering}, which requires that the points with the same color should be clustered in different clusters.  It is motivated by a biological application for clustering  chromosome homologs in a population of cells, where  homologs from the same cell should be clustered into different clusters. 
Similar problem also appears in applications related to transportation system design \cite{ADH}. 

\item \textbf{Fault Tolerant Clustering.}
The problem of {\em fault tolerant clustering} assigns each point $p$ to its $l$ nearest cluster centers for some $l\ge 1$, and counts all the $l$ distances 
as its cost.  The problem has been extensively studied in various applications for achieving better fault tolerance~\cite{CGR,KPS,SS03,KR13,HHL}.


\item \textbf{$r$-Gather Clustering.} This clustering problem requires each of the clusters to contain at least $r$ points for some  $r>1$.   
It is motivated by the {\em $k$-anonymity} model for privacy preserving~\cite{S02,APF}, where each block contains at least $k$ items \footnote{We use $r$ here, instead of $k$, since $k$ often denotes the number of clusters in a clustering problem.}.

\item \textbf{Capacitated Clustering.} This clustering problem has an upper bound on the size of each cluster, and 
finds various applications in data mining and resource assignment \cite{KS00,CHK}.

\item \textbf{Semi-Supervised Clustering.} Many existing clustering techniques, such as {\em ensemble clustering}~\cite{SG02,Sin10} and {\em consensus clustering}~\cite{ACN,CW10}, make use of a priori knowledge. Since such clusterings are not always based on the geometric cost  ({\em e.g.,} $k$-means cost) of the input, thus a more accurate way of clustering is to consider both the priori knowledge and  the geometric cost. We consider the following {\em semi-supervised clustering} problem: 
given a set $P$ of points and a clustering $\overline{\mathcal{S}}$ of $P$ (based on the priori knowledge),  partition $P$ into $k$ clusters so as to minimize the sum (or some function) of the geometric cost and the difference with the given clustering $\overline{\mathcal{S}}$.  Another related problem is {\em evolutionary clustering}~\cite{CKT}, where the clustering  in each time point needs to minimize not only the geometric cost, but also the total shifting  from the clustering in the previous time point (which can be viewed as $\overline{\mathcal{S}}$). 

\item \textbf{Uncertain Data Clustering.} Due to the unavoidable error, data for clustering are not always precise.  
This motivates us to consider the following 
{\em probabilistic clustering} problem \cite{GM09,CM08,LSS} : given a set of ``nodes'' with each represented as a probabilistic distribution over a point set in $\mathbb{R}^d$, 
group the nodes into $k$ clusters so as to  minimize the expected cost with respect to the probabilistic distributions. 


\end{enumerate}

\noindent\textbf{Note:} Following our work published in \cite{DX15}, Bhattacharya {\em et al.}~\cite{BJK} improved the running time for finding the candidates of $k$-cluster centers from nearly linear to linear based on the elegant $D^2$-sampling. Their work also follows the framework of clustering constrained data, i.e., generating the candidates and selecting the best one by a problem-specific selection algorithm, presented in this paper. Our paper represents the first systematically theoretical study of the constrained clustering problems. Some of the underlying techniques, such as  Simplex Lemma and Weaker Simplex Lemma, are interesting in their rights, which have already been used to solve other problems \cite{DGX} ({\em e.g.,} the popular ``truth discovery" problem in data mining). 

%
%
%

 \subsection{Related Works} 
\label{sec-related}

\begin{center}
\begin{table*}[ht]
{\small
\hfill{}
\begin{tabular}{ |c|p{7.5cm}| }
\hline
  Problems & \hspace{3cm} Existing Results  \\ \hline
$l$-diversity clustering & $2$-approx. for metric $k$-centers \cite{LYZ} (only for a restricted version of $l$-diversity clustering)\\ 
\hline
  chromatic clustering & $(1+\epsilon)$-approx. for chromatic $k$-cones clustering in $\mathbb{R}^d$ \cite{DX11}; $(1+\epsilon)$-approx. for $2$-center in $\mathbb{R}^2$ \cite{ADH} \\ \hline
  fault tolerant clustering& $4$ and $93$-approx. for uniform and non-uniform metric $k$-median \cite{SS03,HHL}; $2$-approx. for metric $k$-centers \cite{CGR,KPS}; \\ \hline
  $r$-gather  clustering & $2$-approx. for metric $k$-centers and $4$-approx. for metric $k$-cellulars \cite{APF}; $(4+\epsilon)$-approx. for $k$-centers in  constant dimensional space \cite{EHR,HR13}\\\hline
  capacitated clustering & $6$ and $7$-approx. for metric $k$-centers with uniform and non-uniform capacities \cite{KS00,CHK}  \\\hline
  semi-supervised clustering & Heuristic algorithms \cite{BBM,GTC,WCS,WCS00} \\ \hline
  uncertain data clustering &  $(1+\epsilon)$-approx. for $k$-means and unassigned $k$-median \cite{CM08}; $(3+\epsilon)$-approx. for assigned $k$-median \cite{CM08,XX10}; $(1+\epsilon)$-approx. for assigned $k$-median in constant dimensional space \cite{LSS}; $O(1)$-approx. for $k$-centers \cite{GM09}\\ \hline \hline
\multicolumn{2}{ |p{10.5cm}| }{Our results: $(1+\epsilon)$-approx. of $k$-means and $k$-median for all the $7$ problems in $\mathbb{R}^d$ where $d$ could be rather high.} \\
\hline
 \end{tabular}
}
\hfill{}
\caption{\small{Existing and our new results for the class of constrained clustering problems.}}
\label{tb-result}
\end{table*}
\end{center}
 \normalsize
 
The above $7$ constrained clustering problems have been extensively studied in the past and a number of theoretical results have been obtained (in addition to many heuristic/practical solutions). Table~\ref{tb-result} lists the best known theoretical results for each of them. It is clear that most existing results are either constant approximations or only for some restricted versions ({\em e.g.,} constant dimensional space, etc.), and therefore can be improved by our techniques.

For the related traditional Euclidean $k$-means and $k$-median clustering problems, extensive research has been done in the past. 
 Inaba {\em et al.} \cite{IKI} showed that an exact $k$-means clustering can be computed in $O(n^{O(dk)})$ time for $n$ points in $\mathbb{R}^{d}$.
Arthur and Vassilvitskii\cite{AV07} presented the $k$-means++ algorithm that achieves the expected $O(\log k)$ approximation ratio. 
Ostrovsky {\em et al.} \cite{OSS} provided a $(1+\epsilon)$-approximation for well-separated points. Based on the concept of stability, Awasthi {\em et al.} \cite{ABS10} presented the PTAS for the problems of $k$-means and $k$-median clustering.  Matousek~\cite{M00} obtained a nearly linear time $(1+\epsilon)$-approximation for any fixed $d$ and $k$. Similar result  for $k$-median has also been achieved by Kolliopoulos and Rao \cite{KR07}. 
 Later, Fernandez de la Vega {\em et al.} \cite{FKK} and  Bad\u{o}iu {\em et al.} \cite{BHI} achieved nearly linear time $(1+\epsilon)$-approximations for high dimensional $k$-means and $k$-median clustering, respectively, for fixed $k$. Kumar {\em et al.} \cite{KSS} showed that  linear-time  randomized  $(1+\epsilon)$-approximation algorithms can be obtained  for several Euclidean clustering problems (such as $k$-means and $k$-median) in any dimensional space. Recently, this technique
  has been further extended to several  clustering problems with non-metric distance functions  \cite{ABS}.
 Later, Jaiswal {\em et al.}~\cite{JKS} applied a non-uniform sampling technique, which is called $D^2$-sampling, to simplify and improve the result in \cite{KSS}; their algorithm can also handle the non-metric distance clustering problems studied in \cite{ABS}.
Using the {\em core-set} technique,  a series of improvements have been achieved  for high dimensional clustering problems  \cite{FL11}.

As for the hardness of the problem, Dasgupta \cite{D08} showed that it is NP-hard for $k$-means clustering in high dimensional space even if $k=2$; Awasthi et al.~\cite{ACK15} proved that there is no PTAS for $k$-means clustering if both $d$ and $k$ are large, unless $P=NP$. Guruswami  and Indyk \cite{GI03} showed that it is NP-hard to obtain any PTAS for $k$-median clustering if $k$ is not a constant and $d$ is $\Omega(\log n)$.

Besides the traditional  clustering models, 
Balcan  {\em et al.}  considered  the problem of  finding the clustering with small difference from the unknown ground truth~\cite{BBG,BB09}. 

\subsection{Our Main Ideas}
\label{sec-idea}
Most existing $k$-means or $k$-median clustering algorithms in Euclidean space consist of two main steps:  (1) identify the set of $k$ mean or median points and (2) partition the input points into $k$ clusters based on these mean or median points (we call this step \textbf{Partition}). 
\textbf{Note that for some constrained clustering problems, the Partition step may not be trivial.}
More formally, we have the following definition.

\begin{definition}[Partition Step]
\label{def-partition}
Given an instance $P$ of $k$-CMeans (or $k$-CMedian) and $k$ cluster centers (i.e., the mean or median points), the Partition step is to form $k$ clusters of $P$, where the clusters should satisfy the constraint and each cluster is assigned to an individual cluster center, such that the objective function of the ordinary $k$-means (or $k$-median) clustering is minimized.
\end{definition}



To determine the set of $k$ mean or median points in step (1), most existing algorithms (either explicitly or implicitly) rely on the locality property. To shed some light on this, consider a representative and elegant approach by Kumar {\em et al.} \cite{KSS} for $k$-means clustering. Let $\{Opt_1,\cdots,Opt_k \}$ be the set of $k$ unknown optimal clusters in non-increasing order of their sizes. Their approach uses random sampling and sphere peeling to iteratively find  $k$ mean points. At the $j$-th iterative step, it draws $j$-$1$ peeling spheres centered at the $j$-$1$ already obtained  mean points, and takes a random sample on the 
points outside the peeling spheres to find the $j$-th mean point. 
Due to the locality property, the points belonging to the first $j$-$1$ clusters lie inside their corresponding $j$-$1$ Voronoi cells; that is, for each peeling sphere, most of the covered points belong to their corresponding cluster, and thus ensures the correctness of the peeling step.

However, when the additional constraint (such as coloring or size) is imposed on the points, the locality property may no longer exist (see Figure \ref{fig-voronoi2}), and thus the correctness of the peeling step cannot always be guaranteed.   
In this scenario, the core-set technique~\cite{FL11} is also unlikely to be able to resolve the issue. The main reason is that although the core-set can greatly reduce the size of the input points, it is quite challenging to impose the constraint through the  core-set.  


To overcome this challenge, we present a unified framework, called {\em Peeling-and-Enclosing (PnE)}, in this paper, based on a standalone new geometric technique called {\em Simplex Lemma}. The simplex lemma aims to address the major obstacle encountered by the peeling strategy in \cite{KSS} for constrained clustering problems. More specifically,  due to the loss of locality, at the $j$-th peeling step, the points of the $j$-th cluster $Opt_{j}$ could be scattered  over all the Voronoi cells of the first $j$-$1$ mean points, and therefore their mean point  can no longer be simply determined by the sample outside the $j$-$1$ peeling spheres.
To resolve this issue, our main idea is to view $Opt_{j}$ as the union of  $j$ unknown subsets, $Q_{1}, \cdots, Q_{j}$, 
 with each $Q_{l}$, $1\le l \le j$-$1$, being 
the set of points 
inside the  Voronoi cell (or peeling sphere) of the obtained $l$-th mean point  and $Q_{j}$ being the set of remaining points of $Opt_{j}$.  After approximating the mean point of each  unknown subset by using random sampling, we build a simplex  to enclose a region which contains the mean point of $Opt_{j}$, and then search  the simplex region for  a good approximation of the $j$-th mean point.
To make this approach work, we need to overcome two 
difficulties: \textbf{(a) how to generate the desired simplex to contain the $j$-th mean point}, and \textbf{(b) how to efficiently search the (approximate) $j$-th mean point  inside the simplex}.

For difficulty (a), our idea is to use  the already determined  $j$-$1$ mean points (which can be shown that they are also the approximate mean points of $Q_{1}, \cdots, Q_{j-1}$, respectively)  and another point, which is the mean of those points in $Opt_{j}$ outside the peeling spheres (or Voronoi cells) of the first $j$-$1$ mean points ({\em i.e.,} $Q_{j}$), to build a ($j$-$1$)-dimensional simplex to contain the $j$-th mean point.  Since we do not know how $Opt_{j}$ is partitioned ({\em i.e.,} how $Opt_{j}$ intersects the $j$-$1$ peeling spheres), we vary the radii of the peeling spheres $O(\log n)$ times to guess the partition and generate a set of simplexes, where the radius candidates are based on an upper bound of the optimal value determined by a novel estimation algorithm (in Section \ref{sec-upper}). We show that among the set of simplexes, one of them contains the $j$-th (approximate) mean point.

For difficulty (b), our simplex lemma (in Section \ref{sec-simplex}) shows that if each vertex $v_{l}$ of the simplex $\mathcal{V}$ is the (approximate) mean point of $Q_{l}$, then we can find a good approximation of the mean point of $Opt_j$ by searching a small-size grid inside $\mathcal{V}$. A nice feature of the simplex lemma is that the grid size is independent of the dimensionality of the space and thus can be used to handle high dimensional data. In some sense, our simplex lemma can be viewed as a considerable generalization of the well-known sampling lemma ({\em i.e.,} Lemma~\ref{lem-dis} in this paper) in \cite{IKI}, which has been widely used  for estimating the mean of a point set through random sampling~\cite{FMS,IKI,KSS}. Different from Lemma~\ref{lem-dis}, which requires a global view of the point set (meaning that the sample needs to be taken from the point set),  our simplex lemma only requires some partial views ({\em e.g.,} sample sets are taken from those unknown subsets whose size might be quite small). 
If $Opt_j$ is the point set, our simplex lemma enables us to bound the error by the variance\footnote{Given a point set in Euclidean space, its ``variance'' is the average of the squared distances from the points to their mean point. 
} of $Opt_j$ ({\em i.e.,} a local measure)  and the optimal value of the clustering problem on the whole instance $P$ ({\em i.e.,} a global measure), 
and thus helps us to ensure the quality of our solution.

For the $k$-CMedian problem, we show that although the simplex lemma no longer holds since the median point may lie outside the simplex, a weaker version (in Section~\ref{sec-wsl}) exists, which searches a surrounding region of the simplex. Thus our Peeling-and-Enclosing framework works for both $k$-CMeans and $k$-CMedian. It  generates in total 
$O((\log n)^{k})$ $k$-tuple candidates for the constrained $k$ mean or median points. To  determine the best $k$ mean or median points, we need to use the property of each individual problem to design a selection algorithm. The selection algorithm takes each $k$-tuple candidate, computes 
a clustering ({\em i.e.,} completing the Partition  step) satisfying the additional constraint, and outputs the $k$-tuple with the minimum cost. 
We present a selection algorithm for each considered problem in Sections~\ref{sec-application} and \ref{sec-application2}. 


%


%

\section{Simplex Lemma}
\label{sec-simplex}

\begin{figure} 
  \begin{center}
  \subfloat[]{\label{fig-simplex}\includegraphics[width=0.35\textwidth]{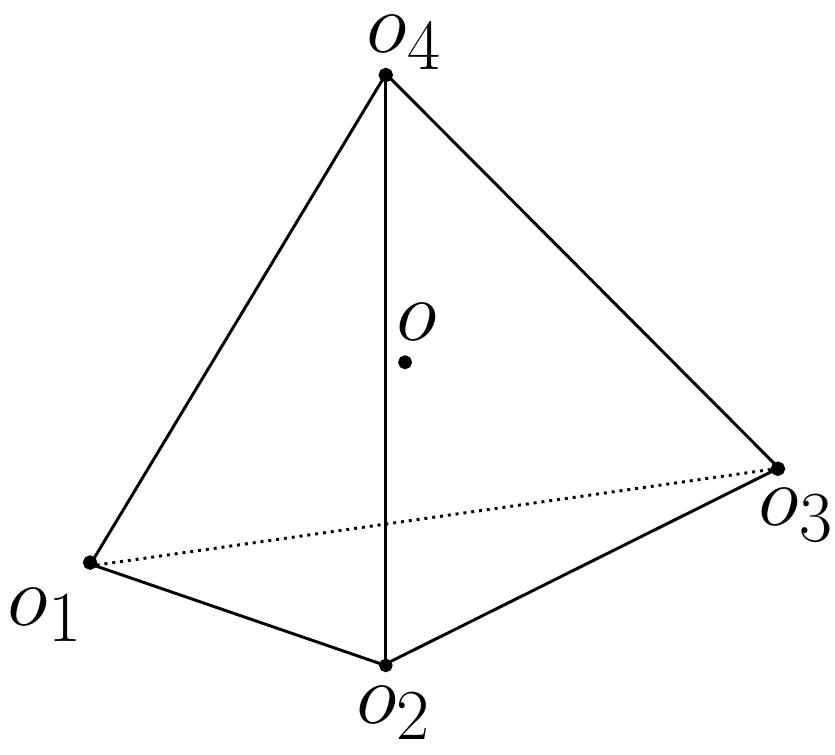}}
 \hspace{0.5in}
  \subfloat[]{\label{fig-simplex2}\includegraphics[width=0.35\textwidth]{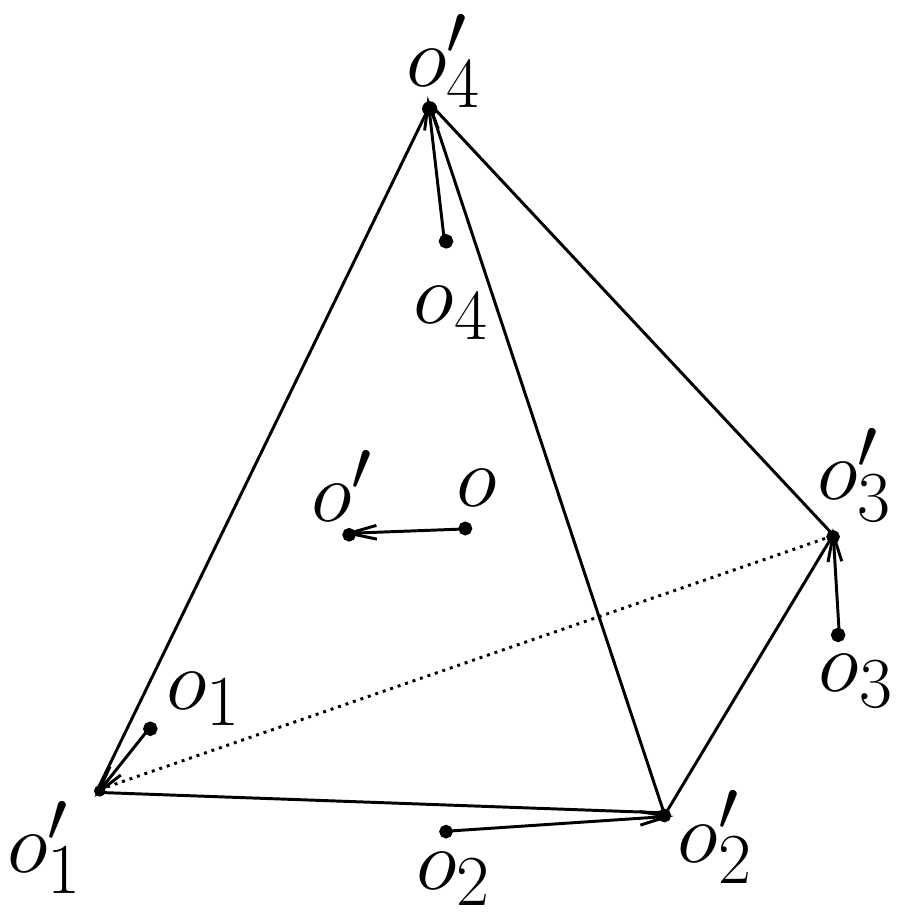}}
   \end{center}
  \caption{Examples for Lemma \ref{lem-simplex} and Lemma \ref{lem-shift} with $j=4$ respectively.}
\end{figure}

In this section, we present the {\em Simplex Lemma} for approximating the mean point of an \textbf{unknown} point set $Q$,  where the only known information is a set of $j$ points with each of them being an approximate mean point of an unknown subset of $Q$. In Section~\ref{sec-wsl}, we show how to extend the idea to approximate median point by the {\em Weaker Simplex Lemma}. The two lemmas are keys to solving the $k$-CMeans and $k$-CMedian problems.

\begin{lemma}[Simplex Lemma \Rmnum{1}]
\label{lem-simplex}
Let $Q$  be a set of points in $\mathbb{R}^d$ with a partition of $Q=\cup^j_{l=1} Q_l$ and $Q_{l_1}\cap Q_{l_2}=\emptyset$ for any $l_1\neq l_2$. 
Let $o$ be the mean point of $Q$, and $o_l$ be the mean point of $Q_l$ for $1\leq l\leq j$. Let the variance of $Q$ be $\delta^2=\frac{1}{|Q|}\sum_{q\in Q}||q-o||^2$,  and $\mathcal{V}$ be the simplex determined by $\{o_1, \cdots, o_j\}$. 
Then for any $0<\epsilon\leq 1$, it is possible to construct a grid of size $O((8j/\epsilon)^j)$ inside $\mathcal{V}$ such that at least one grid point $\tau$ satisfies the inequality $||\tau-o||\leq\sqrt{\epsilon}\delta$.
\end{lemma}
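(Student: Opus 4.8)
The plan is to prove the lemma directly, reducing it to a weighted rounding problem in barycentric coordinates. Write $n=|Q|$, $n_l=|Q_l|$, and $\lambda_l=n_l/n$ (all $Q_l$ nonempty, so $\lambda_l>0$ and $\sum_l\lambda_l=1$). First I would record two elementary identities. Since $o=\frac1n\sum_{q\in Q}q=\sum_{l=1}^j\frac{n_l}{n}o_l$, the mean $o$ is the convex combination $\sum_l\lambda_l o_l$; in particular $o\in\mathcal V$. Second, applying the parallel-axis (bias-variance) decomposition to each $Q_l$ and noting $\sum_{q\in Q_l}(q-o_l)=0$,
\[
n\delta^2=\sum_{l=1}^j\Big(\sum_{q\in Q_l}\|q-o_l\|^2+n_l\|o_l-o\|^2\Big)\ \ge\ \sum_{l=1}^j n_l\|o_l-o\|^2 ,
\]
hence $\sum_{l=1}^j\lambda_l\|o_l-o\|^2\le\delta^2$.

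Next I would fix an integer $N=\Theta(j/\epsilon)$ and take the grid
\[
\mathcal G=\Big\{\textstyle\sum_{l=1}^j\mu_l o_l \ :\ \mu_l\in\{0,\tfrac1N,\dots,\tfrac NN\},\ \textstyle\sum_l\mu_l=1\Big\}\ \subseteq\ \mathcal V ,
\]
a barycentric lattice of resolution $1/N$ inside $\mathcal V$, whose cardinality is $\binom{N+j-1}{j-1}\le(N+1)^{j-1}=O((8j/\epsilon)^j)$. For any $\tau=\sum_l\mu_l o_l\in\mathcal G$, since $\sum_l(\mu_l-\lambda_l)=0$ I may subtract $o$ inside each term and apply the Cauchy-Schwarz inequality with weights $\lambda_l$:
\[
\|\tau-o\|=\Big\|\sum_l(\mu_l-\lambda_l)(o_l-o)\Big\|\le\sum_l\frac{|\mu_l-\lambda_l|}{\sqrt{\lambda_l}}\cdot\sqrt{\lambda_l}\,\|o_l-o\|\le\Big(\sum_l\frac{(\mu_l-\lambda_l)^2}{\lambda_l}\Big)^{1/2}\delta .
\]
So it suffices to exhibit one grid point whose barycentric coordinates satisfy the $\chi^2$-type bound $\sum_l(\mu_l-\lambda_l)^2/\lambda_l\le\epsilon$.

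To produce such coordinates I would separate light and heavy parts. Put $\theta=\Theta(\epsilon/j)$, let $S=\{l:\lambda_l<\theta\}$, $B=\{l:\lambda_l\ge\theta\}$, and $\sigma=\sum_{l\in S}\lambda_l\le j\theta$. Set $\mu_l=0$ for $l\in S$; these contribute exactly $\sum_{l\in S}\lambda_l=\sigma$ to the $\chi^2$ sum. On $B$, start from $\mu^\ast_l=\lambda_l/(1-\sigma)$, which sums to $1$ over $B$ and has $\chi^2$-error $\sigma^2/(1-\sigma)$ against $(\lambda_l)_{l\in B}$; then round each $\mu^\ast_l$ to the nearest multiple of $1/N$ and restore the sum to $1$ by bumping fewer than $|B|\le j$ coordinates — a standard rounding keeping $|\mu_l-\mu^\ast_l|\le1/N$. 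Using $(\mu_l-\lambda_l)^2\le2(\mu_l-\mu^\ast_l)^2+2(\mu^\ast_l-\lambda_l)^2$, $\lambda_l\ge\theta$ and $|B|\le j$, the $B$-part contributes at most $\frac{2j}{N^2\theta}+\frac{2\sigma^2}{1-\sigma}$. Choosing $\theta=\epsilon/(4j)$ makes $\sigma\le\epsilon/4$, so the $S$-term and the $\sigma^2$-term together are below $\epsilon/2$; choosing $N=\lceil4j/\epsilon\rceil$ makes $\frac{2j}{N^2\theta}\le\epsilon/2$. Summing gives the required $\chi^2$ bound, hence a grid point $\tau$ with $\|\tau-o\|\le\sqrt\epsilon\delta$, while $|\mathcal G|\le(N+1)^{j-1}=O((8j/\epsilon)^j)$.

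The conceptual crux — the step I expect to carry the real content — is the Cauchy-Schwarz reformulation, since it measures the rounding error in the scale-free $\chi^2$ norm: this is exactly what lets the light parts be deleted at the cost of only their total mass (instead of forcing us to control the potentially huge distances $\|o_l-o\|$ for tiny $Q_l$) and the residual mass be redistributed proportionally. Everything after that is bookkeeping; the only mildly delicate point is tuning $\theta$ and $N$ so the three error terms fit under $\epsilon$ while keeping the grid size $O((8j/\epsilon)^j)$. The same argument can also be packaged as an induction on $j$ — peel off the smallest part when it is light (the mean barely moves) and grid-search the simplex directly when every part is heavy — but the estimates are identical.
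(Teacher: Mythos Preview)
Your proof is correct and takes a genuinely different route from the paper's. The paper argues by induction on $j$: it uses the auxiliary bound $\|o_l-o\|\le\sqrt{(1-\alpha_l)/\alpha_l}\,\delta$ (their Lemma~\ref{lem-close}) to peel off any light part $Q_l$ with $\alpha_l<\epsilon/(4j)$ and reduce to fewer parts; once all parts are heavy, that same bound caps the diameter of the simplex by $4\sqrt{j/\epsilon}\,\delta$, and the paper grids a ball of that radius inside the $(j{-}1)$-flat. Because the set of light indices is unknown, the paper explicitly enumerates all $2^j$ sub-simplices and grids each one.

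Your argument replaces the per-vertex distance bound by the single variance identity $\sum_l\lambda_l\|o_l-o\|^2\le\delta^2$ and the Cauchy--Schwarz step that converts geometric error into the $\chi^2$ discrepancy $\sum_l(\mu_l-\lambda_l)^2/\lambda_l$. This buys you two things: a single data-independent barycentric lattice (no enumeration of faces, since the lattice already contains all faces), and a clean reason why zeroing out light parts costs only their total mass rather than their (possibly huge) distance to $o$. The paper's approach is a bit more geometric and yields Lemma~\ref{lem-close} as a reusable byproduct; yours is more self-contained and makes the grid construction canonical. Both land on the same threshold $\theta=\epsilon/(4j)$ and the same grid size $O((8j/\epsilon)^j)$, and you correctly note at the end that your argument can be repackaged as the paper's induction.
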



Figure \ref{fig-simplex} gives an example for Lemma \ref{lem-simplex}. To prove Lemma \ref{lem-simplex}, we first introduce the following  lemma.

 \begin{lemma}
\label{lem-close}
Let $Q$ be a set of points in $\mathbb{R}^d$, and $Q_{1}$ be a subset containing $\alpha |Q|$ points for some $0<\alpha\leq 1$. Let $o$ and $o_1$ be the mean points of $Q$ and $Q_{1}$, respectively. Then  $||o_{1}-o||\leq\sqrt{\frac{1-\alpha}{\alpha}}\delta$, where $\delta^2=\frac{1}{|Q|}\sum_{q\in Q}||q-o||^2$.
\end{lemma}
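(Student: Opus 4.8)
The plan is to split $Q$ into the given subset $Q_1$ and its complement $Q_2 := Q \setminus Q_1$, which has $(1-\alpha)|Q|$ points, and to exploit the standard ``parallel axis'' decomposition of the sum of squared distances. First I would record the elementary fact that, for any point $x \in \mathbb{R}^d$ and any finite point set $S$ with mean $c$, one has $\sum_{p\in S}\|p-x\|^2 = \sum_{p\in S}\|p-c\|^2 + |S|\,\|x-c\|^2$; this follows by expanding $\|p-x\|^2 = \|p-c\|^2 + 2\langle p-c, c-x\rangle + \|c-x\|^2$ and noting that $\sum_{p\in S}(p-c) = 0$.

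Next I would apply this identity with $x=o$ separately to $S=Q_1$ (mean $o_1$) and $S=Q_2$ (mean $o_2$), and add the two results, which gives
\[
|Q|\,\delta^2 \;=\; \sum_{q\in Q_1}\|q-o_1\|^2 + \sum_{q\in Q_2}\|q-o_2\|^2 + |Q_1|\,\|o-o_1\|^2 + |Q_2|\,\|o-o_2\|^2 .
\]
Dropping the two nonnegative ``within-cluster'' sums yields the inequality $|Q|\,\delta^2 \ge \alpha|Q|\,\|o-o_1\|^2 + (1-\alpha)|Q|\,\|o-o_2\|^2$. The second ingredient is the barycentric relation $o = \alpha o_1 + (1-\alpha) o_2$, from which $o-o_1 = (1-\alpha)(o_2-o_1)$ and $o-o_2 = \alpha(o_1-o_2)$, hence $\|o-o_2\| = \tfrac{\alpha}{1-\alpha}\|o-o_1\|$ (the case $\alpha=1$ being trivial since then $o_1=o$).

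Substituting this into the displayed inequality, I would collect terms: $|Q|\,\delta^2 \ge \alpha|Q|\,\|o-o_1\|^2\bigl(1 + \tfrac{\alpha}{1-\alpha}\bigr) = \tfrac{\alpha}{1-\alpha}\,|Q|\,\|o-o_1\|^2$, which rearranges to $\|o_1-o\|^2 \le \tfrac{1-\alpha}{\alpha}\,\delta^2$, i.e.\ the claimed bound. There is no real obstacle here: the only mild care needed is handling the boundary case $\alpha=1$ (equivalently $Q_2=\emptyset$) and making sure the direction of the barycenter identity is tracked correctly so that the factor $\tfrac{\alpha}{1-\alpha}$ (not its reciprocal) appears — everything else is the routine Huygens/parallel-axis computation.
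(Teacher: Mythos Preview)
Your proof is correct and follows essentially the same approach as the paper: both use the parallel-axis (Huygens) identity on $Q_1$ and $Q_2$, drop the nonnegative within-subset variances, and exploit the barycentric relation $o=\alpha o_1+(1-\alpha)o_2$. The only cosmetic difference is that the paper parameterizes via $L=\|o_1-o_2\|$ and then computes $\|o_1-o\|=(1-\alpha)L$, whereas you substitute $\|o-o_2\|=\tfrac{\alpha}{1-\alpha}\|o-o_1\|$ directly; the algebra is otherwise identical.
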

\begin{proof}
The following claim has been proved in~\cite{KSS}. \\

\noindent\textbf{Claim 1}
{\em Let $Q$ be a set of points in $\mathbb{R}^d$ space, and $o$ be the mean point of $Q$. For any point $\tilde{o}\in \mathbb{R}^d$, $\sum_{q\in Q}||q-\tilde{o}||^2=\sum_{q\in Q}||q-o||^2+|Q| \times ||o-\tilde{o}||^2$.}\\



Let $Q_2=Q\setminus Q_1$, and $o_{2}$ be its mean point. By  Claim 1, we have the following two equalities.
  \begin{eqnarray} 
    \sum_{q\in Q_{1}}||q-o||^2 &= &\sum_{q\in Q_{1}}||q-o_1||^2+|Q_1| \times ||o_1-o||^2, \label{for-1}\\
%
  \sum_{q\in Q_{2}}||q-o||^2 &=& \sum_{q\in Q_{2}}||q-o_2||^2+|Q_2| \times ||o_2-o||^2. \label{for-2}
 \end{eqnarray} 
 
Let $L=||o_{1}-o_{2}||$. By the definition of mean point, we have 
\begin{eqnarray}
o=\frac{1}{|Q|}\sum_{q\in Q} q=\frac{1}{|Q|}(\sum_{q\in Q_1} q+\sum_{q\in Q_2} q)=\frac{1}{|Q|}(|Q_1|o_1+|Q_2|o_2). 
\end{eqnarray}
Thus the three points $\{o, o_1, o_2\}$ are collinear, while $||o_{1}-o||=(1-\alpha) L$ and $||o_{2}-o||=\alpha L$. Meanwhile, by the definition of $\delta$, we have 
\begin{eqnarray}
\delta^2=\frac{1}{|Q|}(\sum_{q\in Q_{1}}||q-o||^2+\sum_{q\in Q_{2}}||q-o||^2). 
\end{eqnarray}

Combining (\ref{for-1}) and (\ref{for-2}), we have 

\begin{eqnarray}
\delta^2&=& \frac{1}{|Q|}(\sum_{q\in Q_{1}}||q-o_1||^2+ |Q_1| \times ||o_1-o||^2 \nonumber\\
&&+\sum_{q\in Q_{2}}||q-o_2||^2+|Q_2| \times ||o_2-o||^2)\nonumber\\
&\geq& \frac{1}{|Q|}( |Q_1| \times ||o_1-o||^2+|Q_2| \times ||o_2-o||^2)\nonumber\\
&=&\alpha((1-\alpha)L)^2+(1-\alpha)(\alpha L)^2\nonumber\\
 \hspace{0.4in} &=&\alpha(1-\alpha)L^2.
 \end{eqnarray} 
 Thus, we have $L\leq\frac{\delta}{\sqrt{\alpha(1-\alpha)}}$, which means that $||o_{1}-o||=(1-\alpha)
L\leq\sqrt{\frac{1-\alpha}{\alpha}}\delta$.
\qed
\end{proof}


\begin{proof}[\textbf{of Lemma \ref{lem-simplex}}]
We prove this lemma by induction on $j$. 

\noindent\textbf{Base case:} For $j=1$, since $Q_1=Q$, $o_1=o$. Thus, the simplex $\mathcal{V}$ and the grid are all simply the point $o_1$. Clearly $\tau=o_{1}$ satisfies the inequality. 

\noindent\textbf{Induction step:} Assume that the lemma holds for any $j\leq j_0$ for some $j_{0} \ge 1$ ({\em i.e.}, the induction hypothesis). Now we consider  the case of $j=j_0+1$. First, we assume that $\frac{|Q_l |}{|Q|}\geq \frac{\epsilon}{4j}$ for each $1\leq l\leq j$. Otherwise, we can reduce the problem to the case of a smaller $j$ in the following way. Let $I=\{l| 1\leq l\leq j, \frac{|Q_l |}{|Q|}< \frac{\epsilon}{4j}\}$ be the index set of small subsets. Then, $\frac{\sum_{l\in I}|Q_l |}{|Q|}<\frac{\epsilon}{4}$, and $\frac{\sum_{l\not\in I}|Q_l |}{|Q|}\geq 1-\frac{\epsilon}{4}$. By Lemma~\ref{lem-close}, we know that  $||o'-o||\leq\sqrt{\frac{\epsilon/4}{1-\epsilon/4}}\delta$, where $o'$ is the mean point of $\cup_{l\not\in I}Q_l$.  Let $(\delta')^2$ be the variance of $\cup_{l\not\in I}Q_l$. Then, we have $(\delta')^2\le\frac{|Q|}{|\cup_{l\not\in I}Q_l|}\delta^2\leq \frac{1}{1-\epsilon/4}\delta^2$. Thus, if we replace $Q$ and $\epsilon$ by $\cup_{l\not\in I}Q_l$ and $\frac{\epsilon}{16}$, respectively, and find a point $\tau$ such that $||\tau-o'||^2\leq\frac{\epsilon}{16}(\delta')^2\leq \frac{\epsilon/16}{1-\epsilon/4}\delta^2$, then we have 
\begin{eqnarray}
||\tau-o||^2\leq(||\tau-o'||+||o'-o||)^2\leq \frac{\frac{9}{16}\epsilon}{1-\epsilon/4}\delta^2\leq \epsilon\delta^2 ,
\end{eqnarray}
where the last inequality is due to the fact $\epsilon<1$. This means that we can reduce the problem to a problem with the point set $\cup_{l\not\in I}Q_l$ and a smaller $j$ ({\em i.e.}, $j-|I|$). By the induction hypothesis, we know that the reduced problem can be solved, where the new simplex would be a subset of $\mathcal{V}$ determined by $\{o_l\mid 1\leq l\leq j, l\not\in I\}$, and therefore the induction step holds for this case. Note that in general, we do not know $I$, but we can enumerate all the $2^j$ possible combinations to guess $I$ if $j$ is a fixed number as is the case  in the algorithm in Section \ref{sec-peeling}. Thus, in the following discussion, we can assume that $\frac{|Q_l |}{|Q|}\geq \frac{\epsilon}{4j}$ for each $1\leq l\leq j$.

For each $1\leq l\leq j$, since $\frac{|Q_l |}{|Q|}\geq \frac{\epsilon}{4j}$, by Lemma \ref{lem-close}, we know that $||o_l-o||\leq\sqrt{\frac{1- \frac{\epsilon}{4j}}{ \frac{\epsilon}{4j}}}\delta\leq 2\sqrt{\frac{j}{\epsilon}}\delta$. This, together with triangle inequality, implies that  for any $1\leq l, l'\leq j$, 
\begin{eqnarray}
||o_l-o_{l'}||\leq ||o_l-o||+||o_{l'}-o||\leq 4\sqrt{j/\epsilon}\delta. \label{for-sim1}
\end{eqnarray}
Thus, if we pick any index $l_0$, and draw a ball $\mathcal{B}$ centered at $o_{l_0}$ and with radius $r=\max_{1\leq l\leq j}\{||o_l-o_{l_0}||\}\leq 4\sqrt{j/\epsilon}\delta$ (by (\ref{for-sim1})), the whole simplex $\mathcal{V}$ will be inside $\mathcal{B}$. Note that $o=\sum^j_{l=1}\frac{|Q_l|}{|Q|}o_l$, so $o$  lies inside the simplex $\mathcal{V}$.
%
%
%
To guarantee that $o$ is contained by the ball $\mathcal{B}$, we can construct $\mathcal{B}$ only in the ($j-1$)-dimensional space spanned by $\{o_1, \cdots, o_j\}$, rather than the whole $\mathbb{R}^d$ space.  Also, if we build a grid inside $\mathcal{B}$ with grid length $\frac{\epsilon r}{4j}$, {\em i.e.,} generating a uniform mesh with each cell being a $(j-1)$-dimensional hypercube of edge length $\frac{\epsilon r}{4j}$,  the total number of  grid points is no more than $O((\frac{8j}{\epsilon})^j)$. With this grid, we know that for any point $p$ inside $\mathcal{V}$, there exists a grid point $g$ such that $||g-p||\leq \sqrt{j (\frac{\epsilon r}{4j})^2}=\frac{\epsilon}{4\sqrt{j}}r\leq \sqrt{\epsilon}\delta$. This means that we can find a grid point $\tau$ inside $\mathcal{V}$, such that $||\tau-o||\leq\sqrt{\epsilon}\delta$. Thus, the induction step holds, and
the lemma is true for any $j\ge 1$.
\qed
\end{proof}

In the above lemma, we assume that the exact positions of $\{o_1, \cdots, o_j\}$ are known (see Figure~\ref{fig-simplex}). However, in some scenarios ({\em e.g.},  in the Algorithm in Section \ref{sec-peeling}), 
we only know an approximate position of each mean point $o_{i}$ (see Figure \ref{fig-simplex2}). The following lemma shows that an approximate position of  $o$ can still be similarly determined (see Section~\ref{sec-detail_shift} for the proof).\\

\begin{lemma}[Simplex Lemma \Rmnum{2}]
\label{lem-shift}
Let $Q$, $o$, $Q_{l}, o_{l}, 1\le l \le j$, and $\delta$ be defined as in Lemma~\ref{lem-simplex}. 
Let  $\{o'_1, \cdots, o'_j\}$ be  $j$ points in $\mathbb{R}^{d}$ such that $||o'_l-o_l ||\leq L$ for $1\leq l\leq j$ and $L>0$, and $\mathcal{V}'$ be the simplex determined by $\{o'_1, \cdots, o'_j\}$. Then for any $0<\epsilon\leq 1$, it is possible to construct a grid of size $O((8j/\epsilon)^j)$ inside $\mathcal{V}'$ such that at least one grid point $\tau$ satisfies the inequality $||\tau-o||\leq\sqrt{\epsilon}\delta+(1+\epsilon)L$. 
\end{lemma}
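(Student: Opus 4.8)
The plan is to reduce the approximate-vertices case to the exact-vertices case already established in Lemma \ref{lem-simplex}. First I would invoke Lemma \ref{lem-simplex} on the exact simplex $\mathcal{V}$ spanned by $\{o_1,\dots,o_j\}$: this gives a grid point $\tau_0 \in \mathcal{V}$ with $\|\tau_0 - o\| \le \sqrt{\epsilon}\,\delta$. Now $\tau_0$ is a convex combination $\tau_0 = \sum_{l=1}^j \lambda_l o_l$ with $\lambda_l \ge 0$, $\sum_l \lambda_l = 1$. Consider the corresponding point $\tau_0' = \sum_{l=1}^j \lambda_l o'_l$ in the perturbed simplex $\mathcal{V}'$. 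Then
\begin{equation}
\|\tau_0' - \tau_0\| = \Bigl\| \sum_{l=1}^j \lambda_l (o'_l - o_l) \Bigr\| \le \sum_{l=1}^j \lambda_l \|o'_l - o_l\| \le \sum_{l=1}^j \lambda_l L = L,
\end{equation}
so $\|\tau_0' - o\| \le \sqrt{\epsilon}\,\delta + L$ by the triangle inequality. The remaining issue is that $\tau_0'$ is an arbitrary point of $\mathcal{V}'$, not a grid point, so I must show that the grid constructed inside $\mathcal{V}'$ contains a point close enough to $\tau_0'$.

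For the gridding step inside $\mathcal{V}'$, I would mimic the construction in the proof of Lemma \ref{lem-simplex}: after the same reduction handling subsets $Q_l$ with $|Q_l|/|Q| < \epsilon/(4j)$ (which only shrinks $\mathcal{V}'$ to a face and replaces $\epsilon$ by a constant fraction of itself, exactly as before), I may assume $|Q_l|/|Q| \ge \epsilon/(4j)$ for all $l$, hence $\|o_l - o\| \le 2\sqrt{j/\epsilon}\,\delta$ and $\|o_l - o_{l'}\| \le 4\sqrt{j/\epsilon}\,\delta$. Then $\|o'_l - o'_{l'}\| \le 4\sqrt{j/\epsilon}\,\delta + 2L$. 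Pick an index $l_0$ and take the ball $\mathcal{B}'$ centered at $o'_{l_0}$ of radius $r' = \max_l \|o'_l - o'_{l_0}\|$, which contains all of $\mathcal{V}'$ and lies in the $(j-1)$-dimensional flat spanned by the $o'_l$. Building a uniform mesh of edge length $\frac{\epsilon r'}{4j}$ inside $\mathcal{B}'$ yields $O((8j/\epsilon)^j)$ grid points, and every point of $\mathcal{V}'$ — in particular $\tau_0'$ — is within $\sqrt{j}\cdot\frac{\epsilon r'}{4j} = \frac{\epsilon}{4\sqrt{j}} r'$ of some grid point $\tau$.

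The last step is to bound $\frac{\epsilon}{4\sqrt{j}} r'$. Since $r' \le 4\sqrt{j/\epsilon}\,\delta + 2L$, we get $\frac{\epsilon}{4\sqrt j} r' \le \sqrt{\epsilon}\,\delta + \frac{\epsilon}{2\sqrt j} L \le \sqrt{\epsilon}\,\delta + \epsilon L$ (using $j \ge 1$). Combining with $\|\tau_0' - o\| \le \sqrt{\epsilon}\,\delta + L$ via the triangle inequality would give $\|\tau - o\| \le 2\sqrt{\epsilon}\,\delta + (1+\epsilon)L$, which is slightly weaker than the claimed $\sqrt{\epsilon}\,\delta + (1+\epsilon)L$; to recover the exact bound I would instead run Lemma \ref{lem-simplex} with parameter $\epsilon/4$ in place of $\epsilon$ (absorbing the extra factor into the $\sqrt{\epsilon}\,\delta$ term and the grid-size constant) and likewise choose the mesh edge length with an extra constant factor so the perturbation contributes only $\epsilon L$ rather than a larger multiple — the bookkeeping is routine once the structure above is in place. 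The main obstacle is precisely this constant-chasing: making sure that the two independent error sources (the $\sqrt{\epsilon}\,\delta$ from Lemma \ref{lem-simplex} and the discretization error $\frac{\epsilon}{4\sqrt j}r'$, both of which pick up $L$-dependent contributions through the perturbation) can be simultaneously squeezed into the stated bound $\sqrt{\epsilon}\,\delta + (1+\epsilon)L$ by rescaling $\epsilon$ by an absolute constant, which does not affect the asymptotic grid size $O((8j/\epsilon)^j)$.
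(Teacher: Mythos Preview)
Your approach is correct and ultimately works after the constant rescaling you describe, but it takes an unnecessary detour compared to the paper's argument. You first invoke Lemma~\ref{lem-simplex} to get a grid point $\tau_0 \in \mathcal{V}$ with $\|\tau_0 - o\| \le \sqrt{\epsilon}\,\delta$, then push $\tau_0$ to $\tau_0' \in \mathcal{V}'$ via its barycentric coordinates, and finally grid $\mathcal{V}'$ to approximate $\tau_0'$. This produces two independent $\sqrt{\epsilon}\,\delta$ contributions (one from Lemma~\ref{lem-simplex}, one from the grid in $\mathcal{V}'$), forcing the $\epsilon \mapsto \epsilon/4$ bookkeeping you flag as the main obstacle.

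The paper sidesteps this entirely by observing that $o$ itself already has explicit barycentric coordinates in $\mathcal{V}$, namely $o = \sum_l \frac{|Q_l|}{|Q|}\, o_l$. So one can define directly $o' := \sum_l \frac{|Q_l|}{|Q|}\, o'_l \in \mathcal{V}'$ and get $\|o' - o\| \le L$ with \emph{no} $\delta$ term at all. Then a single gridding of $\mathcal{V}'$ (radius $r \le 4\sqrt{j/\epsilon}\,\delta + 4L$, mesh $\epsilon r/(4j)$) yields $\tau$ with $\|\tau - o'\| \le \frac{\epsilon}{4\sqrt{j}} r \le \sqrt{\epsilon}\,\delta + \epsilon L$, and the triangle inequality gives the claimed bound on the nose. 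In short: your $\tau_0$ is a proxy for $o$ that costs an extra $\sqrt{\epsilon}\,\delta$; using $o$ itself (via $o'$) avoids that cost and makes the constants come out immediately without any rescaling.
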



\section{Peeling-and-Enclosing Algorithm for $k$-CMeans} 
\label{sec-candidates}

In this section, we present a  new Peeling-and-Enclosing (PnE) algorithm for generating a set of candidates for  the mean points of $k$-CMeans.  Our algorithm uses peeling spheres and the simplex lemma to iteratively find a good candidate for each unknown cluster.  
An overview of the algorithm is given in Section \ref{sec-ov}. 

%



\textbf{Some notations:} Let $P=\{p_1, \cdots, p_n\}$ be the set of $\mathbb{R}^{d}$ points in $k$-CMeans, and $\mathcal{OPT}=\{Opt_1, \cdots,$ $Opt_k\}$ be the $k$ unknown optimal constrained clusters  with  $m_{j}$ being the mean point of  $Opt_j$ for $1\leq j\leq k$. 
Without loss of generality, we assume that $|Opt_1|\geq |Opt_2|\geq\cdots\geq |Opt_k |$. Denote by $\delta^2_{opt}$ the optimal objective value, {\em i.e.,} $\delta^2_{opt}=\frac{1}{n}\sum^k_{j=1}\sum_{p \in Opt_j}||p-m_j||^2$. We also set $\epsilon>0$ as the parameter related to the quality of the approximate clustering result.


\subsection{Overview of the Peeling-and-Enclosing Algorithm} 
\label{sec-ov}

\begin{figure}[h]
\begin{center}
  \hspace{0.15in}\subfloat[]{\label{nucleus}\includegraphics[width=.3\textwidth]{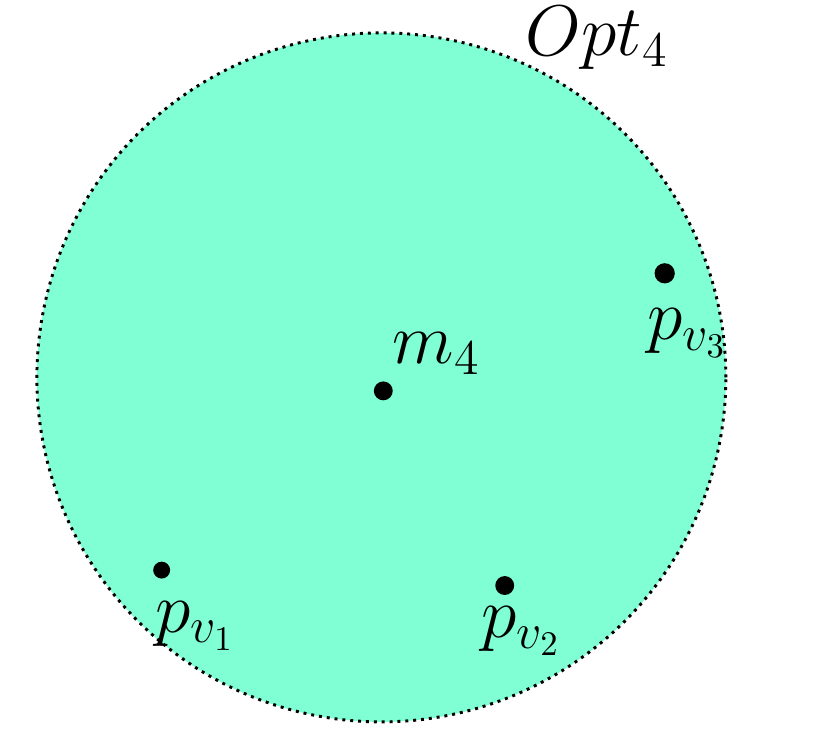}}
  \hspace{0.6in}\subfloat[]{\label{nucleus}\includegraphics[width=.3\textwidth]{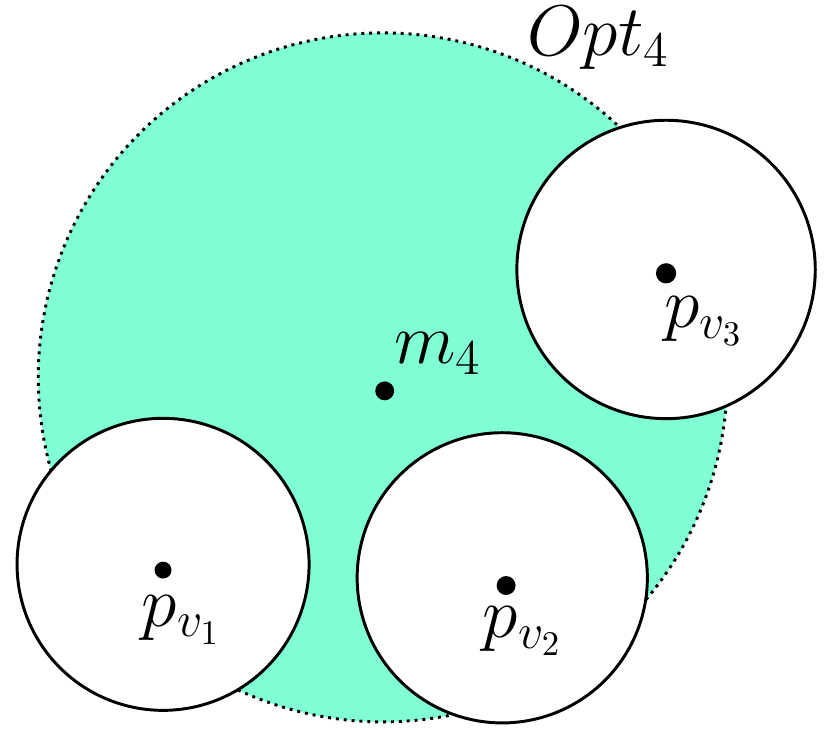}}
  
  \hspace{0.15in}\subfloat[]{\label{nucleus}\includegraphics[width=.3\textwidth]{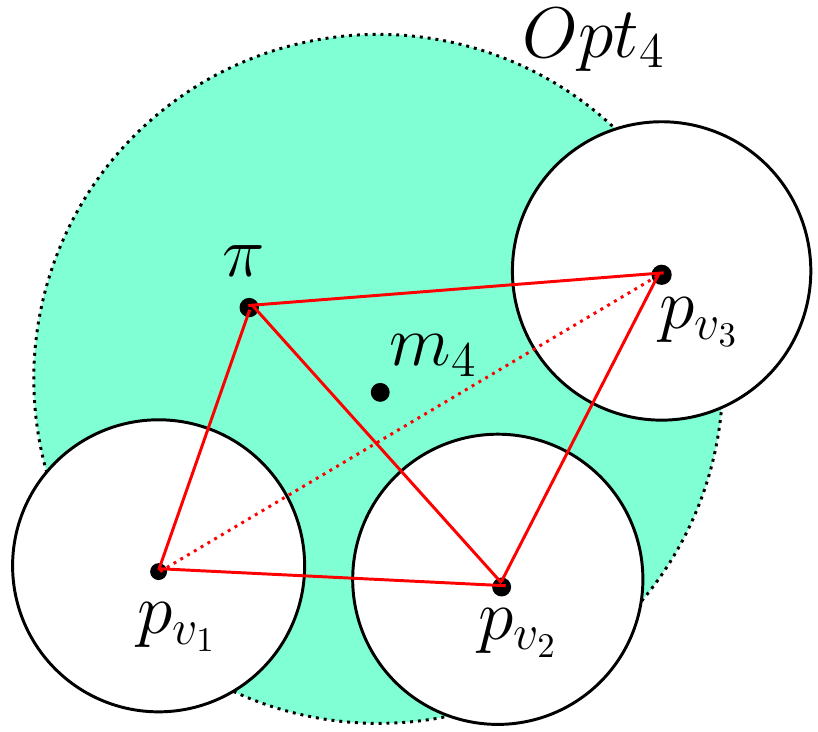}}
  \hspace{0.6in}\subfloat[]{\label{nucleus}\includegraphics[width=.3\textwidth]{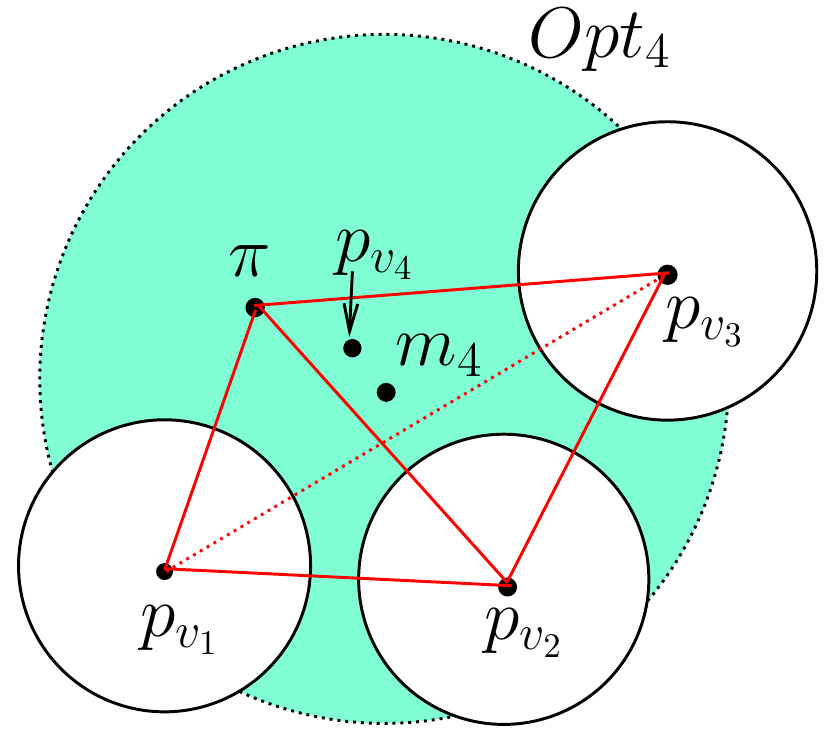}}
  \end{center}
     \caption{Illustration for one iteration of Peeling-and-Enclosing. (a) Beginning of iteration 4; (b) generate 3  spheres (in white) to peel the optimal cluster $Opt_{4}$ (in green); (c) build a simplex (in red) to contain $m_{4}$; (d) find an approximate mean point $p_{v_{4}}$ for $m_{4}$.}
   \label{fig-peeling}
 \end{figure}

Our Peeling-and-Enclosing algorithm needs an upper bound 
$\Delta$ on the optimal value $\delta^2_{opt}$. Specifically, $\delta^2_{opt}$ satisfies the condition $\Delta/c\leq\delta^2_{opt}\leq \Delta$ for some constant $c\geq 1$. In Section \ref{sec-upper}, we will present a novel algorithm to determine such an upper bound for general constrained $k$-means clustering problems. Then, it  searches for a $(1+\epsilon)$-approximation $\delta^{2}$ of $\delta^{2}_{opt}$ in the set 
\begin{eqnarray}
H=\{\Delta/c, (1+\epsilon)\Delta/c, (1+\epsilon)^2\Delta/c, \cdots, (1+\epsilon)^{\lceil\log_{1+\epsilon}c\rceil}\Delta/c\geq \Delta\}.
\end{eqnarray} 
Obviously, there exists one  element of $H$ lying inside the interval $[\delta^{2}_{opt}, (1+\epsilon)\delta^{2}_{opt}]$, and the size of $H$ is $O(\frac{1}{\epsilon}\log c)$.





At each searching step, our algorithm performs a sphere-peeling and simplex-enclosing procedure to iteratively generate $k$ approximate mean points for the constrained clusters. Initially, our algorithm uses Lemmas \ref{lem-dis} and \ref{lem-select} to find an approximate mean point $p_{v_1}$ for $Opt_{1}$  (note that  since $Opt_1$ is the largest cluster, $|Opt_1|/n \geq 1/k$ and the sampling lemma applies). At the $(j+1)$-th iteration,  it already has the approximate mean  points $p_{v_1}, \cdots, p_{v_j}$ for $Opt_1, \cdots, Opt_j$, respectively (see Figure \ref{fig-peeling}(a)). Due to the lack of locality, some points of $Opt_{j+1}$ could be scattered over the regions ({\em e.g.,} Voronoi cells or peeling spheres)  of $Opt_{1}, \cdots, Opt_{j}$ and are difficult to be distinguished from the points in these clusters. 
Since the number of such points could be small (comparing to that of the first $j$ clusters), they need to be handled differently from the remaining points. 
Our idea is to separate them using $j$ peeling spheres, $B_{j+1,1}, \cdots, B_{j+1,j}$, centered at the $j$ approximate mean points respectively and with some properly guessed radius (see Figure \ref{fig-peeling}(b)).
Let $\mathcal{A}$ be the set of unknown points in $Opt_{j+1}\setminus (\cup^j_{l=1}B_{j+1,l})$. Our algorithm considers two cases, (a) $|\mathcal{A}|$ is large enough and (b) $|\mathcal{A}|$ is small. For case (a), since $|\mathcal{A}|$ is large enough, we can use Lemma~\ref{lem-dis} and Lemma~\ref{lem-select} to find an approximate mean point $\pi$ of $\mathcal{A}$, and then construct a simplex determined by $\pi$ and $p_{v_1}, \cdots, p_{v_j}$ to contain the $j+1$-th mean point (see Figure~\ref{fig-peeling}(c)). Note that  $\mathcal{A}$ and  $Opt_{j+1}\cap B_{j+1,l}, 1\le l \le j,$  can be viewed as a partition of $Opt_{j+1}$ where the points covered by multiple peeling spheres can be assigned to anyone of them, and $p_{v_{l}}$ can be shown as an approximate mean point of $Opt_{j+1}\cap B_{j+1,l}$; thus the simplex lemma applies.  For case (b), it directly constructs a simplex determined just by $p_{v_1}, \cdots, p_{v_j}$. For either  case, our algorithm builds a grid inside the simplex and uses Lemma~\ref{lem-shift} to find an approximate mean point for $Opt_{j+1}$ ({\em i.e.}, $p_{v_{j+1}}$, see Figure~\ref{fig-peeling}(d)).  The algorithm repeats the  Peeling-and-Enclosing procedure $k$ times to generate the $k$ approximate mean points.


\subsection{Peeling-and-Enclosing Algorithm}
\label{sec-peeling}
 
Before presenting our algorithm, we first introduce  two basic lemmas from~\cite{IKI,DX14} for random sampling.
Let $S$ be a set of $n$ points in $\mathbb{R}^d$ space, and $T$ be a randomly selected subset of size $t$ from $S$. Denote by $m(S)$ and $m(T)$ the mean points of $S$ and $T$ respectively.

\begin{lemma}[\cite{IKI}]
\label{lem-dis}
With probability $1-\eta$, $||m(S)-m(T)||^2<\frac{1}{\eta t}\delta^2$, where $\delta^2=\frac{1}{n}\sum_{s\in S}||s-m(S)||^2$ and $0<\eta<1$. 
\end{lemma}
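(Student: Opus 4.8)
The plan is to prove this by a standard second-moment computation followed by Markov's inequality. Write the random sample as $T=\{X_1,\ldots,X_t\}$, where each $X_i$ is a uniformly distributed point of $S$, so that $m(T)=\frac{1}{t}\sum_{i=1}^{t}X_i$. First I would observe that $\mathbb{E}[X_i]=m(S)$ for every $i$, and hence by linearity of expectation $\mathbb{E}[m(T)]=m(S)$; that is, $m(T)$ is an unbiased estimate of the true mean $m(S)$.

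Next I would bound the expected squared error. Setting $Y_i=X_i-m(S)$, each $Y_i$ has mean zero and $\mathbb{E}||Y_i||^2=\frac{1}{n}\sum_{s\in S}||s-m(S)||^2=\delta^2$. Expanding the square gives
\begin{eqnarray}
\mathbb{E}\big[||m(T)-m(S)||^2\big]=\frac{1}{t^2}\sum_{i=1}^{t}\mathbb{E}||Y_i||^2+\frac{1}{t^2}\sum_{i\neq i'}\mathbb{E}\langle Y_i,Y_{i'}\rangle .
\end{eqnarray}
If the sampling is with replacement, the $Y_i$ are independent and mean zero, so every cross term vanishes; if the sampling is without replacement, the $Y_i$ are negatively correlated (a short computation shows $\mathbb{E}\langle Y_i,Y_{i'}\rangle=-\delta^2/(n-1)$ for $i\neq i'$), so the cross terms are nonpositive. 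In either case $\mathbb{E}\big[||m(T)-m(S)||^2\big]\leq\frac{1}{t^2}\cdot t\cdot\delta^2=\frac{\delta^2}{t}$.

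Finally, I would apply Markov's inequality to the nonnegative random variable $Z=||m(T)-m(S)||^2$. Since $\mathbb{E}[Z]\leq\delta^2/t$,
\begin{eqnarray}
\Pr\Big[Z\geq\frac{\delta^2}{\eta t}\Big]\leq\frac{\mathbb{E}[Z]}{\delta^2/(\eta t)}\leq\eta .
\end{eqnarray}
Therefore, with probability at least $1-\eta$ we have $||m(S)-m(T)||^2<\frac{1}{\eta t}\delta^2$, which is exactly the claimed bound.

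The only step requiring any care is the treatment of the cross terms $\mathbb{E}\langle Y_i,Y_{i'}\rangle$, i.e. whether the sample is drawn with or without replacement; since both conventions make these terms $\leq 0$, the argument is unaffected, and the rest is a routine calculation. (One could also state the result with strict probability $1-\eta$ if sampling without replacement is used and $t$ is small relative to $n$, but this refinement is not needed.)
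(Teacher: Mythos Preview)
Your proof is correct and is exactly the standard second-moment-plus-Markov argument from \cite{IKI}; the paper itself does not reprove the lemma but simply cites that source, and your computation matches it. The only cosmetic point is that the covariance calculation for sampling without replacement is optional here, since the with-replacement bound already suffices for the stated inequality.
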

 
 \begin{lemma}[\cite{DX14}]
\label{lem-select}
Let $\Omega$ be a set of elements, and $S$ be a subset of $\Omega$ with 
$\frac{|S|}{|\Omega|}=\alpha$ for some $\alpha \in (0,1)$. If we randomly select $\frac{t\ln\frac{t}{\eta}}{\ln(1+\alpha)}=O(\frac{t}{\alpha}\ln\frac{t}{\eta})$ elements from $\Omega$, then with probability at least $1-\eta$, the sample contains at least $t$ elements from $S$ for $0<\eta<1$ and $t\in \mathbb{Z}^+$.
\end{lemma}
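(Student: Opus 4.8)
The plan is a ``divide into rounds and union-bound'' argument that reproduces the stated sample size exactly. Assume first that the $N$ elements are drawn independently and uniformly at random from $\Omega$ (sampling with replacement; I treat the without-replacement case at the end). Put $n_0=\big\lceil \ln(t/\eta)/\ln(1+\alpha)\big\rceil$ and take $N=t\,n_0$; note $\ln(t/\eta)>0$ because $t\ge 1$ and $0<\eta<1$, and $\ln(1+\alpha)>0$ because $\alpha>0$. Partition the $N$ draws into $t$ consecutive blocks $R_1,\dots,R_t$, each of size exactly $n_0$. I will show that each block contains an element of $S$ with probability at least $1-\eta/t$, and then apply a union bound over the $t$ blocks.

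For the per-block estimate, each individual draw lands in $S$ with probability $\alpha$, so the probability that a fixed block $R_i$ contains \emph{no} element of $S$ equals $(1-\alpha)^{n_0}$. The key elementary inequality is $(1-\alpha)(1+\alpha)=1-\alpha^2\le 1$, hence $1-\alpha\le (1+\alpha)^{-1}$, and therefore
\[
(1-\alpha)^{n_0}\ \le\ (1+\alpha)^{-n_0}\ \le\ (1+\alpha)^{-\ln(t/\eta)/\ln(1+\alpha)}\ =\ e^{-\ln(t/\eta)}\ =\ \frac{\eta}{t},
\]
where the middle step uses $n_0\ge \ln(t/\eta)/\ln(1+\alpha)$. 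It is exactly this step that produces the denominator $\ln(1+\alpha)$; replacing $1-\alpha\le(1+\alpha)^{-1}$ by the cruder $1-\alpha\le e^{-\alpha}$ would give a block size $\ln(t/\eta)/\alpha$, hence $N=O(\tfrac{t}{\alpha}\ln\tfrac{t}{\eta})$ --- still within the claimed order but not matching the stated expression.

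Now let $\mathcal{E}_i$ be the event ``$R_i\cap S=\emptyset$''. By the display, $\Pr[\mathcal{E}_i]\le \eta/t$, so $\Pr\!\big[\bigcup_{i=1}^{t}\mathcal{E}_i\big]\le \eta$ by the union bound. On the complementary event every block $R_i$ contributes at least one sampled element of $S$; since the blocks are disjoint, the whole sample then contains at least $t$ elements of $S$, which is the assertion. Finally $N=t\,n_0=O\!\big(\tfrac{t}{\alpha}\ln\tfrac{t}{\eta}\big)$, because $\ln(1+\alpha)\ge \alpha-\alpha^2/2\ge \alpha/2$ for $\alpha\in(0,1)$, so $1/\ln(1+\alpha)=O(1/\alpha)$; this also shows $N$ has the claimed exact form $\frac{t\ln(t/\eta)}{\ln(1+\alpha)}$ up to the ceiling.

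\textbf{What needs care.} There is no real obstacle; two small points deserve a sentence each. (i) $n_0=\lceil\ln(t/\eta)/\ln(1+\alpha)\rceil$ is taken to be an integer, and the rounding is absorbed into the $O(\cdot)$. (ii) If the $N$ samples are drawn \emph{without} replacement, then for a fixed block $R_i$ the $n_0$ positions in $R_i$ form, marginally, a uniformly random $n_0$-subset of $\Omega$, so $\Pr[R_i\cap S=\emptyset]=\prod_{j=0}^{n_0-1}\frac{|\Omega|-|S|-j}{|\Omega|-j}\le(1-\alpha)^{n_0}$ since each factor is at most $\frac{|\Omega|-|S|}{|\Omega|}=1-\alpha$; the remainder of the argument then goes through verbatim. (A direct Chernoff lower-tail bound on the number of successes in $N$ Bernoulli$(\alpha)$ trials also works, but it does not recover the clean constant as transparently as the block argument.)
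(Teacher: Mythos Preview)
Your proof is correct. The paper itself does not give a proof of this lemma; it simply cites it from~\cite{DX14} as a known sampling fact, so there is nothing to compare against beyond noting that your block-and-union-bound argument is a clean way to recover the stated constant $\ln(1+\alpha)$ in the denominator (which, as you observe, is precisely what the inequality $1-\alpha\le(1+\alpha)^{-1}$ buys over the cruder $1-\alpha\le e^{-\alpha}$).

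One minor remark on the without-replacement paragraph: your claim that each block is marginally a uniform $n_0$-subset of $\Omega$ is correct by exchangeability, and since you only invoke a union bound (not independence across blocks), this is all that is needed. You might also note that in the with-replacement case the conclusion ``at least $t$ elements from $S$'' should be read as ``at least $t$ draws land in $S$'' (possibly with repeats), which is the interpretation the paper uses downstream when it takes the mean of those sampled points.
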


Our Peeling-and-Enclosing algorithm is shown in Algorithm~\ref{alg-kcmeans}. 

\begin{algorithm}
   \caption{Peeling-and-Enclosing for $k$-CMeans}
   \label{alg-kcmeans}
\begin{algorithmic}
   \STATE {\bfseries Input:} $P=\{p_1, \cdots, p_n\}$ in $\mathbb{R}^d$, $k\geq 2$, a constant $\epsilon\in(0, \frac{1}{4k^2})$, and an upper bound $\Delta\in [\delta^2_{opt}, c\delta^2_{opt}]$ with $c\geq 1$.
   \STATE {\bfseries Output:} A set of $k$-tuple candidates for the $k$ constrained mean points.
    \begin{enumerate}

\item For $i=0$ to $\lceil\log_{1+\epsilon}c\rceil$ do
\begin{enumerate}
\item Set $\delta=\sqrt{(1+\epsilon)^i\Delta/c}$, and run Algorithm~\ref{alg-tree}.

\item  Let $\mathcal{T}_i$ be the output tree.

\end{enumerate}

\item For each root-to-leaf path of  every $\mathcal{T}_i$, build a $k$-tuple candidate using the $k$ points associated with the path.

\end{enumerate}

\end{algorithmic}
\end{algorithm}

\begin{algorithm}
   \caption{Peeling-and-Enclosing-Tree}
   \label{alg-tree}
\begin{algorithmic}
   \STATE {\bfseries Input:} $P=\{p_1, \cdots, p_n\}$ in $\mathbb{R}^d$, $k\geq 2$, a constant $\epsilon\in(0, \frac{1}{4k^2})$, and $\delta>0$.
\begin{enumerate}
\item Initialize $\mathcal{T}$ as a single root node $v$ associated with no point.
\item Recursively grow each node $v$ in the following way
\begin{enumerate}
\item If the height of $v$ is already $k$, then it is a leaf.
\item Otherwise, let $j$ be the height of $v$. Build the radius candidate set $\mathcal{R}=$ \\ $\cup^{\log n}_{t=0}\{\frac{1+l\frac{\epsilon}{2}}{2(1+\epsilon)}j2^{t/2}\sqrt{\epsilon}\delta\mid 0\le l\le 4+\frac{2}{\epsilon}\}$. For each $r\in\mathcal{R}$, do
\begin{enumerate}
\item   Let $\{p_{v_1}, \cdots, p_{v_j}\}$ be the $j$ points associated with the nodes on the root-to-$v$ path. 

\item For each $p_{v_l}$, $1\leq l\leq j$, construct a ball $B_{j+1,l}$ centered at $p_{v_l}$ and with radius $r$. 
\item Take a random sample from $P\setminus\cup^j_{l=1}B_{j+1,l}$ of size $s=\frac{8k^3}{\epsilon^9}\ln\frac{k^2}{\epsilon^6}$. Compute the mean points of all the subsets of the sample, and denote them by $\Pi=\{\pi_1, \cdots, \pi_{2^s-1}\}$.
\item For each $\pi_i \in \Pi$, construct a simplex using $\{p_{v_1}, \cdots, p_{v_j},$ $ \pi_i\}$ as its vertices. Also construct another simplex using $\{p_{v_1}, \cdots, p_{v_j}\}$ as its vertices. For each simplex, build a grid with
size $O(($ $32j/\epsilon^2)^j)$ inside itself and each of its $2^{j}$ possible degenerated sub-simplices. 

\item In total, there are $2^{s+j} (32j/\epsilon^2)^j$ grid points inside the $2^s$ simplices. For each grid point, add one child to $v$, and associate it with the grid point.
\end{enumerate}

\end{enumerate}
\item Output $\mathcal{T}$.
\end{enumerate}

\end{algorithmic}
\end{algorithm}

 
 \begin{theorem} 
\label{the-ptas}
Let $P$ be the set of $n$ $\mathbb{R}^{d}$ points and $k\in \mathbb{Z}^+$ be a fixed constant. In $O(2^{poly(\frac{k}{\epsilon})}n(\log n)^{k+1} d )$ time,  Algorithm~\ref{alg-kcmeans} outputs  $O(2^{poly(\frac{k}{\epsilon})}(\log n)^{k})$ $k$-tuple candidate mean points. With constant probability,  there exists one $k$-tuple candidate in the output which is able to induce a $\big(1+O(\epsilon)\big)$-approximation of $k$-CMeans (together with the solution for the corresponding Partition step).
\end{theorem}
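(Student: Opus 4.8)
\textbf{Proof plan for Theorem~\ref{the-ptas}.}

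The plan is to analyze one ``correct'' root-to-leaf path of the tree $\mathcal{T}_i$ produced for the value of $\delta$ satisfying $\delta^2 \in [\delta^2_{opt}, (1+\epsilon)\delta^2_{opt}]$ (such an $i$ exists since $H$ spans $[\Delta/c,\Delta]$), and to show by induction on the iteration index $j$ that the algorithm maintains points $p_{v_1},\dots,p_{v_j}$ that are simultaneously good approximate mean points of $Opt_1,\dots,Opt_j$. Concretely, I would carry the invariant: after iteration $j$, there is a node $v$ at height $j$ on some root-to-leaf path whose associated points satisfy $\|p_{v_l}-m_l\|^2 \le \gamma_l\, \delta^2$ for a suitable accumulated error coefficient $\gamma_l = \mathrm{poly}(k/\epsilon)$, and hence $\sum_{l\le j}\sum_{p\in Opt_l}\|p-p_{v_l}\|^2 \le (1+O(\epsilon))\sum_{l\le j}\sum_{p\in Opt_l}\|p-m_l\|^2 + O(\epsilon)\,n\delta^2_{opt}$ by Claim~1. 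The second, ``global'' error term is affordable precisely because it is charged against $n\delta^2_{opt}$, which is the whole optimum.

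The inductive step is where the machinery of Section~\ref{sec-simplex} and the sampling lemmas is assembled. Given the $j$ approximate means, consider the unknown quantity $R^\ast$, the ``right'' radius that separates $Opt_{j+1}$ into $\mathcal{A}=Opt_{j+1}\setminus\bigcup_l B_{j+1,l}$ and the pieces $Opt_{j+1}\cap B_{j+1,l}$. First I would argue that the radius candidate set $\mathcal{R}$ (which geometrically samples scales $j2^{t/2}\sqrt\epsilon\delta$ for $t=0,\dots,\log n$, refined by the arithmetic factor $1+l\epsilon/2$) contains some $r$ that is, up to a $(1+\epsilon)$ factor, the right threshold — using that all relevant squared distances lie between roughly $\delta^2_{opt}/n$ and $n\delta^2_{opt}$, so $O(\log n)$ dyadic scales suffice; the peeling balls of this radius then capture ``most'' of the mass of $Opt_1,\dots,Opt_j$ while the residual $\mathcal{A}$ is what is left of $Opt_{j+1}$. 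Then: (i) if $|\mathcal{A}|$ is large, i.e. $|\mathcal{A}|/n \ge \mathrm{poly}(\epsilon/k)$, Lemma~\ref{lem-select} guarantees that the random sample of size $s=\Theta(k^3\epsilon^{-9}\log(k^2\epsilon^{-6}))$ from $P\setminus\bigcup_l B_{j+1,l}$ contains enough points of $\mathcal{A}$, so some subset mean $\pi_i\in\Pi$ satisfies $\|\pi_i - \mathrm{mean}(\mathcal{A})\|^2 \le \epsilon\,\delta^2$ by Lemma~\ref{lem-dis}; (ii) treating $\{\mathcal{A}, Opt_{j+1}\cap B_{j+1,1},\dots\}$ as a partition of $Opt_{j+1}$, the vertices $p_{v_l}$ are approximate means of the pieces $Opt_{j+1}\cap B_{j+1,l}$ with shift error $L=O(\sqrt{\gamma_l})\,\delta$ (this needs a short argument bounding $\|p_{v_l} - \mathrm{mean}(Opt_{j+1}\cap B_{j+1,l})\|$ via Lemma~\ref{lem-close} applied within $Opt_{j+1}$, using that the ball has radius $O(\sqrt{j/\epsilon}\,\delta)$ and $|Opt_{j+1}|\le|Opt_l|$), so Lemma~\ref{lem-shift} (Simplex Lemma~II) yields a grid point $\tau$ with $\|\tau - m_{j+1}\|\le \sqrt\epsilon\,\delta_{j+1} + (1+\epsilon)L$, where $\delta_{j+1}^2$ is the variance of $Opt_{j+1}$; (iii) if $|\mathcal{A}|$ is small, the contribution of $\mathcal{A}$ to everything is $O(\epsilon)\,n\delta^2_{opt}$ and can be ignored, so the degenerate simplex on $\{p_{v_1},\dots,p_{v_j}\}$ alone suffices. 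In all cases the grid resolution $O((32j/\epsilon^2)^j)$ in Algorithm~\ref{alg-tree} is chosen to make the grid-rounding error a $\sqrt\epsilon\,\delta$ term, and one of the children of $v$ is the desired $p_{v_{j+1}}$.

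For the bookkeeping I would then: sum the per-iteration errors, noting that $\delta_{j+1}^2 \le (n/|Opt_{j+1}|)\delta^2_{opt} \le k\,\delta^2_{opt}$ in the worst case but that each $\sqrt\epsilon\,\delta_{j+1}$ term is squared and multiplied by $|Opt_{j+1}|$, so $\sum_j |Opt_{j+1}|\cdot \epsilon\,\delta_{j+1}^2 = O(\epsilon)\,n\delta^2_{opt}$; control the growth of $\gamma_j$ across the $k$ levels (each level multiplies the error coefficient by a $\mathrm{poly}(k/\epsilon)$ factor, giving the final $\mathrm{poly}(k/\epsilon)$ and hence the $2^{\mathrm{poly}(k/\epsilon)}$ in the statement); and collapse the probabilities by a union bound over the $k$ iterations and a constant number of sampling events per iteration, with $\eta = \Theta(1/k)$ in Lemmas~\ref{lem-dis} and \ref{lem-select}, to get constant overall success probability. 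The running time and candidate count follow by multiplying out: $O(\frac1\epsilon\log c)$ choices of $\delta$, a tree of depth $k$ with branching $2^{s+j}(32j/\epsilon^2)^j \cdot |\mathcal{R}|$ where $|\mathcal{R}| = O(\frac1\epsilon\log n)$, giving $O(2^{\mathrm{poly}(k/\epsilon)}(\log n)^k)$ leaves, and an $O(nd)$ cost per node for forming balls and samples, i.e. $O(2^{\mathrm{poly}(k/\epsilon)} n(\log n)^{k+1} d)$ total; here I would invoke the Section~\ref{sec-upper} estimation routine to justify that $c$ is a constant, so $\log c = O(1)$.

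The main obstacle I expect is step (ii): cleanly showing that each already-chosen $p_{v_l}$ is a valid approximate mean of the \emph{restriction} $Opt_{j+1}\cap B_{j+1,l}$ with a quantitatively controlled shift $L$, since $p_{v_l}$ was chosen to approximate $m_l$ (the mean of all of $Opt_l$), not the mean of this small, adversarially-located slice of $Opt_{j+1}$. Tying these together requires carefully choosing the radius $r$ so that the balls neither swallow too much of $Opt_{j+1}$ nor leave $\mathcal{A}$ too large, and propagating the variance bounds correctly — this is exactly the interplay of ``local'' variance $\delta_{j+1}^2$ and ``global'' optimum $\delta^2_{opt}$ that the Simplex Lemma was designed to exploit, and getting the constants to close under $\epsilon < 1/(4k^2)$ is the delicate part.
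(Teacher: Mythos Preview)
Your plan follows the paper's structure closely --- prove an inductive per-cluster bound on $\|p_{v_j}-m_j\|$ (the paper's Lemma~\ref{lem-induction}), then sum via Claim~1 to get the $(1+O(\epsilon))$-approximation (Lemma~\ref{lem-equal}), and finally count. But two pieces of your plan are off in ways that matter.

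\textbf{The invariant must carry the $\beta_j$-dependence.} Your stated invariant $\|p_{v_l}-m_l\|^2\le \gamma_l\,\delta^2$ with $\gamma_l=\mathrm{poly}(k/\epsilon)$ cannot hold: if $\beta_j=|Opt_j|/n$ is tiny, $m_j$ can be at distance $\Theta(\sqrt{1/\beta_j})\,\delta_{opt}$ from everything else, far beyond any constant multiple of $\delta$. The paper's invariant is the two-term bound
\[
\|p_{v_j}-m_j\|\ \le\ \epsilon\,\delta_j\ +\ (1+\epsilon)\,j\sqrt{\epsilon/\beta_j}\,\delta_{opt},
\]
and this shape is forced: only after multiplying the square by $|Opt_j|=\beta_j n$ does the $1/\beta_j$ disappear and yield an $O(\epsilon)\,n\delta_{opt}^2$ global charge. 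Relatedly, the errors do \emph{not} compound multiplicatively across levels as you suggest. The peeling radius $r_j$ is chosen fresh at level $j$ (Claim~2 picks $r_j\approx j\sqrt{\epsilon/\beta_j}\,\delta_{opt}$ from $\mathcal{R}$), and the induction hypothesis is used only to check that $m_l$ sits inside $B_{j,l}$ (Claim~3), not to feed a growing coefficient into the next bound.

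\textbf{Your ``main obstacle'' is a one-line observation, and Lemma~\ref{lem-close} is the wrong tool for it.} You want $\|p_{v_l}-\tau_l\|$ where $\tau_l$ is the mean of $P_l:=Opt_{j+1}\cap B_{j+1,l}$. Every point of $P_l$ lies in the ball $B_{j+1,l}$ of radius $r_j$ centered at $p_{v_l}$; by convexity the mean $\tau_l$ lies in the same ball, so $\|p_{v_l}-\tau_l\|\le r_j$. That's the whole argument (the paper's inequality~(\ref{for-ca1})). Lemma~\ref{lem-close} applied ``within $Opt_{j+1}$'' would only relate $\tau_l$ to $m_{j+1}$, not to $p_{v_l}$, and gives you nothing usable here. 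With $L=r_j\le(1+\epsilon/2)j\sqrt{\epsilon/\beta_j}\,\delta_{opt}$ and the sampled $\pi$ satisfying $\|\pi-\tau_j\|\le\sqrt{\epsilon j}\,\delta_j\le L$ (from Lemma~\ref{lem-dis} plus $|P_j|\ge(\epsilon^3/j)\beta_j n$), Simplex Lemma~II with parameter $\epsilon_0=\epsilon^2/4$ immediately gives the two-term bound above. Case (b) is handled by dropping $P_j$, invoking Lemma~\ref{lem-close} once to move from the mean of $Opt_j\setminus P_j$ to $m_j$, and using the $(j{-}1)$-vertex simplex.
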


\begin{remark}
\label{rem-ptas}
(1) To increase the success probability to be close to $1$, {\em e.g.,} $1-\frac{1}{n}$, one only needs to repeatedly run the algorithm $O(\log n)$ times; both the time complexity and the number of $k$-tuple candidates increase by a factor of $O(\log n)$. (2) In general,  the Partition step may be challenging to solve. As shown in Section \ref{sec-application},  the constrained clustering problems considered in this paper admit efficient selection algorithms for their Partition steps. 
\end{remark}


 \subsection{Proof of Theorem \ref{the-ptas}}
\label{sec-proof}

Let  $\beta_j= |Opt_j|/n$,  
and 
 $\delta^2_j=\frac{1}{|Opt_j |}\sum_{p\in Opt_j}||p-m_j||^2$, where $m_j$ is the mean point of $Opt_j$. 
By our assumption in the beginning of Section~\ref{sec-candidates}, we know that $\beta_1\geq\cdots\geq\beta_k$. Clearly, $\sum^k_{j=1}\beta_j =1$ and the optimal objective value $\delta^2_{opt}=\sum^k_{j=1}\beta_j\delta^2_j$.
 \vspace{0.05in}
 
\noindent\textbf{Proof Synopsis:} 
Instead of directly proving Theorem \ref{the-ptas}, we consider the following Lemma~\ref{lem-induction} and Lemma~\ref{lem-equal} which jointly ensure the correctness of Theorem~\ref{the-ptas}. In Lemma \ref{lem-induction}, we show that there exists such a root-to-leaf path in one of the returned trees that its associated $k$ points along the path, denoted by $\{p_{v_1}, \cdots, p_{v_k}\}$, are close enough to the mean points $m_{i}, \cdots, m_{k}$ of the $k$ optimal clusters, respectively.  
 The proof is based on mathematical induction;  each step needs to build a simplex, and applies Simplex Lemma~\Rmnum{2} to bound the error, {\em i.e.,} $||p_{v_j}-m_j ||$ in (\ref{for-induction}).  The error  is estimated  by considering both the local ({\em i.e.,}  the variance of  cluster $Opt_j$) and global ({\em i.e.,} the optimal value $\delta_{opt}$) measurements.  This is a more accurate estimation, comparing to  the widely used Lemma \ref{lem-dis} which considers only the local measurement.  Such an improvement is due to  the increased flexibility in 
the Simplex Lemma \Rmnum{2}, and is a  key to our proof. In Lemma~\ref{lem-equal}, we further show that the $k$ points, $\{p_{v_1}, \cdots, p_{v_k}\}$,  in Lemma \ref{lem-induction} induce a $(1+O(\epsilon))$-approximation of $k$-CMeans.

 \begin{lemma}
\label{lem-induction}
Among all the trees generated by Algorithm~\ref{alg-kcmeans}, with constant probability, there exists at least one tree, $\mathcal{T}_i$, which has a root-to-leaf path with each of its nodes $v_{j}$ at level $j$ ($1\leq j\leq k$) associating with a point $p_{v_j}$ and  satisfying the inequality
\begin{eqnarray}
||p_{v_j}-m_j ||\leq \epsilon\delta_j+(1+\epsilon)j\sqrt{\frac{\epsilon}{\beta_j}}\delta_{opt} . \label{for-induction}
\end{eqnarray}
\end{lemma}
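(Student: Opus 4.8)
The plan is to prove the statement by induction on the level $j$, tracking along the way a root-to-leaf path whose associated points $p_{v_1},\dots,p_{v_k}$ satisfy~(\ref{for-induction}). The single-path bookkeeping needs to be bundled with the ``which $\delta$ did we guess'' bookkeeping: first I would fix the iteration $i$ of Algorithm~\ref{alg-kcmeans} for which the guessed value $\delta$ lies in $[\delta_{opt},(1+\epsilon)\delta_{opt}]$ (such an $i$ exists since $H$ is geometric with ratio $1+\epsilon$ covering $[\Delta/c,\Delta]\ni\delta_{opt}^2$), and argue only about the tree $\mathcal{T}_i$. Everything below is conditioned on the sampling events succeeding at every level; since each level invokes Lemma~\ref{lem-dis} and Lemma~\ref{lem-select} with constant failure probability (for fixed $k,\epsilon$), a union bound over the $k$ levels still leaves constant success probability, which is what the statement claims.

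\emph{Base case $j=1$.} Here $Opt_1$ is the largest cluster, so $\beta_1\ge 1/k$, and the root has no peeling spheres, so the sample is drawn from all of $P$. By Lemma~\ref{lem-select} the sample of size $s$ contains $\ge$ (enough) points of $Opt_1$ with constant probability, and by Lemma~\ref{lem-dis} applied to that sub-sample, the mean $\pi_i$ of the appropriate subset satisfies $\|\pi_i-m_1\|\le\sqrt{\epsilon}\,\delta_1\cdot(\text{small factor})$; rounding $\pi_i$ to the nearest grid point in the (degenerate, single-vertex) simplex changes this by at most a comparable amount, giving $\|p_{v_1}-m_1\|\le\epsilon\delta_1+(1+\epsilon)\sqrt{\epsilon/\beta_1}\,\delta_{opt}$ once one checks the constants $s=\frac{8k^3}{\epsilon^9}\ln\frac{k^2}{\epsilon^6}$ were chosen to absorb them.

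\emph{Inductive step.} Assume the path is built up to level $j$ with $\|p_{v_l}-m_l\|\le\epsilon\delta_l+(1+\epsilon)l\sqrt{\epsilon/\beta_l}\,\delta_{opt}$ for $l\le j$. Write $Opt_{j+1}$ as the disjoint union of $Q_l:=Opt_{j+1}\cap B_{j+1,l}$ for $1\le l\le j$ (breaking ties for multiply-covered points arbitrarily) together with $\mathcal{A}:=Opt_{j+1}\setminus\bigcup_l B_{j+1,l}$. The key sub-claims are: (i) for the correct radius $r\in\mathcal{R}$ — which exists because $\mathcal{R}$ discretizes the relevant range of $j\,2^{t/2}\sqrt{\epsilon}\,\delta$ finely enough and $\delta\approx\delta_{opt}$ — each $p_{v_l}$ is an approximate mean of $Q_l$ with shift $L$ of order $\epsilon\delta_l+(1+\epsilon)l\sqrt{\epsilon/\beta_l}\,\delta_{opt}$ plus the radius-induced slack; (ii) in case (a), $|\mathcal{A}|$ is a large enough fraction of $P\setminus\bigcup_l B_{j+1,l}$ that Lemma~\ref{lem-select} plus Lemma~\ref{lem-dis} yield a subset-mean $\pi_i$ with $\|\pi_i-m(\mathcal{A})\|\le\sqrt{\epsilon}\cdot(\text{variance of }\mathcal{A})^{1/2}$, and the variance of $\mathcal{A}$ is at most $\frac{n}{|Opt_{j+1}|}\delta_{opt}^2$-ish, hence bounded by $\delta_{j+1}^2$ up to the $\beta_{j+1}$ factor; in case (b), $\mathcal{A}$ contributes negligibly and is dropped. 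Then apply Simplex Lemma~\Rmnum{2} (Lemma~\ref{lem-shift}) with $Q=Opt_{j+1}$, the subsets $Q_1,\dots,Q_j,\mathcal{A}$, the approximate vertices $p_{v_1},\dots,p_{v_j},\pi_i$, the common shift bound $L$, and variance $\delta^2=\delta_{j+1}^2$: it produces a grid point $\tau$ inside the simplex (this is the grid of size $O((32j/\epsilon^2)^j)$ built in step 2(b)iv, searched over all $2^j$ degenerate sub-simplices to cover case (b) and the unknown index set $I$ of small $Q_l$'s from Lemma~\ref{lem-simplex}) with $\|\tau-m_{j+1}\|\le\sqrt{\epsilon}\,\delta_{j+1}+(1+\epsilon)L$. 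Take $p_{v_{j+1}}:=\tau$ and collect terms: the $\sqrt{\epsilon}$ should be squeezed to $\epsilon$ by the choice $\epsilon\in(0,\tfrac{1}{4k^2})$ (so that $\sqrt{\epsilon}\le\epsilon\cdot\frac{1}{\sqrt\epsilon}$ is not what we want — rather, the algorithm runs the simplex lemma with parameter $\epsilon^2$ in place of $\epsilon$, which is exactly why the grid size is $O((32j/\epsilon^2)^j)$ and not $O((8j/\epsilon)^j)$), and the $(1+\epsilon)L$ term, after substituting the worst of the inductive shifts and the radius slack and using $l\le j$, is bounded by $(1+\epsilon)(j+1)\sqrt{\epsilon/\beta_{j+1}}\,\delta_{opt}$ plus an $\epsilon\delta_{j+1}$ remainder, which is~(\ref{for-induction}) at level $j+1$.

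\emph{Main obstacle.} The delicate part is not the induction skeleton but the radius-guessing in step 2(b): one must show that for the true partition of $Opt_{j+1}$ by the (unknown) optimal geometry, there is a radius $r$ in the explicitly listed set $\mathcal{R}=\bigcup_{t=0}^{\log n}\{\frac{1+l\epsilon/2}{2(1+\epsilon)}j2^{t/2}\sqrt\epsilon\delta\}$ that simultaneously (a) makes each ball $B_{j+1,l}$ capture enough of $Opt_{j+1}$ near $p_{v_l}$ so that $p_{v_l}$ is a valid approximate mean of $Q_l$ with a controlled shift, and (b) leaves $\mathcal{A}$ either large enough for sampling or provably negligible — and that this $r$ is found within $O(\log n)$ geometric scales times $O(1/\epsilon)$ linear refinements. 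This is where the $2^{t/2}$ scaling and the upper bound $\Delta$ on $\delta^2_{opt}$ enter: the radii must range from $\Theta(\sqrt\epsilon\delta)$ up to $\Theta(\sqrt{n}\sqrt\epsilon\delta)$, which by the standing bound $\delta_{opt}^2\le\Delta$ and a crude diameter estimate covers all cluster scales. Getting the constants in $\mathcal{R}$, in $s$, and in the grid size to fit together so that the clean inequality~(\ref{for-induction}) comes out — rather than some messier expression — is the real work; everything else is triangle inequality and union bounds.
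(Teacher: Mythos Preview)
Your inductive skeleton matches the paper's, and you correctly identify the case split (large vs.\ small $\mathcal{A}$), the use of Simplex Lemma~\Rmnum{2} with parameter $\epsilon^2/4$ (hence the $(32j/\epsilon^2)^j$ grid), and the need to pick the right tree $\mathcal{T}_i$. But you misplace the role of the inductive hypothesis, and this hides the one nontrivial step.

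First, a small correction: in your sub-claim (i), the shift $L$ for the vertex $p_{v_l}$ relative to the mean of $Q_l=Opt_{j+1}\cap B_{j+1,l}$ has nothing to do with the inductive bound $\epsilon\delta_l+(1+\epsilon)l\sqrt{\epsilon/\beta_l}\,\delta_{opt}$. That bound controls $\|p_{v_l}-m_l\|$, where $m_l$ is the mean of $Opt_l$, a different cluster. The reason $p_{v_l}$ is close to the mean of $Q_l$ is purely geometric and immediate: $Q_l\subset B_{j+1,l}$, a ball of radius $r$ centered at $p_{v_l}$, so its mean also lies in that ball, giving $\|p_{v_l}-\text{mean}(Q_l)\|\le r$. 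This holds for \emph{any} radius $r$; no guessing needed for this part. Hence $L=r_j\le(1+\tfrac{\epsilon}{2})j\sqrt{\epsilon/\beta_{j+1}}\,\delta_{opt}$ once $r_j$ is chosen correctly.

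The genuine gap is your sub-claim (ii): you assert that $\mathcal{A}$ is a large enough fraction of $P\setminus\bigcup_l B_{j+1,l}$ for sampling to work, but give no mechanism. The paper's proof devotes a separate claim (Claim~3) to this, and it is where the inductive hypothesis is actually used. The point is that the correct radius $r_j\approx j\sqrt{\epsilon/\beta_{j+1}}\,\delta_{opt}$ is large enough that, by the inductive bound on $\|p_{v_l}-m_l\|$, each $m_l$ lies well inside $B_{j+1,l}$; then Markov's inequality applied to the variance of $Opt_l$ bounds $|Opt_l\setminus B_{j+1,l}|$ by $O(\beta_{j+1}n/\epsilon)$ for each $l\le j$. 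Summing, the points outside the peeling spheres that do \emph{not} belong to $Opt_{j+1}$ (or later clusters) total at most $O(k\beta_{j+1}n/\epsilon)$, against which $|\mathcal{A}|\ge\epsilon^3\beta_{j+1}n/j$ in case~(a) gives the $\Omega(\epsilon^4/k^2)$ sampling ratio that makes Lemma~\ref{lem-select} work with $s=\tfrac{8k^3}{\epsilon^9}\ln\tfrac{k^2}{\epsilon^6}$. Your ``main obstacle'' paragraph looks for this content inside the radius-guessing, but the radius set $\mathcal{R}$ is only there to hit $j\sqrt{\epsilon/\beta_{j+1}}\,\delta_{opt}$ up to a $(1+\epsilon/2)$ factor (the $2^{t/2}$ ranges over $1/\sqrt{\beta_{j+1}}\in[1,\sqrt{n}]$); the Markov/peeling argument is separate and is what you are missing.
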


Before proving this lemma, we first show its implication.

\begin{lemma}
\label{lem-equal}
If Lemma \ref{lem-induction} is true, then $\{p_{v_1}, \cdots, p_{v_k}\}$ is able to induce a $(1+O(\epsilon))$-approximation of $k$-CMeans (together with the solution for the corresponding Partition step).
\end{lemma}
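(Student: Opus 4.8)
The plan is to show that replacing the $k$ optimal mean points $\{m_1,\dots,m_k\}$ by the candidate points $\{p_{v_1},\dots,p_{v_k}\}$ inflates the clustering cost by at most a $(1+O(\epsilon))$ factor, so that running the Partition step with the candidate tuple yields a $(1+O(\epsilon))$-approximate constrained clustering. The crucial observation is that the optimal constrained partition $\{Opt_1,\dots,Opt_k\}$ is a \emph{feasible} partition (it satisfies the constraint $\mathbb{C}$), so in particular assigning each $Opt_j$ to the candidate center $p_{v_j}$ is a feasible solution for the instance with centers $\{p_{v_1},\dots,p_{v_k}\}$; hence the cost returned by the Partition step on this tuple is at most $\sum_{j=1}^k \sum_{p\in Opt_j}\|p-p_{v_j}\|^2$. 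So it suffices to bound this quantity by $(1+O(\epsilon))\,n\,\delta_{opt}^2$.

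First I would invoke Claim~1 (the parallel-axis identity from \cite{KSS}): for each $j$,
\begin{eqnarray}
\sum_{p\in Opt_j}\|p-p_{v_j}\|^2 = \sum_{p\in Opt_j}\|p-m_j\|^2 + |Opt_j|\cdot\|m_j-p_{v_j}\|^2 = |Opt_j|\,\delta_j^2 + |Opt_j|\cdot\|m_j-p_{v_j}\|^2 .
\end{eqnarray}
Next I would plug in the bound from Lemma~\ref{lem-induction}, namely $\|p_{v_j}-m_j\|\le \epsilon\delta_j + (1+\epsilon)j\sqrt{\epsilon/\beta_j}\,\delta_{opt}$. Squaring and using $(a+b)^2\le 2a^2+2b^2$ gives $\|p_{v_j}-m_j\|^2 \le 2\epsilon^2\delta_j^2 + 2(1+\epsilon)^2 j^2 (\epsilon/\beta_j)\delta_{opt}^2$. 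Multiplying by $|Opt_j| = \beta_j n$, the first term contributes $\le 2\epsilon^2 \beta_j n\,\delta_j^2$ and the second contributes $\le 2(1+\epsilon)^2 j^2 \epsilon\, n\,\delta_{opt}^2$. Summing over $j=1,\dots,k$ and using $\sum_j \beta_j\delta_j^2 = \delta_{opt}^2$ and $\sum_{j=1}^k j^2 = O(k^3)$, the total error term is $O(\epsilon^2)\,n\,\delta_{opt}^2 + O(\epsilon k^3)\,n\,\delta_{opt}^2$. Since the algorithm requires $\epsilon \in (0, \frac{1}{4k^2})$, we have $\epsilon k^3 = O(\epsilon)\cdot k = O(\epsilon)$ is \emph{not} immediate — here one must be slightly careful: $\epsilon k^3 \le \frac{k^3}{4k^2} = k/4$, which is not small. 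The correct reading is that $k$ is a fixed constant, so $O(\epsilon k^3) = O(\epsilon)$ with the constant absorbing $k^3$; this matches the paper's convention that $k$ is fixed. Then
\begin{eqnarray}
\sum_{j=1}^k\sum_{p\in Opt_j}\|p-p_{v_j}\|^2 \le n\,\delta_{opt}^2 + O(\epsilon)\,n\,\delta_{opt}^2 = \big(1+O(\epsilon)\big)\,n\,\delta_{opt}^2 ,
\end{eqnarray}
which is exactly a $(1+O(\epsilon))$-approximation, since $n\,\delta_{opt}^2$ is the optimal $k$-CMeans objective value (up to the normalization by $n$ in the definition of $\delta_{opt}^2$).

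The main obstacle is not any single inequality but getting the bookkeeping of the two error scales right: the ``local'' term scales like $\epsilon^2\delta_j^2$ summed against the cluster weights $\beta_j$ (which telescopes cleanly into $\epsilon^2\delta_{opt}^2$), whereas the ``global'' term already carries a $\delta_{opt}^2$ and a $1/\beta_j$ that is exactly cancelled by the $|Opt_j|=\beta_j n$ factor, leaving a clean $\epsilon\, n\,\delta_{opt}^2$ per cluster and hence $O(\epsilon k^3)\,n\,\delta_{opt}^2$ overall. One should also note explicitly that feasibility is preserved because $\{Opt_1,\dots,Opt_k\}$ itself satisfies $\mathbb{C}$, so the Partition step on the tuple $\{p_{v_1},\dots,p_{v_k}\}$ can only do better than this feasible assignment; this is the one place where the ``constrained'' nature of the problem enters, and it is precisely why the argument does not need the locality property.
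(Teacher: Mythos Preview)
Your proposal is correct and follows essentially the same approach as the paper: apply Claim~1 (the parallel-axis identity) to each cluster, bound $\|p_{v_j}-m_j\|^2$ via Lemma~\ref{lem-induction} and $(a+b)^2\le 2a^2+2b^2$, use $|Opt_j|=\beta_j n$ to cancel the $1/\beta_j$, sum over $j$ using $\sum_j\beta_j\delta_j^2=\delta_{opt}^2$, and absorb the $k^3$ into $O(\epsilon)$ since $k$ is fixed. Your explicit remark that $\{Opt_1,\dots,Opt_k\}$ is itself a feasible constrained partition (so the Partition step on the candidate tuple can only do better) is a point the paper leaves implicit but is indeed the right justification for the phrase ``together with the solution for the corresponding Partition step.''
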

\begin{proof}
We assume that Lemma \ref{lem-induction} is true. Then for each $1\leq j\leq k$, we have
\begin{eqnarray}
\sum_{p\in Opt_j}||p-p_{v_j}||^2&=&\sum_{p\in Opt_j}||p-m_j||^2+|Opt_j|\times||m_j-p_{v_j}||^2 \nonumber\\
&\leq& \sum_{p\in Opt_j}||p-m_j||^2+|Opt_j|\times2(\epsilon^2\delta^2_j+(1+\epsilon)^2 j^2\frac{\epsilon}{\beta_j}\delta^2_{opt}) \nonumber\\
&=&(1+2\epsilon^2)|Opt_j |\delta^2_j+2(1+\epsilon)^2 j^2\epsilon n\delta^2_{opt}, \label{for-10}
\end{eqnarray}
where the first equation follows from  Claim 1 in the proof of Lemma \ref{lem-close} (note that $m_j$ is the mean point of $Opt_j$), the inequality follows from Lemma \ref{lem-induction} and the fact that $(a+b)^2\leq 2(a^2+b^2)$ for any two real numbers $a$ and $b$, and the last equality follows from the fact that $\frac{|Opt_j|}{\beta_j}=n$. Summing both sides of (\ref{for-10}) over $j$, we have
 \begin{eqnarray}
 \sum^k_{j=1}\sum_{p\in Opt_j}||p-p_{v_j}||^2 &\leq &\sum^k_{j=1}((1+2\epsilon^2)|Opt_j |\delta^2_j+2(1+\epsilon)^2 j^2\epsilon n\delta^2_{opt})\nonumber\\
&\leq& (1+2\epsilon^2)\sum^k_{j=1}|Opt_j |\delta^2_j+2(1+\epsilon)^2 k^3\epsilon n\delta^2_{opt}\nonumber\\
&=&(1+O(k^3)\epsilon) n\delta^2_{opt}, \label{for-11}
\end{eqnarray}
where the last equation follows from the fact that $\sum^k_{j=1}|Opt_j |\delta^2_j=n\delta^2_{opt}$. By (\ref{for-11}), we know that $\{p_{v_1}, \cdots, p_{v_k}\}$ will induce a $(1+O(k^3)\epsilon)$-approximation for $k$-CMeans (together with the solution for the corresponding Partition step). 
Note that $k$ is assumed to be a fixed number. Thus the lemma is true.
\qed
\end{proof}

Lemma \ref{lem-equal} implies that Lemma \ref{lem-induction} is indeed sufficient to ensure the correctness of Theorem~\ref{the-ptas} (except for the number of candidates and the time complexity). Now we prove Lemma \ref{lem-induction}.




\begin{proof}[\textbf{of Lemma \ref{lem-induction}}]
Let $\mathcal{T}_i$ be the tree generated by Algorithm~\ref{alg-tree} when $\delta$ falls in the interval of $[\delta_{opt}, (1+\epsilon)\delta_{opt}]$.  We will focus our discussion on $\mathcal{T}_{i}$, and prove the lemma by mathematical induction on $j$.

\noindent\textbf{Base case:} For $j=1$, since $\beta_1=\max\{\beta_j |1\leq j\leq k\}$, we have $\beta_1\geq\frac{1}{k}$. By Lemma~\ref{lem-dis} and Lemma~\ref{lem-select}, we can find the approximate mean point through random sampling. Let $\Omega$ and $S$ (in Lemma~\ref{lem-select}) be $P$ and $Opt_1$, respectively. Also, $p_{v_{1}}$ is the mean point of the random sample from $P$. Lemma~\ref{lem-select} ensures that the sample contains enough number of points from $Opt_1$, and Lemma~\ref{lem-dis} implies that $||p_{v_1}-m_1||\leq \epsilon\delta_1\leq \epsilon\delta_1+(1+\epsilon)\sqrt{\frac{\epsilon}{\beta_1}}\delta_{opt}$.

\noindent\textbf{Induction step:} Suppose $j>1$. We assume that there is a path in $\mathcal{T}_i$ from the root to the $(j-1)$-th level, such that for each $1\leq l\leq j-1$, the level-$l$ node $v_{l}$ on the path is associated with a point $p_{v_{l}}$ satisfying the inequality $||p_{v_l}-m_l ||\leq \epsilon\delta_l+(1+\epsilon)l\sqrt{\frac{\epsilon}{\beta_l}}\delta_{opt} $ ({\em i.e.}, the induction hypothesis). Now we consider  the case of  $j$.  Below we will show that there is one child of $v_{j-1}$,  {\em i.e.}, $v_{j}$, such that  its associated point $p_{v_{j}}$ satisfies the inequality $||p_{v_j}-m_j ||\leq \epsilon\delta_j+(1+\epsilon)j\sqrt{\frac{\epsilon}{\beta_j}}\delta_{opt} $. First, we have the following claim (see Section \ref{sec-detail_claim3} for the proof).\\
 

\noindent\textbf{Claim 2}
{\em In the set of radius candidates in  Algorithm~\ref{alg-tree}, there exists one value $r_j\in \mathcal{R}$ such that 
\begin{eqnarray}
r_j\in [j\sqrt{\epsilon/\beta_j}\delta_{opt}, (1+\frac{\epsilon}{2})j\sqrt{\epsilon/\beta_j}\delta_{opt}].
\end{eqnarray}}

  
Now, we construct the $j-1$ peeling spheres, $\{B_{j,1}, \cdots, B_{j,j-1}\}$ as in Algorithm~\ref{alg-tree}. For each $1\leq l\leq j-1$, $B_{j,l}$ is centered at $p_{v_l}$ and with radius $r_j$. By Markov's inequality and the induction hypothesis, we have the following claim (see Section \ref{sec-detail_claim4} for the proof).\\
 
 %
 
\noindent\textbf{Claim 3}
 {\em For each $1\leq l\leq j-1$,  $|Opt_l \setminus (\bigcup^{j-1}_{w=1}B_{j,w})|\le \frac{4\beta_j n}{\epsilon}$.} \\


 Claim 3 shows that $|Opt_l \setminus (\bigcup^{j-1}_{w=1}B_{j,w})|$ is bounded for $1\leq l\leq j-1$, which helps us to find the approximate mean point of $Opt_j $. Induced by the $j-1$ peeling spheres $\{B_{j,1}, \cdots, B_{j,j-1}\}$, $Opt_j$ is divided into $j$ subsets, $Opt_j\cap B_{j,1}$, $\cdots$, $Opt_j\cap B_{j,j-1}$ and $Opt_j \setminus(\bigcup^{j-1}_{w=1}B_{j,w})$. For ease of discussion, let $P_l$ denote $Opt_j\cap B_{j,l}$ for $1\leq l\leq j-1$,  $P_j$ denote $Opt_j \setminus(\bigcup^{j-1}_{w=1}B_{j,w})$, and $\tau_{l}$ denote the mean point of $P_l$ for $1\leq l\leq j$. Note that the peeling spheres may intersect with each other. For any two intersecting spheres $B_{j,l_1}$ and $B_{j,l_2}$, we arbitrarily assign the points in $Opt_j\cap (B_{j,l_1}\cap B_{j,l_2})$  to either $P_{l_1}$ or $P_{l_2}$. Thus, we can assume that $\{P_l\mid 1\leq l\leq j\}$ are pairwise disjoint.

Now consider the size of $P_{j}$. We have the following two cases: (a) $|P_j |\geq \epsilon^3\frac{\beta_{j}}{j}n$ and (b) $|P_j |<\epsilon^3\frac{\beta_{j}}{j}n$. We show how,  in each case, Algorithm~\ref{alg-tree} can obtain an approximate mean point for $Opt_{j}$ by using the simplex lemma ({\em i.e.}, Lemma \ref{lem-shift}).
%
%
  


For case (a),   by  Claim 3, together with the fact that $\beta_l\leq \beta_{j}$ for $l>j$, we know that 
\begin{eqnarray}
\sum^k_{l=1}|Opt_{l}\setminus(\bigcup^{j-1}_{w=1}B_{j,w})|&\leq&\sum^{j-1}_{l=1}|Opt_{l}\setminus(\bigcup^{j-1}_{w=1}B_{j,w})|+|P_j|+\sum^k_{l=j+1}|Opt_l|\nonumber\\
&\leq&\frac{4(j-1)\beta_j}{\epsilon}n+|P_j|+(k-j)\beta_j n, \label{for-pj}
\end{eqnarray}
where the second inequality follows from Claim 3. So we have
\begin{eqnarray}
\frac{|P_j |}{\sum^k_{l=1}|Opt_{l}\setminus(\bigcup^{j-1}_{w=1}B_{j,w})|}&\geq&\frac{|P_j |}{\frac{4(j-1)\beta_j}{\epsilon}n+|P_j |+(k-j)\beta_j n}.
%
\end{eqnarray}
We view the right-hand side as a function of $|P_j|$. 
Given any $h>0$, the function $f(x)=\frac{x}{x+h}$ is an increasing function on the variable $x\in [0, +\infty)$. Note that we assume $|P_j |\geq \epsilon^3\frac{\beta_{j}}{j}n$. Thus 
\begin{eqnarray}
\frac{|P_j |}{\sum^k_{l=1}|Opt_{l}\setminus(\bigcup^{j-1}_{w=1}B_{j,w})|}&\geq&\frac{\frac{\epsilon^3}{j}\beta_j n}{\frac{4(j-1)\beta_j}{\epsilon}n+\frac{\epsilon^3}{j}\beta_j n+(k-j)\beta_j n}\nonumber\\
&>&\frac{\epsilon^4}{8kj}\ge\frac{\epsilon^4}{8k^2}, \label{for-pj2}
\end{eqnarray}
(\ref{for-pj2}) implies that $P_{j}$ is large enough, comparing to the set of points outside the peeling spheres. Hence, we can obtain an approximate mean point $\pi $ for $P_j$ in the following way. First, we set $t=\frac{k}{\epsilon^5}$, $\eta=\frac{\epsilon}{k}$, and take a sample of size $\frac{t\ln(t/\eta)}{\epsilon^4 /8k^2}=\frac{8k^3}{\epsilon^9}\ln\frac{k^2}{\epsilon^6}$.  By Lemma \ref{lem-select}, we know that  with probability $1-\frac{\epsilon}{k}$, the sample contains at least $\frac{k}{\epsilon^5}$ points from $P_j$.  Then we let $\pi$ be the mean point of the $\frac{k}{\epsilon^5}$ points from $P_j$, and $a^{2}$ be the variance of $P_j$. By Lemma \ref{lem-dis}, we know that  with probability $1-\frac{\epsilon}{k}$, $||\pi-\tau_j||^2\leq \epsilon^4 a^2$. Also, since $\frac{|P_j|}{|Opt_j|}=\frac{|P_j|}{\beta_j n}\geq\frac{\epsilon^3}{j}$ (because $|P_j |\geq \epsilon^3\frac{\beta_{j}}{j}n$ for case (a)), we have $a^2\le\frac{|Opt_j|}{|P_j|}\delta^2_j\leq\frac{j}{\epsilon^3}\delta^2_j$. Thus, 
\begin{eqnarray}
||\pi-\tau_j||^2\leq \epsilon j\delta^2_j. \label{for-pitau}
\end{eqnarray}


\begin{figure}[h]
 \centering
  \subfloat[]{\label{fig-case1}\includegraphics[height=1.4in]{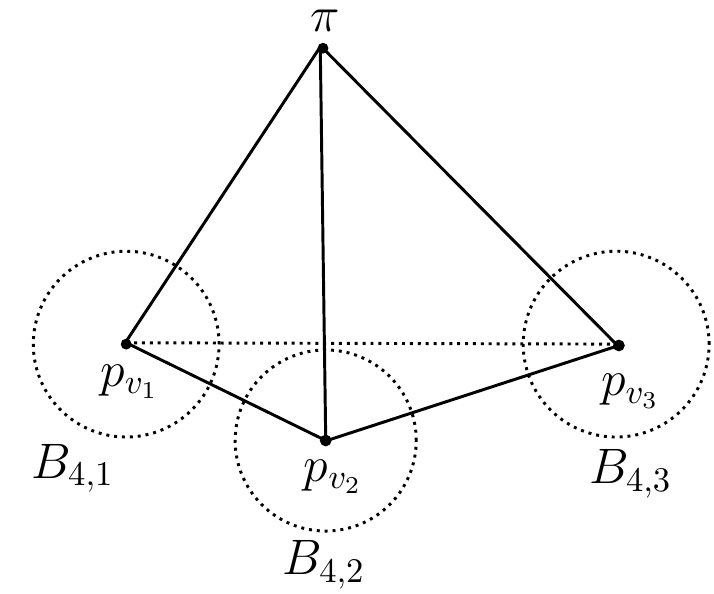}}
  \hspace{0.8in}
  \subfloat[]{\label{fig-case2}\includegraphics[height=1.1in]{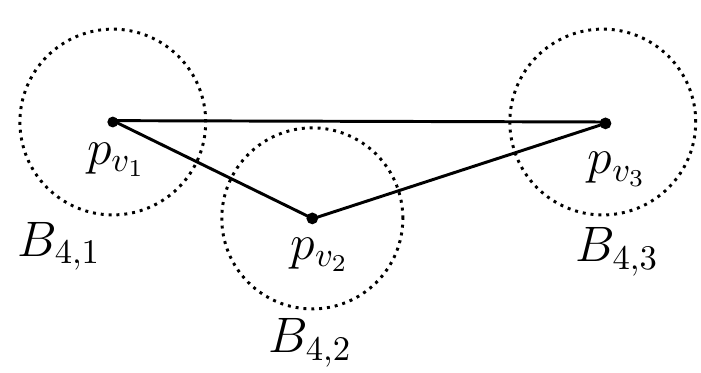}}
   \caption{Figure \ref{fig-case1} and \ref{fig-case2} are the Simplexes of  case (a) and case (b) with $j=4$ respectively.}
\end{figure}


Once obtaining $\pi$,  we can apply Lemma \ref{lem-shift} to find a point $p_{v_{j}}$ satisfying the condition of $||p_{v_{j}}-m_j||\leq \epsilon\delta_j+(1+\epsilon)j\sqrt{\frac{\epsilon}{\beta_j}}\delta_{opt}$. We construct a simplex $\mathcal{V}'_{(a)}$ with vertices $\{p_{v_1}, \cdots, p_{v_{j-1}}\}$ and $\pi$ (see Figure \ref{fig-case1}).  Note that $Opt_j$ is partitioned by  the peeling spheres into $j$ disjoint subsets, $P_1, \cdots, P_j$. Each $P_l$ ($1\le l\le j-1$) lies inside $B_{j,l}$, which implies that $\tau_l$, {\em i.e.,} the mean point of $P_l$, is also inside $B_{j,l}$. Further,  by  Claim 2, for $1\leq l\leq j-1$, we have
 \begin{eqnarray}
||p_{v_l}-\tau_l||\leq r_j\leq (1+\frac{\epsilon}{2})j\sqrt{\epsilon/\beta_j}\delta_{opt}. \label{for-ca1}
\end{eqnarray}
Recall that $\beta_j\delta^2_j\le\delta^2_{opt}$. Thus, together with (\ref{for-pitau}), we have
\begin{eqnarray}
 ||\pi-\tau_j||\leq \sqrt{\epsilon j}\delta_j\leq \sqrt{\epsilon j/\beta_j}\delta_{opt}. \label{for-ca2}
 \end{eqnarray}
 By (\ref{for-ca1}) and (\ref{for-ca2}), if  setting the value of $L$ (in Lemma \ref{lem-shift})  to be 
 \begin{eqnarray}
 \max\{r_j,||\pi-\tau_j|| \}&\le&\max\{(1+\frac{\epsilon}{2})j\sqrt{\epsilon/\beta_j}\delta_{opt}, \sqrt{\epsilon j/\beta_j}\delta_{opt}\}\nonumber\\
 &=&(1+\frac{\epsilon}{2})j\sqrt{\epsilon/\beta_j}\delta_{opt},
 \end{eqnarray}
and the value of $\epsilon$ (in Lemma \ref{lem-shift}) to be $\epsilon_0=\epsilon^2/4$, by Lemma \ref{lem-shift} we can construct a grid inside the simplex $\mathcal{V}'_{(a)}$ with size $O((8j/\epsilon_0)^j)$ to ensure the existence of the grid point $\tau$ satisfying the inequality of 
\begin{eqnarray}
||\tau-m_j||\leq \sqrt{\epsilon_0}\delta_j+(1+\epsilon_0)L\leq\epsilon\delta_j+(1+\epsilon)j\sqrt{\frac{\epsilon}{\beta_j}}\delta_{opt}. 
\end{eqnarray}
Hence, let $p_{v_j}$ be the grid point $\tau$, 
and the induction step  holds for this case.



For case (b), we can also apply Lemma \ref{lem-shift}  to find an approximate mean point in a way similar to case (a); the difference is that we construct a simplex $\mathcal{V}'_{(b)}$ with vertices $\{p_{v_1}, \cdots, p_{v_{j-1}}\}$ (see Figure~\ref{fig-case2}). Roughly speaking, since $|P_j|$ is small, the mean points of $Opt_j\setminus P_j$ and $Opt_j$ are very close to each other (by Lemma \ref{lem-close}). Thus, we can ignore $P_j$ and just consider $Opt_j\setminus P_j$. 
%
%

Let $a^2$ and $m'_j$ denote the variance and mean point of $Opt_j\setminus P_j$ respectively. We know that $\{P_1, P_2, \cdots, P_{j-1}\}$ is a partition on $Opt_j\setminus P_j$. Thus, similar with case (a), we construct a simplex $\mathcal{V}'_{(b)}$ determined by $\{p_{v_1}, \cdots, p_{v_{j-1}}\}$ (see Figure \ref{fig-case2}), set  the value of $L$ to be $r_j\leq (1+\frac{\epsilon}{2})j\sqrt{\frac{\epsilon}{\beta_j}}\delta_{opt}$, and then build a grid inside $\mathcal{V}'_{(b)}$ with size $O((\frac{8j}{\epsilon_0})^j)$, where  $\epsilon_0=\epsilon^2/4$. By Lemma \ref{lem-shift}, we know that there exists one grid point $\tau$ satisfying the condition of 
\begin{eqnarray}
||\tau-m'_j||\leq \sqrt{\epsilon_0}a+(1+\epsilon_0)L\leq\frac{\epsilon}{2}a+(1+\epsilon)j\sqrt{\frac{\epsilon}{\beta_j}}\delta_{opt}. \label{for-cb1}
\end{eqnarray}
Meanwhile, we know that $|Opt_j\setminus P_j|\geq (1-\epsilon^3/j)|Opt_j|$, since $|P_j|\leq \frac{\epsilon^3}{j} |Opt_j|$. Thus, we have $a^2\leq\frac{|Opt_j|}{|Opt_j\setminus P_j|}\delta^2_j \leq \frac{1}{1-\epsilon^3/j} \delta^2_j$, and $||m'_j-m_j||\leq\sqrt{\frac{\epsilon^3/j}{1-\epsilon^3/j}}\delta_j$ (by Lemma \ref{lem-close}). Together with (\ref{for-cb1}), we have
\begin{eqnarray}
||\tau-m_j||&\leq& ||\tau-m'_j||+||m'_j-m_j||\nonumber\\
&\leq&\frac{\epsilon}{2}a+(1+\epsilon)j\sqrt{\frac{\epsilon}{\beta_j}}\delta_{opt}+\sqrt{\frac{\epsilon^3/j}{1-\epsilon^3/j}}\delta_j\nonumber\\
&\leq&\frac{\epsilon}{2} \sqrt{\frac{1}{1-\epsilon^3/j}}\delta_j+(1+\epsilon)j\sqrt{\frac{\epsilon}{\beta_j}}\delta_{opt}+\sqrt{\frac{\epsilon^3/j}{1-\epsilon^3/j}}\delta_j\nonumber\\
&\leq&(\frac{\epsilon}{2} \sqrt{\frac{1}{1-\epsilon^3/j}}+\sqrt{\frac{\epsilon^3/j}{1-\epsilon^3/j}})\delta_j+(1+\epsilon)j\sqrt{\frac{\epsilon}{\beta_j}}\delta_{opt}\nonumber\\
&\leq&\epsilon\delta_j+(1+\epsilon)j\sqrt{\frac{\epsilon}{\beta_j}}\delta_{opt}.\label{for-cb2}
\end{eqnarray}
Hence, let $p_{v_j}$ be the grid point $\tau$, 
and the induction step  holds for this case.



Since Algorithm~\ref{alg-tree} executes every step in our above discussion, the induction step, as well as the lemma, is true.
\qed
\end{proof}

\noindent\textbf{Success probability:} From the above analysis, we know that in the $j$-th iteration, only  case (a) ({\em i.e.}, $|P_j |\geq \epsilon^3\frac{\beta_{j}}{j} n$) needs to consider the success probability of random sampling.
Recall that in case (a), we take a sample of size $\frac{8k^3}{\epsilon^9}\ln\frac{k^2}{\epsilon^6}$. Thus  with probability $1-\frac{\epsilon}{k}$, it contains at least $\frac{k}{\epsilon^5}$ points from $P_j$. Meanwhile, with probability $1-\frac{\epsilon}{k}$, $||\pi-\tau_j||^2\leq \epsilon^4 a^2$. Hence, the success probability in the $j$-th iteration is $(1-\frac{\epsilon}{k})^2$. By taking the union bound, the success probability in all $k$ iterations is $(1-\frac{\epsilon}{k})^{2k}\geq 1-2\epsilon$.

\noindent\textbf{Number of Candidates and Running time:} Algorithm~\ref{alg-kcmeans} calls Algorithm~\ref{alg-tree} $O(\frac{1}{\epsilon}\log c)$ times (in Section~\ref{sec-upper}, we will show that $c$ can be a constant number). It is easy to see that each node in the returned tree has $|\mathcal{R}|2^{s+j} (\frac{32j}{\epsilon^2})^j$ children, where $|\mathcal{R}|=O(\frac{\log n}{\epsilon})$, and $s=\frac{8k^3}{\epsilon^9}\ln\frac{k^2}{\epsilon^6}$. Since the tree has the height of $k$, the complexity of the tree is $O(2^{poly(\frac{k}{\epsilon})}(\log n)^k)$. Consequently, the number of  candidates is $O(2^{poly(\frac{k}{\epsilon})}(\log n)^k)$. Further, since each node takes $O(|\mathcal{R}|2^{s+j} (\frac{32j}{\epsilon^2})^j$ $nd)$ time, the total time complexity of the algorithm is $O(2^{poly(\frac{k}{\epsilon})}$ $n(\log n)^{k+1} d )$.
%
%
%
%
%

\subsection{Upper Bound Estimation}
\label{sec-upper}

As mentioned in Section \ref{sec-ov}, our Peeling-and-Enclosing algorithm needs an upper bound $\Delta$ on the optimal value  $\delta^2_{opt}$. 
To compute this, our main idea is to use some  unconstrained $k$-means clustering algorithm $\mathcal{A}_{*}$  ({\em e.g.,} the linear time $(1+\epsilon)$-approximation algorithm in \cite{KSS}) on the input points $P$ without considering the constraint, to obtain a 
 $\lambda$-approximation to the $k$-means clustering for some constant $\lambda >1$. Let 
$\mathcal{C}=\{c_1, \cdots, c_k\}$ be the set of mean points returned by algorithm $\mathcal{A}_{*}$\footnote{Note that they are different from $\{m_1, \cdots, m_k\}$, which are the mean points of the $k$ optimal constrained clusters $\{Opt_1, \cdots, Opt_k\}$  of $P$ defined in the beginning of Section~\ref{sec-candidates}.}.  
Below, we show that the Cartesian product $[\mathcal{C}]^k=\underbrace{\mathcal{C}\times\cdots\times\mathcal{C}}_{k}$ contains one $k$-tuple, which is an $(18\lambda+16)$-approximation of $k$-CMeans on the same input $P$. Clearly, to select the $k$-tuple from $[\mathcal{C}]^k$ with the smallest objective value, we still need to solve the Partition step on each $k$-tuple to form the desired clusters. Similar to Remark~\ref{rem-ptas}, we refer the reader to Section \ref{sec-application} for the selection algorithms for the considered constrained clustering problems.

%

 \begin{theorem}
\label{the-constant}
Let $P=\{p_1, \cdots, p_n\}$ be the input points of $k$-CMeans, and $\mathcal{C}=\{c_{1},\cdots,c_{k}\}$ be the mean points of a $\lambda$-approximation of the $k$-means clustering on $P$ (without considering the constraint) for some constant $\lambda\geq 1$. Then $[\mathcal{C}]^k$ contains at least one $k$-tuple which is able to induce an $(18\lambda+16)$-approximation of $k$-CMeans (together with the solution for the corresponding Partition step).
\end{theorem}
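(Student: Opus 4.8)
The plan is to show that clustering $P$ optimally \emph{under the constraint} but restricting each cluster center to be one of the unconstrained centers $\mathcal{C}=\{c_1,\dots,c_k\}$ already gives an $(18\lambda+16)$-approximation. Let $\{Opt_1,\dots,Opt_k\}$ be the optimal constrained clusters with means $m_1,\dots,m_k$ and optimal value $\delta^2_{opt}=\frac1n\sum_j\sum_{p\in Opt_j}\|p-m_j\|^2$. Let $\{C_1,\dots,C_k\}$ be the clusters of the $\lambda$-approximate unconstrained solution, so that $\frac1n\sum_j\sum_{p\in C_j}\|p-c_j\|^2\le\lambda\,\delta^2_{opt}$ (the unconstrained optimum is a lower bound for $\delta^2_{opt}$ since the constrained problem is more restrictive, which is exactly why $\lambda\,\delta^2_{opt}$ dominates the unconstrained cost). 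For each optimal constrained cluster $Opt_j$, I would \emph{assign} it to the center $c_{\sigma(j)}\in\mathcal{C}$ that is closest to $m_j$, i.e.\ $\sigma(j)=\arg\min_{i}\|m_j-c_i\|$. This defines a $k$-tuple $(c_{\sigma(1)},\dots,c_{\sigma(k)})\in[\mathcal{C}]^k$ (if $\sigma$ is not injective, pad arbitrarily — duplicated centers only help, and the Partition step will in any case find the best clustering for this tuple, so it suffices to exhibit \emph{one} feasible clustering with low cost, namely $Opt_j$ assigned to $c_{\sigma(j)}$).

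Next I would bound $\sum_j\sum_{p\in Opt_j}\|p-c_{\sigma(j)}\|^2$. By Claim 1 (the parallel-axis identity from the proof of Lemma~\ref{lem-close}), $\sum_{p\in Opt_j}\|p-c_{\sigma(j)}\|^2=\sum_{p\in Opt_j}\|p-m_j\|^2+|Opt_j|\cdot\|m_j-c_{\sigma(j)}\|^2$, so the whole task reduces to bounding $\sum_j|Opt_j|\cdot\|m_j-c_{\sigma(j)}\|^2$, i.e.\ to showing that every constrained mean $m_j$ has some unconstrained center close to it, weighted by cluster size. Since $\|m_j-c_{\sigma(j)}\|=\min_i\|m_j-c_i\|$, for \emph{any} point $p$ we have $\|m_j-c_{\sigma(j)}\|\le\|m_j-c_{i(p)}\|$ where $c_{i(p)}$ is the center serving $p$ in the unconstrained solution; hence $\|m_j-c_{\sigma(j)}\|\le\|m_j-p\|+\|p-c_{i(p)}\|$. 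Averaging this over $p\in Opt_j$ and using $(a+b)^2\le 2a^2+2b^2$ together with the power-mean inequality $\big(\frac1{|Opt_j|}\sum_p x_p\big)^2\le\frac1{|Opt_j|}\sum_p x_p^2$, I get
\begin{eqnarray}
|Opt_j|\cdot\|m_j-c_{\sigma(j)}\|^2 &\le& 2\sum_{p\in Opt_j}\|p-m_j\|^2 \;+\; 2\sum_{p\in Opt_j}\|p-c_{i(p)}\|^2.
\end{eqnarray}
Summing over $j$: the first sum is $2n\delta^2_{opt}$; the second sum, since each $p$ contributes its unconstrained assignment cost exactly once, is $2\sum_j\sum_{p\in C_j}\|p-c_j\|^2\le 2\lambda n\delta^2_{opt}$. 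Therefore $\sum_j|Opt_j|\cdot\|m_j-c_{\sigma(j)}\|^2\le(2+2\lambda)n\delta^2_{opt}$, and adding back the $\sum_j\sum_p\|p-m_j\|^2=n\delta^2_{opt}$ term gives total cost $\le(3+2\lambda)n\delta^2_{opt}$. This is feasible (the clusters are the $Opt_j$, which satisfy $\mathbb{C}$) and assigned to centers in $[\mathcal{C}]^k$, so the best tuple the Partition step finds is at least this good.

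The only gap between the crude bound $(3+2\lambda)$ I just sketched and the claimed $(18\lambda+16)$ is that I was sloppy: the argument above implicitly assumed the $\lambda$-approximate clustering cost is measured against $\delta^2_{opt}$, whereas the honest statement is only that it is at most $\lambda$ times the \emph{unconstrained} optimum $\delta^2_{unc}\le\delta^2_{opt}$ — which is fine — but a fully careful accounting should also handle the case where duplicated $\sigma$-values force us into a genuinely smaller $k$-tuple, and should track the constants through a slightly more conservative splitting (e.g.\ using $(a+b)^2\le(1+t)a^2+(1+1/t)b^2$ or triangle-inequality chains through $c_{i(p)}$, $p$, $m_j$ that pick up extra factors). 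I expect the main obstacle to be purely bookkeeping: choosing the constant in the $(a+b)^2$-type inequalities and the intermediate triangle inequalities so that the final constant comes out as $18\lambda+16$ rather than something smaller or larger; the geometric content is entirely captured by Claim 1 plus "closest center beats the serving center" plus one application of $(a+b)^2\le 2a^2+2b^2$, and there is no conceptual difficulty beyond that. I would therefore present the clean chain first and then remark that the stated constant absorbs the looser estimates.
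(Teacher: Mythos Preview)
Your argument is correct and actually proves a stronger bound, $(2\lambda+3)$ in place of $(18\lambda+16)$. The paper takes a different, more indirect route: it ``snaps'' every point $p_i$ to its nearest unconstrained center to form an auxiliary instance $\tilde P$, then proves two separate inequalities, $\tilde\delta^2_{opt}\le 2\omega+2\delta^2_{opt}$ and $\delta^2_{opt}([\mathcal{C}]^k)\le 2\omega+8\tilde\delta^2_{opt}$, and chains them to get $\delta^2_{opt}([\mathcal{C}]^k)\le 18\omega+16\delta^2_{opt}\le(18\lambda+16)\delta^2_{opt}$. Your approach bypasses the auxiliary instance entirely: keep the optimal constrained partition $\{Opt_j\}$, assign each $Opt_j$ to its nearest center $c_{\sigma(j)}\in\mathcal{C}$, and bound the shift cost $|Opt_j|\cdot\|m_j-c_{\sigma(j)}\|^2$ directly via the pointwise inequality $\|m_j-c_{\sigma(j)}\|\le\|m_j-p\|+\|p-c_{i(p)}\|$. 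This is both shorter and sharper; the paper's detour through $\tilde P$ loses constant factors twice.

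Your final paragraph of self-doubt is unwarranted. There is no sloppiness: $\omega\le\lambda\,\delta^2_{unc}\le\lambda\,\delta^2_{opt}$ is exactly the inequality you need, and non-injective $\sigma$ is a non-issue because $[\mathcal{C}]^k$ is the Cartesian product and explicitly permits repeated centers (the Partition step then only does at least as well as the feasible clustering you exhibited). The discrepancy with the stated constant is not a gap in your proof but a genuine improvement over the paper's argument; you should state your bound as $(2\lambda+3)$ and note that it implies the theorem as stated.
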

%

\begin{figure}[h]
 \centering
  \subfloat[]{\label{fig-move}\includegraphics[height=0.75in]{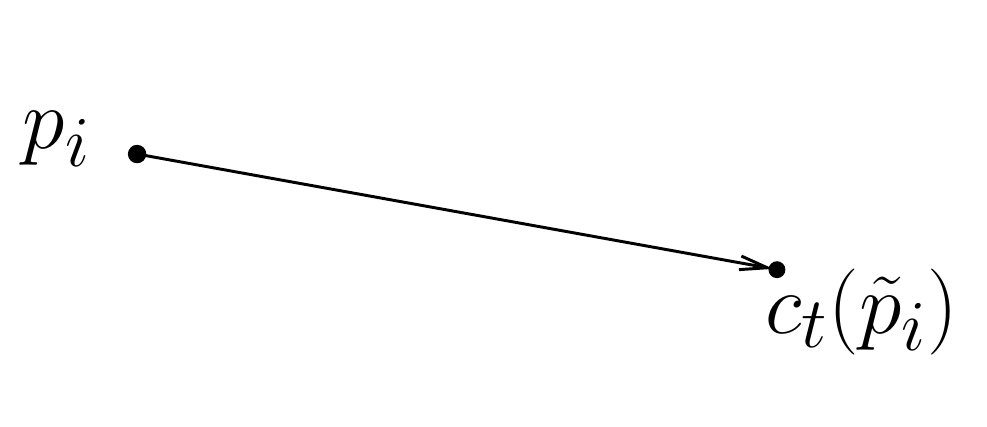}}
  \hspace{0.8in}
  \subfloat[]{\label{fig-tri}\includegraphics[height=1in]{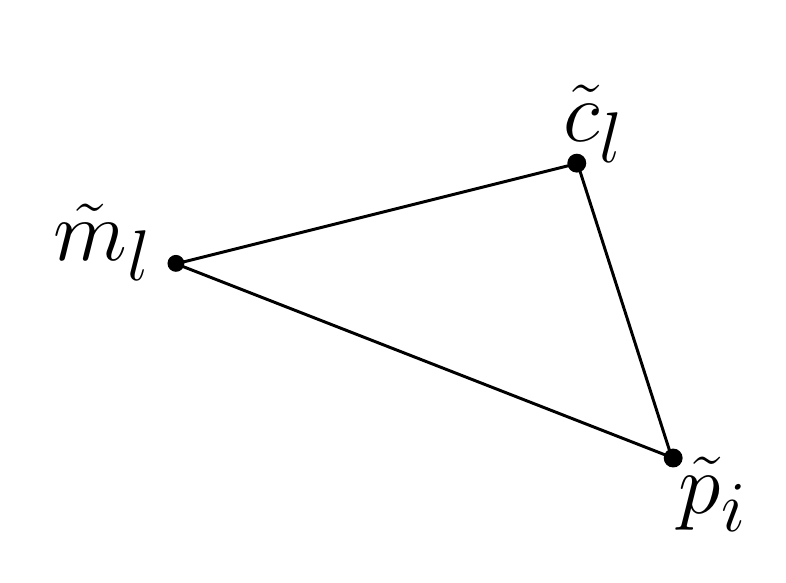}}
   \caption{(a) $p_i$ is moved to $c_t$  and becomes $\tilde{p}_i$; (b) $||\tilde{m}_l-\tilde{c}_l||\leq ||\tilde{m}_l-\tilde{p}_i||$.}
\end{figure}



\noindent\textbf{Proof Synopsis:} Let $\omega$ be the  objective value of the $k$-means clustering on $P$ corresponding to the $k$ mean points in $\mathcal{C}$. 
To prove Theorem~\ref{the-constant}, 
we  create a new instance of $k$-CMeans:  
for each point $p_i\in P$, move it to its nearest point, say $c_{t}$,  in $\{c_1, \cdots, c_k\}$; let $\tilde{p}_i$ denote the new $p_i$ 
 (note that $c_t$ and $\tilde{p}_i$ coincide with each other; see Figure \ref{fig-move}). The set $\tilde{P}=\{\tilde{p}_1, \cdots, \tilde{p}_n\}$ forms a new instance of $k$-CMeans. Let $\tilde{\delta}^2_{opt}$ be the optimal  value of $k$-CMeans on $\tilde{P}$, and $\delta^2_{opt}([\mathcal{C}]^k)$ be the minimum cost of $k$-CMeans on $P$ by restricting its mean points to be one $k$-tuple in $[\mathcal{C}]^k$.
We show that $\tilde{\delta}^2_{opt}$ is bounded by a combination of $\omega$ and $\delta^2_{opt}$, and $\delta^2_{opt}([\mathcal{C}]^k)$ is bounded by a combination of $\omega$ and $\tilde{\delta}^2_{opt}$ (see Lemma~\ref{lem-ieq}). Together with the fact that $\omega$ is no more than $\lambda \delta^2_{opt}$, we consequently obtain that $\delta^2_{opt}([\mathcal{C}]^k)\leq (18\lambda+16)\delta^2_{opt}$, which implies Theorem \ref{the-constant}.

\begin{lemma}
\label{lem-ieq}
$\tilde{\delta}^2_{opt}\leq 2\omega+2\delta^2_{opt}$, and $\delta^2_{opt}([\mathcal{C}]^k)\leq 2\omega+8\tilde{\delta}^2_{opt}$.
\end{lemma}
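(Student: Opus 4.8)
The plan is to prove the two inequalities separately, each by a short sequence of applications of Claim~1 (the parallel-axis identity from the proof of Lemma~\ref{lem-close}) together with the inequality $(a+b)^2 \le 2(a^2+b^2)$. For the first inequality, $\tilde{\delta}^2_{opt} \le 2\omega + 2\delta^2_{opt}$, I would exhibit a specific feasible clustering of $\tilde{P}$ and bound its cost. The natural choice is to inherit the optimal constrained clustering $\{Opt_1,\dots,Opt_k\}$ of $P$: since $\tilde{p}_i$ is obtained from $p_i$ by a relocation that does not depend on cluster membership, the partition $\{\widetilde{Opt_1},\dots,\widetilde{Opt_k}\}$ (where $\widetilde{Opt_j}=\{\tilde p_i : p_i \in Opt_j\}$) satisfies the same constraint $\mathbb{C}$, hence is feasible for $k$-CMeans on $\tilde P$. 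So $\tilde\delta^2_{opt} \le \frac1n\sum_j \sum_{p_i \in Opt_j}\|\tilde p_i - \mu_j\|^2$ where $\mu_j$ is the mean of $\widetilde{Opt_j}$; and since the mean minimizes the sum of squared distances, I can further replace $\mu_j$ by $m_j$ (the mean of $Opt_j$). Then for each $i$, $\|\tilde p_i - m_j\|^2 \le 2\|\tilde p_i - p_i\|^2 + 2\|p_i - m_j\|^2$; summing over $i$ and dividing by $n$, the first term sums to $2\omega$ (because $\|\tilde p_i - p_i\|$ is exactly the distance from $p_i$ to its nearest center in $\mathcal{C}$, and $\omega$ is the $k$-means cost of $\mathcal{C}$ on $P$), and the second term sums to $2\delta^2_{opt}$. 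This gives the first inequality.

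For the second inequality, $\delta^2_{opt}([\mathcal{C}]^k) \le 2\omega + 8\tilde\delta^2_{opt}$, I would start from an optimal constrained clustering of $\tilde P$, say $\{\tilde O_1,\dots,\tilde O_k\}$ with means $\tilde m_l$, so that $\frac1n \sum_l \sum_{\tilde p \in \tilde O_l}\|\tilde p - \tilde m_l\|^2 = \tilde\delta^2_{opt}$. The idea is to snap each $\tilde m_l$ to its nearest point $\tilde c_l \in \mathcal{C}$; the resulting $k$-tuple $(\tilde c_1,\dots,\tilde c_k)$ lies in $[\mathcal{C}]^k$, and assigning cluster $\tilde O_l$ to center $\tilde c_l$ (this assignment is feasible since it uses the same partition, pulled back to $P$) yields a valid solution to $k$-CMeans on $P$ restricted to $[\mathcal{C}]^k$. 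Here I would use the geometric fact illustrated in Figure~\ref{fig-tri}: for any cluster $\tilde O_l$, since $\tilde O_l$ is nonempty and every point of $\tilde P$ coincides with some point of $\mathcal C$, there is a point $\tilde p_i \in \tilde O_l$ that is itself in $\mathcal C$, so $\|\tilde m_l - \tilde c_l\| \le \|\tilde m_l - \tilde p_i\|$ for that $i$; more robustly, $\|\tilde m_l - \tilde c_l\|^2 \le \frac{1}{|\tilde O_l|}\sum_{\tilde p \in \tilde O_l}\|\tilde m_l - \tilde p\|^2$ using that the average of $\|\tilde m_l - \tilde p\|^2$ over $\tilde p \in \tilde O_l$ is at least the squared distance to the nearest center (each $\tilde p$ is a center). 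Then for each point $p_i$ in the cluster (pulled back to $P$), bound $\|p_i - \tilde c_l\|^2 \le 2\|p_i - \tilde p_i\|^2 + 2\|\tilde p_i - \tilde c_l\|^2$, and in turn $\|\tilde p_i - \tilde c_l\|^2 \le 2\|\tilde p_i - \tilde m_l\|^2 + 2\|\tilde m_l - \tilde c_l\|^2$. Summing, the $\|p_i - \tilde p_i\|^2$ terms contribute $2\omega$, the $\|\tilde p_i - \tilde m_l\|^2$ terms contribute (via Claim~1, since $\tilde m_l$ is the mean) $4\tilde\delta^2_{opt}$, and the $\|\tilde m_l - \tilde c_l\|^2$ terms contribute another $4\tilde\delta^2_{opt}$ by the snapping bound — totaling $2\omega + 8\tilde\delta^2_{opt}$.

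The main obstacle, and the step I would be most careful about, is bookkeeping the constant factors correctly in the second inequality: the two nested applications of $(a+b)^2\le 2(a^2+b^2)$ interact with the snapping estimate, and one must make sure that the cost is measured against the right centers (the restriction to $[\mathcal C]^k$ means the cost function is $\sum_i \min$ over the chosen $k$-tuple, so using the particular assignment $\tilde O_l \mapsto \tilde c_l$ only gives an upper bound, which is exactly what we want). A secondary point to handle cleanly is the claim that $\|\tilde m_l - \tilde c_l\|^2$ is at most the average squared distance from $\tilde m_l$ to the points of $\tilde O_l$ — this requires noting every point of $\tilde P$ is one of the $c_t$'s, so each term $\|\tilde m_l - \tilde p\|^2$ is an upper bound for $\min_t \|\tilde m_l - c_t\|^2 = \|\tilde m_l - \tilde c_l\|^2$, and then averaging. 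Everything else is a routine application of the two standard inequalities already invoked elsewhere in the paper.
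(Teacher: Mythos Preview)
Your proposal is correct and follows essentially the same approach as the paper's proof: both inequalities are obtained by exhibiting a feasible clustering and bounding its cost via the triangle inequality and $(a+b)^2\le 2(a^2+b^2)$, and for the second inequality the key snapping observation that every $\tilde p_i$ already lies in $\mathcal{C}$ (so $\|\tilde m_l-\tilde c_l\|\le\|\tilde m_l-\tilde p_i\|$) is exactly the argument the paper uses. The paper applies this pointwise bound directly to get $\|\tilde p_i-\tilde c_l\|^2\le 4\|\tilde p_i-\tilde m_l\|^2$ in one step, whereas you average it, but the arithmetic and final constants are identical.
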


\begin{proof}
We first prove the  inequality of $\tilde{\delta}^2_{opt}\leq 2\omega+2\delta^2_{opt}$. Consider any point $p_i \in P$. Let $Opt_{l}$ be the optimal cluster containing $p_i$. Then, we have
\begin{eqnarray}
||\tilde{p}_i-m_l||^2 &\leq& (||\tilde{p}_i-p_i||+||p_i-m_l||)^2\nonumber\\
&\leq& 2||\tilde{p}_i-p_i||^2+2||p_i-m_l||^2,\label{for-c1}
\end{eqnarray}
where the first inequality follows from triangle inequality, and the second inequality follows from the fact that $(a+b)^2\leq 2a^2+2b^2$ for any two real numbers $a$ and $b$. For both sides of (\ref{for-c1}), we take the averages over all the points in $P$, and obtain
\begin{eqnarray}
\frac{1}{n}\sum^k_{l=1}\sum_{p_i\in Opt_l}||\tilde{p}_i-m_l||^2 \leq  \frac{2}{n}\sum^n_{i=1}||\tilde{p}_i-p_i||^2+\frac{2}{n}\sum^k_{l=1}\sum_{p_i\in Opt_l}||p_i-m_l||^2.\label{for-c2}
\end{eqnarray}
Note that the left-hand side of (\ref{for-c2}) is not smaller than $\tilde{\delta}^2_{opt}$, since $\tilde{\delta}^2_{opt}$ is the optimal objective value of $k$-CMeans on $\tilde{P}$. For the right-hand side of (\ref{for-c2}), the first term $2\frac{1}{n}\sum^n_{i=1}||\tilde{p}_i-p_i||^2=2\omega$ (by the construction of $\tilde{P}$), and the second term $2\frac{1}{n}\sum^k_{l=1}\sum_{p_i\in Opt_l}||p_i-m_l||^2=2\delta^2_{opt}$. Thus, we have  $\tilde{\delta}^2_{opt}\leq 2\omega+2\delta^2_{opt}$.
 

Next, we show the inequality $\delta^2_{opt}([\mathcal{C}]^k)\leq 2\omega+8\tilde{\delta}^2_{opt}$. Consider $k$-CMeans clustering on $\tilde{P}$.  Let $\{\tilde{m}_1, \cdots, \tilde{m}_k\}$ be the optimal constrained mean points of $\tilde{P}$, and $\{\tilde{O}_1, \cdots, \tilde{O}_k\}$ be the corresponding optimal clusters.
Let $\{\tilde{c}_1, \cdots, \tilde{c}_k\}$ be the $k$-tuple in $[\mathcal{C}]^k$ with $\tilde{c}_l$ being the nearest point in $\mathcal{C}$ to $\tilde{m}_l$. Thus, by an argument similar to the one for (\ref{for-c1}), we have
\begin{eqnarray}
||\tilde{p}_i-\tilde{c}_l||^2 &\leq &2||\tilde{p}_i-\tilde{m}_l||^2+2||\tilde{m}_l-\tilde{c}_l||^2\leq4||\tilde{p}_i-\tilde{m}_l||^2. \label{for-c3}
\end{eqnarray}
for each $\tilde{p}_i\in\tilde{O}_l$. In (\ref{for-c3}), the last one follows from the facts that $\tilde{c}_l$ is the nearest point in $\mathcal{C}$ to $\tilde{m}_l$ and $\tilde{p}_i \in \mathcal{C}$, which implies that 
$||\tilde{m}_l-\tilde{c}_l||\leq ||\tilde{m}_l-\tilde{p}_i||$ (see Figure \ref{fig-tri}).  Summing both sides of (\ref{for-c3}) over all the points in $\tilde{P}$, we have
\begin{eqnarray}
\sum^k_{l=1}\sum_{\tilde{p}_i\in\tilde{O}_l}||\tilde{p}_i-\tilde{c}_l||^2 &\leq&  4\sum^k_{l=1}\sum_{\tilde{p}_i\in\tilde{O}_l}||\tilde{p}_i-\tilde{m}_l||^2. \label{for-c4}
\end{eqnarray}
Now,  consider the following clustering on $P$.  For each $p_i$, if $ \tilde{p}_i\in\tilde{O}_l$, we cluster it 
to the corresponding $\tilde{c}_l$.
 Then the objective value of the clustering is 
\begin{eqnarray}
\frac{1}{n}\sum^k_{l=1}\sum_{\tilde{p}_i\in\tilde{O}_l}||p_i-\tilde{c}_l||^2 &\leq&\frac{1}{n}\sum^k_{l=1}\sum_{\tilde{p}_i\in\tilde{O}_l}(2||p_i-\tilde{p}_i||^2+2||\tilde{p}_i-\tilde{c}_l||^2)\nonumber\\
&\leq& 2\frac{1}{n}\sum^n_{i=1}||p_i-\tilde{p}_i||^2+8\frac{1}{n}\sum^k_{l=1}\sum_{\tilde{p}_i\in\tilde{O}_l}||\tilde{p}_i-\tilde{m}_l||^2.\label{for-c7}
 \end{eqnarray}
The left-hand side of (\ref{for-c7}), $\frac{1}{n}\sum^k_{l=1}\sum_{\tilde{p}_i\in\tilde{O}_l}||p_i-\tilde{c}_l||^2$, is no smaller than $\delta^2_{opt}([\mathcal{C}]^k)$ (by the definition),  and the right-hand side of (\ref{for-c7}) is equal to $2\omega+8\tilde{\delta}^2_{opt}$. Thus, we  have  $\delta^2_{opt}([\mathcal{C}]^k)\leq 2\omega+8\tilde{\delta}^2_{opt}$.
\qed
\end{proof}


\begin{proof}[\textbf{of Theorem \ref{the-constant}}]
By the two inequalities in Lemma \ref{lem-ieq}, we know that $\delta^2_{opt}([\mathcal{C}]^k)\leq 18\omega+16\delta^2_{opt}$. It is obvious that the optimal objective value of the $k$-means clustering is no larger than that of $k$-CMeans on the same set of input points $P$. This implies that $\omega\leq \lambda \delta^2_{opt}$. Thus, we have 
\begin{eqnarray}
\delta^2_{opt}([\mathcal{C}]^k)\leq (18\lambda+16)\delta^2_{opt}.
\end{eqnarray}
So there exists one $k$-tuple in $[\mathcal{C}]^k$, which is able to induce an $(18\lambda+16)$-approximation.
 \qed
\end{proof}

\section{Selection Algorithms for  $k$-CMeans}
\label{sec-application}

As shown in Section \ref{sec-candidates}, a set of $k$-tuple candidates for the mean points of $k$-CMeans can be obtained by our Peeling-and-Enclosing algorithm. To determine the best candidate, we need a selection algorithm to compute the clustering for each $k$-tuple candidate, and select the one with the smallest objective value. Clearly, the key to designing a selection algorithm is to solve the Partition step ({\em i.e.,} generating the clustering) for each $k$-tuple candidate. 
We need to design a problem-specific algorithm for the Partition step, to satisfy the constraint of each individual problem. 

We consider all the constrained $k$-means clustering problems which are mentioned in Section \ref{sec-mainresult}, except for the uncertain data clustering, since Cormode and McGregor~\cite{CM08} have showed that it can be reduced to an ordinary $k$-means clustering problem. However, the $k$-median version of the uncertain data clustering does not have such a property. In Section \ref{sec-application2}, we will discuss how to obtain the $(1+\epsilon)$-approximation by applying our Peeling-and-Enclosing framework.




 
 \begin{figure}[h]
 \centering
  \subfloat[]{\label{fig-gather}\includegraphics[height=1.7in]{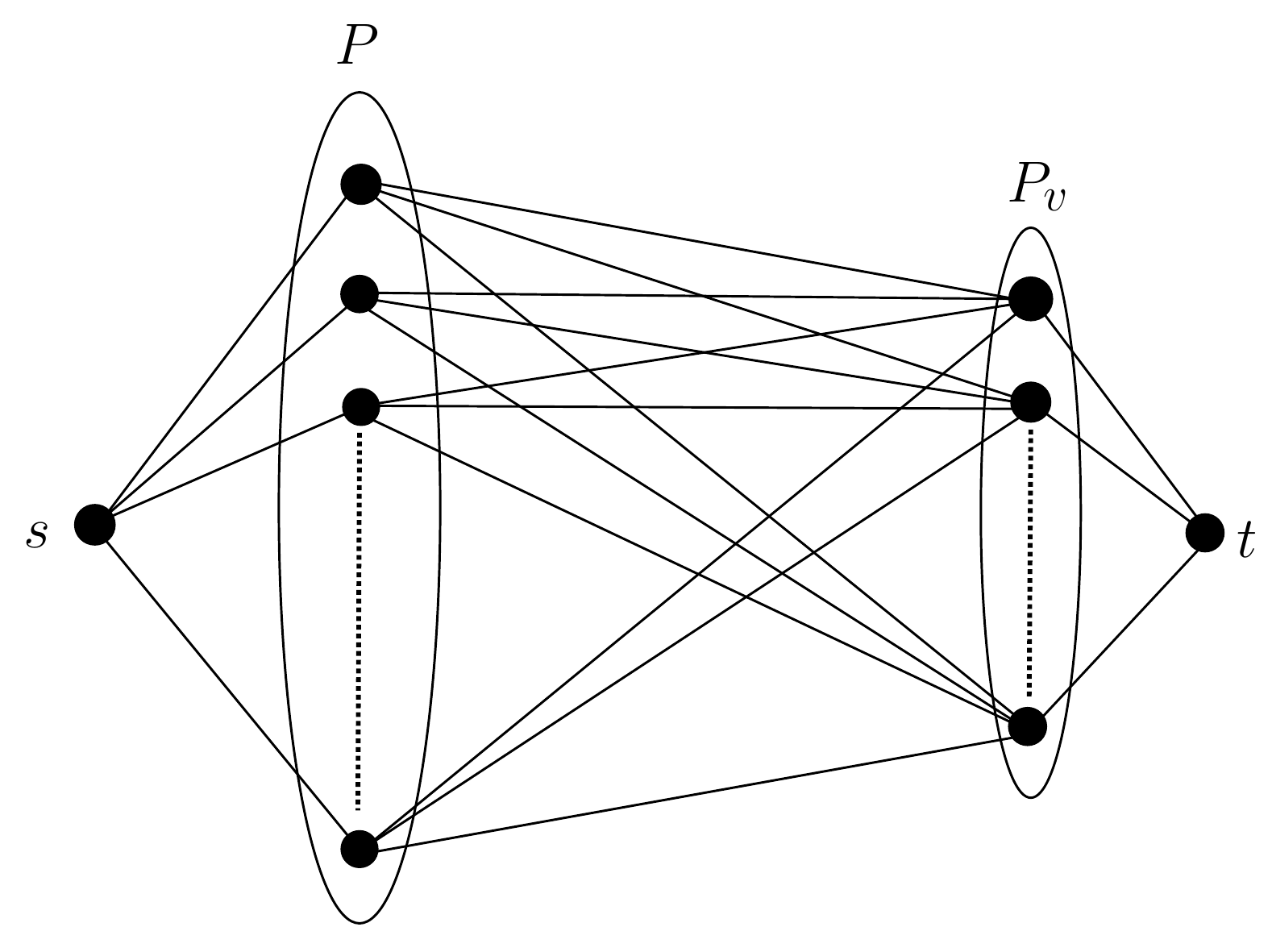}}
  \subfloat[]{\label{fig-diversity}\includegraphics[height=1.7in]{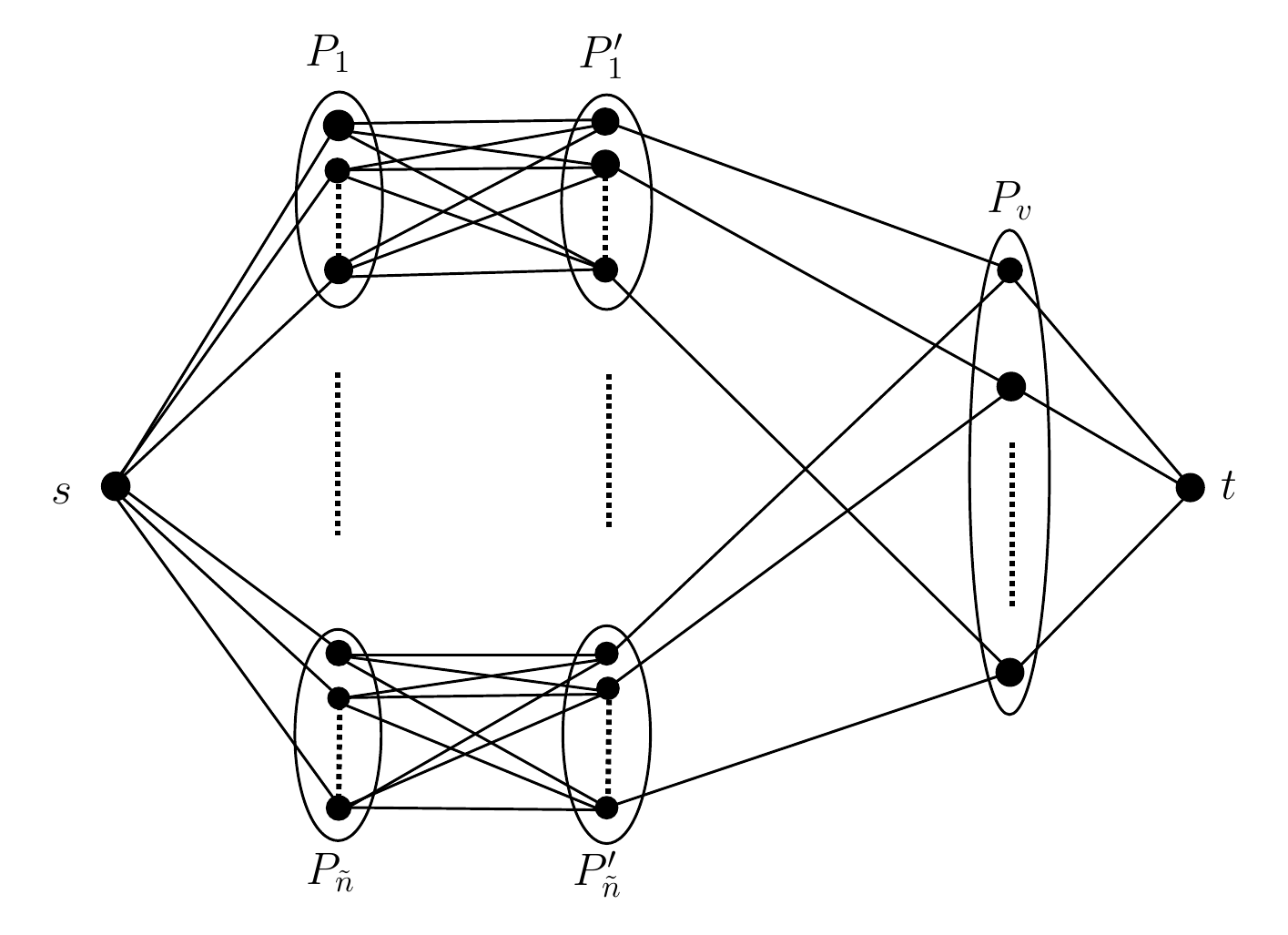}}
   \caption{Minimum cost circulations for $r$-gather clustering (a) and $l$-diversity clustering (b).}
\end{figure}

 \subsection{$r$-Gather $k$-means Clustering}


Let $P$ be a set of $n$ points in $\mathbb{R}^d$.  {\em $r$-Gather $k$-means clustering} (denoted by $(r,k)$-GMeans) on $P$ is the problem of  clustering $P$  into $k$ clusters with size at least $r$, such that the average squared Euclidean distance from each point in $P$ to the mean point of its cluster is minimized \cite{APF}. 


To solve the Partition problem of $(r,k)$-GMeans, we adopt the following strategy.  
For each $k$-tuple candidate $P_{v}=\{p_{v_1}, \cdots p_{v_k}\}$ returned by Algorithm~\ref{alg-kcmeans}, 
build a complete bipartite graph $G$ (see Figure \ref{fig-gather}): each vertex in the left column corresponds to a point in $P$, and each vertex in the right column represents a candidate mean point in $P_{v}$;  for each pair of vertices in different partite sets, connect them by an edge with the weight equal to their squared Euclidean distance. We can solve the Partition problem by finding the minimum cost matching in $G$: each vertex in the left has the supply $1$, and 
each vertex in the right has the demand $r$ and capacity $n$. After adding a source node $s$ connecting to all the verities in the left and a sink node $t$ connecting to all the vertices in the right, we can reduce the Partition problem to a {\em minimum cost circulation} problem, and solve it by using the algorithm in~\cite{Ecourse}. Denote by $V$ and $E$ the sets of vertices and edges of $G$. The running time for solving the minimum cost circulation problem is $O(|E|^2\log |V|+ |E|\cdot|V|\log^2 |V|)$~\cite{Orlin}.  In our case, $|E|=O(n)$ and $|V|=O(n)$ if $k$ is a fixed constant. Also, the time complexity for building $G$ is $O(nd)$. Thus, the total time for solving the Partition problem is $O\Big(n\big(n(\log n)^2+d\big)\Big)$~\footnote{In our problem, an integral solution is necessary for generating the clusters on $P$. Actually, since the demands and capacities are all integers, any optimal solution of the minimum cost circulation problem can be transformed to an integral solution without loss of the quality in $O(|P_v|\cdot |E|)=O(n)$ time~\cite{Ding18}. 
}. Together with the time complexity in Theorem~\ref{the-ptas}, we have the following theorem.

 
\begin{theorem}
\label{the-rgather}
There exists an algorithm yielding a $(1+\epsilon)$-approximation for $(r, k)$-GMeans with constant probability, in $O\Big(2^{poly(\frac{k}{\epsilon})}(\log n)^{k+1}n\big($$n\log n$$+d\big)\Big)$ time. 
\end{theorem}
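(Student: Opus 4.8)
The plan is to combine two ingredients that are already in place: the candidate-generation guarantee of Theorem~\ref{the-ptas} and a problem-specific solution to the Partition step for $(r,k)$-GMeans. By Theorem~\ref{the-ptas}, in $O(2^{poly(\frac{k}{\epsilon})}n(\log n)^{k+1}d)$ time Algorithm~\ref{alg-kcmeans} produces $O(2^{poly(\frac{k}{\epsilon})}(\log n)^{k})$ $k$-tuple candidates, and with constant probability at least one candidate $P_v=\{p_{v_1},\dots,p_{v_k}\}$ is such that assigning points optimally (subject to the constraint) yields a $(1+O(\epsilon))$-approximation of the constrained optimum. So it suffices to show that, for \emph{each} fixed $k$-tuple of centers, the Partition step for $r$-gather clustering — namely, assigning every point of $P$ to one of the $k$ given centers so that each center receives at least $r$ points and the total squared distance is minimized — can be solved exactly and efficiently.

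First I would reduce the Partition step to a min-cost flow problem. Build the bipartite graph $G$ described above: left vertices are the $n$ input points (each with supply $1$), right vertices are the $k$ candidate centers (each with demand $r$ and capacity $n$), and the edge from point $p$ to center $p_{v_l}$ has cost $\|p-p_{v_l}\|^2$. Adding a source $s$ feeding all left vertices and a sink $t$ drawn from all right vertices turns this into a min-cost circulation instance; since the lower bounds, upper bounds and supplies are all integral, an optimal solution can be taken (or rounded, in $O(n)$ time as noted in the footnote) to be integral, hence a genuine assignment of points to centers. Optimality of the circulation gives the optimal constrained clustering cost for that particular $k$-tuple, which is exactly what the Partition step demands. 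Running the circulation algorithm of \cite{Orlin} on $G$, where $|V|=O(n)$ and $|E|=O(kn)=O(n)$ for fixed $k$, costs $O(|E|^2\log|V|+|E|\cdot|V|\log^2|V|)=O(n^2\log^2 n)$, and constructing $G$ costs $O(nd)$, so the Partition step per candidate is $O(n(n\log^2 n+d))$.

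Then I would assemble the algorithm: generate all candidates via Algorithm~\ref{alg-kcmeans}, run the Partition (min-cost circulation) procedure on each, and output the clustering of minimum cost over all candidates. Correctness follows because one candidate is within $(1+O(\epsilon))$ of optimal and we evaluate the true optimal Partition cost for every candidate, so the minimum we return is also within $(1+O(\epsilon))$; rescaling $\epsilon$ absorbs the constant factor. For the running time, multiply the number of candidates $O(2^{poly(\frac{k}{\epsilon})}(\log n)^{k})$ by the per-candidate Partition cost $O(n(n\log n+d))$ and add the candidate-generation time; the former dominates, giving $O\big(2^{poly(\frac{k}{\epsilon})}(\log n)^{k+1}n(n\log n+d)\big)$, matching the stated bound. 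The main obstacle is the correctness and exactness of the Partition reduction — in particular verifying that the lower-bound ($\ge r$) constraints are faithfully captured by demands in the circulation and that integrality is preserved — but this is standard min-cost flow theory; everything else is bookkeeping of the two time bounds.
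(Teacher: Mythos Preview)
Your proposal is correct and follows essentially the same approach as the paper: invoke Theorem~\ref{the-ptas} for candidate generation, reduce the Partition step for each $k$-tuple to a minimum-cost circulation on the bipartite graph (points with supply $1$, centers with demand $r$ and capacity $n$, edge costs equal to squared distances, plus source and sink), solve it via \cite{Orlin}, and combine the time bounds. The only slip is a bookkeeping typo---you correctly compute the per-candidate Partition cost as $O(n(n\log^2 n+d))$ but then write $O(n(n\log n+d))$ when multiplying; using the former and adding the candidate-generation time indeed yields the stated $O\big(2^{poly(k/\epsilon)}(\log n)^{k+1}n(n\log n+d)\big)$.
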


\subsection{$r$-Capacity $k$-means Clustering}

{\em $r$-Capacity $k$-means clustering} (denoted by $(r, k)$-CaMeans) \cite{KS00} on  a set $P$ of $n$ points in $\mathbb{R}^d$  is the problem of clustering $P$ into $k$ clusters with size at most $r$, such that the average squared Euclidean distance from each point in $P$ to the mean point of its cluster is minimized. 

We can solve the Partition problem  of $(r, k)$-CaMeans in a way similar to that of  $(r, k)$-GMeans; the only difference is that the demand $r$ is replaced by the capacity $r$. 

\begin{theorem}
\label{the-rcapacity}
%
There exists an algorithm yielding a $(1+\epsilon)$-approximation for $(r, k)$-CaMeans with constant probability, in $O\Big(2^{poly(\frac{k}{\epsilon})}(\log n)^{k+1}n\big($$n\log n$$+d\big)\Big)$ time.\end{theorem}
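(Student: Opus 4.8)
The plan is to reduce the Partition step of $(r,k)$-CaMeans to a minimum cost circulation problem, exactly mirroring the construction already used for $(r,k)$-GMeans, and then invoke Theorem~\ref{the-ptas} together with Theorem~\ref{the-constant} to assemble the full approximation algorithm. First I would observe that by Theorem~\ref{the-ptas} (applied with the upper bound $\Delta$ obtained via Theorem~\ref{the-constant} and an unconstrained $k$-means algorithm such as \cite{KSS}), Algorithm~\ref{alg-kcmeans} produces in $O(2^{poly(k/\epsilon)}n(\log n)^{k+1}d)$ time a collection of $O(2^{poly(k/\epsilon)}(\log n)^k)$ candidate $k$-tuples, one of which induces a $(1+O(\epsilon))$-approximation of $(r,k)$-CaMeans \emph{provided} the Partition step is solved optimally for that tuple. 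So it remains to show that for each fixed candidate $k$-tuple $P_v=\{p_{v_1},\dots,p_{v_k}\}$, the optimal assignment of the points of $P$ to these $k$ centers subject to each cluster having size at most $r$ can be computed efficiently.

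Next I would set up the transportation network: a complete bipartite graph $G$ with left vertices $P$ (each with supply $1$), right vertices $P_v$ (each with capacity $r$ and demand $0$), edge weights equal to squared Euclidean distances, a source $s$ feeding all left vertices and a sink $t$ collected from all right vertices. A minimum cost circulation / min-cost flow of value $n$ in this network yields, by integrality of optimal flows when all capacities are integral, an integral assignment; and the cost of that circulation equals the $k$-means cost of the induced clustering, which is exactly the Partition objective subject to the capacity constraint. The only structural difference from the $(r,k)$-GMeans construction is that the per-center lower bound $r$ (a demand) is replaced by a per-center upper bound $r$ (a capacity), so feasibility is automatic as long as $kr\ge n$ (otherwise the instance has no valid clustering at all); correctness and the $O(|E|^2\log|V| + |E|\,|V|\log^2|V|)$ running time from \cite{Orlin} carry over verbatim. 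With $|E|=O(n)$ and $|V|=O(n)$ for fixed $k$, plus $O(nd)$ to build $G$, each candidate is processed in $O(n(n\log n + d))$ time.

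Finally I would combine the pieces: run the selection algorithm over all $O(2^{poly(k/\epsilon)}(\log n)^k)$ candidate tuples, compute the optimal capacitated Partition for each via min-cost circulation, and output the clustering of least cost. Its cost is at most that of the clustering induced by the good candidate from Theorem~\ref{the-ptas}, hence a $(1+O(\epsilon))$-approximation with constant probability; rescaling $\epsilon$ gives the $(1+\epsilon)$ bound. The total time is the candidate-generation time $O(2^{poly(k/\epsilon)}n(\log n)^{k+1}d)$ plus the number of candidates times the per-candidate Partition time, i.e. $O\big(2^{poly(k/\epsilon)}(\log n)^k \cdot n(n\log n + d)\big)$, which is dominated by $O\big(2^{poly(k/\epsilon)}(\log n)^{k+1}n(n\log n + d)\big)$ as claimed.

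I do not expect any genuine obstacle here — this theorem is essentially a corollary of the $(r,k)$-GMeans analysis. The one point requiring a sentence of care is verifying that the capacity-constrained min-cost flow indeed recovers the \emph{optimal} constrained Partition (not merely a feasible one) and that an integral optimum exists; both follow from standard network-flow theory since all supplies, demands, and capacities are integers, so an optimal basic solution is integral, and a non-integral optimum can be rounded to an integral one of equal cost in $O(|P_v|\cdot|E|)=O(n)$ time as noted for the gather case.
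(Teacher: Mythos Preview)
Your proposal is correct and matches the paper's approach exactly: the paper simply states that the Partition problem for $(r,k)$-CaMeans is solved in the same way as for $(r,k)$-GMeans, with the sole change that the per-center demand $r$ is replaced by a per-center capacity $r$, and then invokes the same running-time analysis. Your write-up is in fact more detailed than the paper's one-line justification.
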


\subsection{$l$-Diversity $k$-means Clustering}

Let $P=\bigcup^{\tilde{n}}_{i=1}P_i$ be a set of colored points in $\mathbb{R}^d$ and $\sum^{\tilde{n}}_{i=1}|P_i|=n$, where the points in each $P_i$ share the same color.
{\em $l$-Diversity $k$-means clustering} (denoted by $(l, k)$-DMeans) on $P$  is the problem of clustering $P$ into $k$ clusters such that the points sharing the same color inside each cluster have a fraction no more than $\frac{1}{l}$ for some $l>1$, and the average squared Euclidean distance from each point in $P$ to the mean point of its cluster is  minimized.

Similar to $(r, k)$-GMeans, we reduce the Partition problem of $(l,k)$-DMeans to a minimum cost circulation problem for each $k$-tuple candidate $P_{v}=\{p_{v_1}, \cdots p_{v_k}\}$. The challenge is that we do not know the size of each resulting cluster, and therefore it is difficult to control the flow on each edge if directly using the bipartite graph built in Figure \ref{fig-gather}. Instead, we add a set of ``gates'' between the input $P$ and the $k$-tuple $P_v$ (see Figure~\ref{fig-diversity}). First, following the definition of $(l, k)$-DMeans, we partition the ``vertices'' $P$ into $\tilde{n}$ groups $\{P_1, \cdots, P_{\tilde{n}}\}$. 
For each $P_i$, we generate a new set of vertices (i.e., the gates) $P'_i=\{c^i_1, \cdots, c^i_k\}$, and connect each pair of $p\in P_i$ and $c^i_j\in P'_i$ by an edge  with  weight  $||p-p_{v_j}||^2$. We also connect each pair of $c^i_j$ and $p_{v_j}$ by an edge with  weight  $0$. In Figure~\ref{fig-diversity}, the size of vertices $|V|=n+k\tilde{n}+k+2=O(kn)$, and the size of edges $|E|=n+kn+k\tilde{n}+k=O(kn)$. Below we show that we can use $c^i_j$ to control the flow from $P_i$ to $p_{v_j}$ by setting appropriate capacities and demands. 

Let $t=\max_{1\leq i\leq \tilde{n}}|P_{i}|$. We consider the value $\lfloor|Opt_j|/l\rfloor$ that is the upper bound on the number of points with the same color in $Opt_j$ (recall $Opt_j$ is the $j$-th optimal cluster defined in Section~\ref{sec-candidates}). The upper bound $\lfloor|Opt_j|/l\rfloor$  can be either between  $1$ and $t$ or larger than $t$. Clearly, if the upper bound is larger than $t$, there is no need to consider the upper bound anymore. Thus, we can enumerate all the $(t+1)^k$ cases to guess the upper bound $\lfloor|Opt_j|/l\rfloor$ for $1\leq j\leq k$. Let $u_{j}$ be the guessed upper bound for $Opt_j$.  If $u_{j}$ is no more than $t$, then each $c^i_j$, $1\leq i\leq \tilde{n}$,  has the capacity $u_j$, and $p_{v_j}$ has the demand $l\times u_j$ and capacity $l\times (u_j+1)-1$. Otherwise ({\em i.e.,} $u_{j} >t$),  set the capacity of each  $c^i_j$,  $1\leq i\leq \tilde{n}$, to be $n$, and the demand and capacity of $p_{v_j}$ to be $l\times (t+1)$ and $n$, respectively. By using the algorithm in \cite{Orlin}, we solve the minimum cost circulation problem for each of the $(t+1)^k$ guesses.

\begin{theorem}
\label{the-diversity}
For any colored point set $P=\bigcup^{\tilde{n}}_{i=1}P_i$ in $\mathbb{R}^{d}$ with $n=|P|$ and $t=\max_{1\leq i\leq \tilde{n}}|P_i|$, there exists an algorithm yielding, 
in $O\Big(2^{poly(\frac{k}{\epsilon})}(\log n)^{k+1}(t+1)^kn\big($$n\log n$$+d\big)\Big)$ time, a $(1+\epsilon)$-approximation for $(l, k)$-DMeans with constant probability.
\end{theorem}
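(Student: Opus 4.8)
The plan is to instantiate the general two-stage scheme---generate candidate center tuples, then run a problem-specific Partition (selection) algorithm on each---for the $l$-diversity constraint. First I would invoke Theorem~\ref{the-ptas}: running Algorithm~\ref{alg-kcmeans} on $P$ while ignoring the colors produces, in $O(2^{poly(k/\epsilon)}n(\log n)^{k+1}d)$ time, a family $\mathcal{F}$ of $O(2^{poly(k/\epsilon)}(\log n)^k)$ candidate $k$-tuples, and---by Lemmas~\ref{lem-induction} and~\ref{lem-equal}, whose proofs never use the constraint---with constant probability one tuple $P_v^\star=\{p_{v_1},\dots,p_{v_k}\}\in\mathcal{F}$ is such that, paired with the optimal $l$-diverse partition of $P$ around these centers, it realizes cost at most $(1+O(\epsilon))\delta^2_{opt}$. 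The upper bound $\Delta$ that Algorithm~\ref{alg-kcmeans} requires is obtained from Theorem~\ref{the-constant} by running a constant-factor unconstrained $k$-means algorithm and solving the Partition step on each $k$-tuple in $[\mathcal{C}]^k$ with the very circulation routine described next; since $|[\mathcal{C}]^k|=k^k$ is $O(1)$ for fixed $k$, this does not affect the asymptotics.

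Second, I would implement the Partition step for a fixed $P_v\in\mathcal{F}$ by the gated minimum-cost circulation of Figure~\ref{fig-diversity}, enumerating over all $(t+1)^k$ tuples $(u_1,\dots,u_k)$ that guess the values $\lfloor|Opt_j|/l\rfloor$ (capped at $t$, beyond which the color bound is vacuous). For each guess I set the capacity of every gate $c^i_j$ to $u_j$ (or $n$ when $u_j>t$) and the demand/capacity pair on $p_{v_j}$ to $(l u_j,\, l(u_j+1)-1)$ (or $(l(t+1),\, n)$), build the network in $O(nd)$ time, and compute a minimum-cost integral circulation by the algorithm of~\cite{Orlin}, which runs in $O(n(n\log n+d))$ time since $|V|,|E|=O(kn)=O(n)$ for fixed $k$; integrality is automatic because all demands and capacities are integral, and a fractional optimum can be integralized in $O(|E|)$ time without increasing cost. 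The arcs $p\to c^i_j$ carry weight $\|p-p_{v_j}\|^2$, so the circulation value equals the $k$-means cost of the induced assignment, and I would output, over all $P_v\in\mathcal{F}$ and all guesses, the feasible clustering of least cost.

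The correctness argument has two halves. In one direction, any feasible circulation induces an $l$-diverse clustering: the gate capacity forces at most $u_j$ points of any single color into cluster $j$, while the demand forces the size of cluster $j$ to be at least $l u_j$, so the most frequent color occupies a fraction at most $u_j/(l u_j)=1/l$ (and in the capped case at most $t/(l(t+1))<1/l$). In the other direction, for the candidate $P_v^\star$ and the correct guess $u_j=\lfloor|Opt_j|/l\rfloor$, the optimal $l$-diverse partition around $P_v^\star$ is itself a feasible circulation of the same cost---here one checks that $|Opt_j|$ lies in the window $[l u_j,\,l(u_j+1)-1]$ and that each $P_i\cap Opt_j$ has size at most $u_j$---so the minimum-cost circulation is no costlier. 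Chaining these with Lemma~\ref{lem-equal} and renaming $O(k^3)\epsilon\mapsto\epsilon$ (legitimate since $k$ is fixed) yields the $(1+\epsilon)$-approximation with constant probability. The time bound follows by multiplying the $O(2^{poly(k/\epsilon)}(\log n)^k)$ candidates by the $(t+1)^k$ guesses and the $O(n(n\log n+d))$ per-instance cost, and adding the candidate-generation time $O(2^{poly(k/\epsilon)}n(\log n)^{k+1}d)$; the sum is bounded by the claimed expression.

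The step I expect to be the main obstacle is the feasibility half of the correctness argument: one must verify simultaneously that the demand/capacity window $[l u_j,\,l(u_j+1)-1]$ on each $p_{v_j}$ is \emph{satisfiable} by the true optimal $l$-diverse clustering under the correct guess (which hinges on the elementary but necessary identity $l\lfloor|Opt_j|/l\rfloor\le|Opt_j|\le l(\lfloor|Opt_j|/l\rfloor+1)-1$), and that it \emph{forces} $l$-diversity for \emph{every} feasible circulation, including the boundary regime $u_j>t$ where the gates are essentially uncapacitated and feasibility of the cluster-size lower bound $l(t+1)$ must be argued separately. Everything else---the random-sampling success probability inherited from Theorem~\ref{the-ptas}, the integrality rounding, and the arithmetic of the running time---is routine.
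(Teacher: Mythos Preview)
Your proposal is correct and follows exactly the paper's approach: Peeling-and-Enclosing for candidate generation, then the gated circulation of Figure~\ref{fig-diversity} with the $(t+1)^k$ enumeration of size guesses for the Partition step, combined via Theorem~\ref{the-ptas} and Lemma~\ref{lem-equal}. One small wording slip in the feasibility half you flagged as the main obstacle: the clustering you should exhibit as a feasible circulation under the guess $u_j=\lfloor|Opt_j|/l\rfloor$ is not ``the optimal $l$-diverse partition around $P_v^\star$'' (whose cluster sizes need not equal $|Opt_j|$, so the window check would not apply) but rather the global optimum $\{Opt_1,\dots,Opt_k\}$ itself, assigned to $p_{v_1},\dots,p_{v_k}$---which is precisely what your checks on $|Opt_j|\in[lu_j,\,l(u_j{+}1){-}1]$ and $|P_i\cap Opt_j|\le u_j$ actually verify, and whose cost is already bounded by $(1+O(\epsilon))\delta_{opt}^2$ via Lemma~\ref{lem-equal}. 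With that relabeling the argument is complete and matches the paper.
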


\noindent\textbf{Note:} We can solve  the problem  in \cite{LYZ} by slightly changing the above Partition algorithm. In \cite{LYZ}, it requires that  the size of each cluster is at least $l$ and the points inside each cluster have distinct colors, which means that the upper bound $u_j$ is always equal to $1$ for each $1\leq j\leq k$. Thus,  there is no need to guess the upper bounds in our Partition algorithm. We can simply set the capacity for each $c^i_j$ to be $1$, and the demand for each $p_{v_j}$ to be $l$. With this change, our algorithm yields a $(1+\epsilon)$-approximation with constant probability in $O\Big(2^{poly(\frac{k}{\epsilon})}(\log n)^{k+1}n\big($$n\log n$$+d\big)\Big)$ time.

\subsection{Chromatic $k$-means Clustering}

 Let $P=\bigcup^{\tilde{n}}_{i=1}P_i$ be a set of colored points in $\mathbb{R}^d$ and $\sum^{\tilde{n}}_{i=1}|P_i|=n$, where the points in each $P_i$ share the same color.
 {\em Chromatic $k$-means clustering} (denoted by $k$-ChMeans) \cite{ADH,DX11} on $P$ is the problem of clustering $P$ into $k$ clusters such that no pair of points with the same color is clustered into the same cluster, and the average squared Euclidean distance from each point in $P$ to the mean point of its cluster is  minimized.
  

To satisfy the chromatic requirement, each $P_i$ should have a size no more than $k$. Given a $k$-tuple candidate $P_v=\{p_{v_1}, \cdots, p_{v_k}\}$, we can consider the partition problem for each $P_i$ independently, since there is no mutual constraint among them. It is easy to see that finding a partition of $P_i$ is equivalent to computing a minimum cost one-to-one matching between $P_i$ and $P_v$, where the cost of the matching between any $p\in P_i$ and $p_{v_j}\in P_v$ is their squared Euclidean distance. We can build this bipartite graph in $O(k^2d)$ time, and solve this matching problem by using Hungarian algorithm in $O(k^3)$ time. Thus, the running time of the Partition step for each $P_v$ is $O(k^2(k+d)n)$.

\begin{theorem}
\label{the-chromatic}
%
There exists an algorithm yielding a $(1+\epsilon)$-approximation for $k$-ChMeans with constant probability, in $O\big(2^{poly(\frac{k}{\epsilon})}(\log n)^{k+1}nd\big)$ time. 

\end{theorem}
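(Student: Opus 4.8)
The plan is to combine the candidate-generation guarantee of Theorem~\ref{the-ptas} with a problem-specific Partition algorithm for $k$-ChMeans, exactly in the style already used for $(r,k)$-GMeans and $(l,k)$-DMeans. First I would invoke Algorithm~\ref{alg-kcmeans}: in $O\big(2^{poly(k/\epsilon)}n(\log n)^{k+1}d\big)$ time it produces $O\big(2^{poly(k/\epsilon)}(\log n)^{k})$ $k$-tuple candidates, and with constant probability one of them, say $P_v=\{p_{v_1},\dots,p_{v_k}\}$, together with an optimal solution to its Partition step, yields a $(1+O(\epsilon))$-approximation of $k$-ChMeans. So it remains to show that for each fixed candidate $P_v$ the Partition step can be solved exactly and fast, and then to account for the total running time.

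For the Partition step, I would first observe that if some color class $P_i$ has $|P_i|>k$ then the instance is infeasible (pigeonhole), so we may assume $|P_i|\le k$ for all $i$. The key structural point is that the chromatic constraint is ``local to each color class'': it only forbids two points of the \emph{same} $P_i$ sharing a cluster, and imposes nothing across different color classes. Hence, given the fixed centers $P_v$, the cost $\sum_{p\in P}\|p-(\text{its assigned center})\|^2$ decomposes as $\sum_{i=1}^{\tilde n}\big(\text{cost of assigning }P_i\big)$, and the feasible assignments of $P_i$ are exactly the injections from $P_i$ into $P_v$. Thus the optimal Partition is obtained by, for each $i$ independently, computing a minimum-cost injective matching between $P_i$ (left side) and $P_v$ (right side), where edge $(p,p_{v_j})$ has weight $\|p-p_{v_j}\|^2$. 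This is a standard assignment problem on a bipartite graph with $O(k)$ vertices on each side; building the weighted graph costs $O(k^2 d)$ (computing $O(k^2)$ squared distances in $\mathbb R^d$), and the Hungarian algorithm solves it in $O(k^3)$ time. Summing over the $\tilde n\le n$ color classes gives $O\big(k^2(k+d)n\big)$ per candidate; since $k$ is a fixed constant this is $O(nd)$.

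Finally I would multiply the per-candidate Partition cost by the number of candidates and add the candidate-generation cost from Theorem~\ref{the-ptas}, giving a total of $O\big(2^{poly(k/\epsilon)}(\log n)^{k} \cdot nd\big)+O\big(2^{poly(k/\epsilon)}n(\log n)^{k+1}d\big)=O\big(2^{poly(k/\epsilon)}(\log n)^{k+1}nd\big)$, which dominates. Selecting the candidate of minimum clustering cost over all candidates then outputs, with constant probability, a $(1+\epsilon)$-approximation (after rescaling $\epsilon$ by the constant factor $O(k^3)$ absorbed into $poly(k/\epsilon)$, as in Lemma~\ref{lem-equal}). This establishes Theorem~\ref{the-chromatic}.

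The main obstacle — really the only nontrivial point — is arguing that the Partition step genuinely decomposes over color classes and that the resulting per-class subproblem is exactly a minimum-cost bipartite matching (equivalently, that an optimal fractional assignment can be taken integral, which here follows because the per-class constraint polytope is the convex hull of injections, i.e. a face of the assignment/Birkhoff polytope and hence integral). Everything else — feasibility pigeonhole, the quadratic-cost decomposition $\sum_p\|p-c\|^2$ summing per class, and the running-time bookkeeping — is routine. One should also note, as in Remark~\ref{rem-ptas}, that repeating the whole procedure $O(\log n)$ times boosts the success probability to $1-1/n$ at the cost of an extra $O(\log n)$ factor.
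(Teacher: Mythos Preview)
Your proposal is correct and follows essentially the same approach as the paper: decompose the Partition step over color classes (since the chromatic constraint couples only points of the same color), solve each per-class subproblem as a minimum-cost bipartite matching via the Hungarian algorithm in $O(k^2d+k^3)$ time, and combine with the candidate-generation bound of Theorem~\ref{the-ptas}. The paper's argument is the same, just stated more tersely; your additional remarks on integrality and the $\epsilon$-rescaling are fine elaborations but not needed beyond what the paper already assumes.
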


\subsection{Fault Tolerant $k$-means Clustering}

{\em Fault Tolerant $k$-means clustering} (denoted by $(l,k)$-FMeans) \cite{SS03} on a set $P$ of $n$ points in $\mathbb{R}^d$ and a given integer $1\leq l\leq k$ is the problem of 
finding $k$ points $\mathcal{C}=\{c_1, \cdots, c_k\}\subset\mathbb{R}^d$, such that the average of the total squared distances from each point in $P$ to its $l$ nearest points in $\mathcal{C}$ is minimized. 

To solve the Partition problem of $(l,k)$-FMeans, our idea is to reduce $(l,k)$-FMeans to $k$-ChMeans, and use the Partition algorithm for $k$-ChMeans to generate the desired clusters. The reduction simply makes $l$ monochromatic copies $\{p^1_i, \cdots, p^l_i\}$ for each $p_i\in P$.
%
The following lemma shows the relation of the two problems.

\begin{lemma}
\label{lem-freduce}
For any constant $\lambda \geq1$, a $\lambda$-approximation of $(l, k)$-FMeans on $P$ is equivalent to a $\lambda$-approximation of $k$-ChMeans on $\bigcup^n_{i=1}\{p^1_i, \cdots, p^l_i\}$.
\end{lemma}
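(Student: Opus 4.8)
The plan is to establish the claimed equivalence by exhibiting efficient, cost‑controlled conversions in both directions, all resting on a single fact: once a partition is fixed, the mean of each part is its best representative, i.e.\ Claim~1 in the proof of Lemma~\ref{lem-close}. Write $P'=\bigcup_{i=1}^n\{p^1_i,\dots,p^l_i\}$, where $p^1_i,\dots,p^l_i$ are copies of $p_i$ all sharing one color and colors differ across distinct $i$. Each color class of $P'$ has size $l\le k$, so a chromatic $k$‑clustering of $P'$ always exists and both problems are feasible. I will compare un‑normalized total costs throughout; the normalizations ($\frac1n$ for $(l,k)$-FMeans versus $\frac1{nl}$ for $k$-ChMeans on $P'$) differ by the fixed factor $l$ and hence do not affect approximation ratios. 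For a point set $p_i$ and a set of $k$ centers $\mathcal{C}$, let $N_l(p_i,\mathcal{C})$ denote a choice of $l$ nearest centers (ties broken arbitrarily), and recall that $\sum_{c\in N_l(p_i,\mathcal{C})}\|p_i-c\|^2$ is the minimum of $\sum_{c\in S}\|p_i-c\|^2$ over all size-$l$ subsets $S\subseteq\mathcal{C}$.

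First I would handle the direction from $k$-ChMeans to $(l,k)$-FMeans. Let $\mathcal{S}=\{S_1,\dots,S_k\}$ be any feasible chromatic clustering of $P'$ with means $o_1,\dots,o_k$ and cost $\sum_j\sum_{q\in S_j}\|q-o_j\|^2$, and take $\mathcal{C}=\{o_1,\dots,o_k\}$ as an $(l,k)$-FMeans solution on $P$. Because $\mathcal{S}$ is chromatic, the $l$ copies of each $p_i$ lie in $l$ distinct clusters, say with means $o_{j_1},\dots,o_{j_l}$, so their contribution to the chromatic cost equals $\sum_{t=1}^l\|p_i-o_{j_t}\|^2\ge\sum_{c\in N_l(p_i,\mathcal{C})}\|p_i-c\|^2$. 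Summing over $i$, the fault‑tolerant cost of $\mathcal{C}$ is at most the chromatic cost of $\mathcal{S}$; in particular the optimum of $(l,k)$-FMeans on $P$ is at most the optimum of $k$-ChMeans on $P'$, and a $\lambda$-approximate $\mathcal{S}$ yields an $\mathcal{C}$ of cost at most $\lambda$ times the $k$-ChMeans optimum.

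Next I would handle the reverse direction. Given any $\mathcal{C}=\{c_1,\dots,c_k\}$ for $(l,k)$-FMeans, build a chromatic clustering of $P'$ by assigning the $l$ copies of each $p_i$ bijectively to the $l$ clusters indexed by $N_l(p_i,\mathcal{C})$; feasibility holds since these are $l$ distinct clusters. Using the points of $\mathcal{C}$ as cluster representatives, this clustering has cost exactly the fault‑tolerant cost of $\mathcal{C}$, because cluster $j$ then collects precisely the copies of those $p_i$ with $c_j\in N_l(p_i,\mathcal{C})$. Replacing each representative $c_j$ by the true mean of its cluster only decreases the cost, by Claim~1 (an empty cluster, possible when $l<k$, contributes $0$ and may keep $c_j$ or be seeded with an arbitrary point). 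Hence the optimum of $k$-ChMeans on $P'$ is at most that of $(l,k)$-FMeans on $P$, and a $\lambda$-approximate $\mathcal{C}$ yields a chromatic clustering of cost at most $\lambda$ times the $(l,k)$-FMeans optimum.

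Combining the two inequalities gives equality of the two optima, so each conversion sends a $\lambda$-approximate solution of one problem to a $\lambda$-approximate solution of the other, and both conversions only require computing $l$ nearest centers among $k$ and taking cluster means, hence run in time polynomial in $n,k,d$. I expect the only delicate point to be lining up the two ``representative'' conventions rather than any individual estimate: $k$-ChMeans forces the representative of a part to be its mean, while $(l,k)$-FMeans permits arbitrary centers but forces each point onto its $l$ nearest, so one direction must invoke Claim~1 (means beat arbitrary centers) and the other the optimality of the nearest‑$l$ assignment, together with a brief treatment of the degenerate empty‑cluster case when $l<k$.
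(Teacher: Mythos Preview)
Your proposal is correct and follows essentially the same approach as the paper: both directions use the natural map that sends the $l$ copies of each $p_i$ to the $l$ nearest centers, and both argue that this preserves cost so that the optima and approximation ratios coincide. Your write-up is in fact a bit more careful than the paper's, which frames the correspondence as a cost-preserving bijection and glosses over the distinction between arbitrary centers and cluster means (precisely the place where you invoke Claim~1) as well as the possibility of empty clusters when $l<k$.
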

\begin{proof}
We build a bijection between the solutions of $(l, k)$-FMeans and $k$-ChMeans. First, we consider the mapping from $(l,k)$-FMeans to $k$-ChMeans. Let $\mathcal{C}=\{c_1, \cdots, c_k\}$ be the $k$ mean points of $(l, k)$-FMeans, and $\{c_{i(1)}, \cdots, c_{i(l)}\}$ $\subset \mathcal{C}$ be the $l$  nearest mean points to each $p_i \in P$.  If using $\mathcal{C}$ as the $k$ mean points of $k$-ChMeans on $\bigcup^n_{i=1}\{p^1_i, \cdots, p^l_i\}$, the $l$ copies $\{p^1_i, \cdots, p^l_i\}$ of $p_{i}$ will be respectively clustered to the $l$ clusters of $\{c_{i(1)}, \cdots, c_{i(l)}\}$ to minimize the cost. 

Now consider the mapping from $k$-ChMeans to $(l,k)$-FMeans. Let $\mathcal{C}'=\{c'_1, \cdots, c'_k\}$ be the $k$ mean points of $k$-ChMeans. For each $i$, $\{c'_{i(1)}, \cdots, c'_{i(l)}\}$ are the mean points of the $l$ clusters that $\{p^1_i, \cdots, p^l_i\}$ are clustered to.  It is easy to see that  the $l$ nearest mean points of $p_{i}$ are $\{c'_{i(1)}, \cdots, c'_{i(l)}\}$ if we use $\mathcal{C}'$ as the $k$ mean points of $(l,k)$-FMeans. 
 
With this bijection, we can pair up the solutions to the two problems. Clearly, each pair of  solutions to $(l, k)$-FMeans and $k$-ChMeans formed by the bijection have the equal objective value. Further, their optimal objective values are equal to each other, and for any pair of solutions, their approximation ratios are the same. Thus, Lemma~\ref{lem-freduce} is true.
\qed
\end{proof}

With Lemma \ref{lem-freduce}, we immediately have the following theorem.

\begin{theorem}
\label{the-ft}

There exists an algorithm yielding a $(1+\epsilon)$-approximation for $(l, k)$-FMeans with constant probability, in $O\big(2^{poly(\frac{k}{\epsilon})}(\log n)^{k+1}nd\big)$ time.

\end{theorem}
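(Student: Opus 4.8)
The plan is to combine Lemma~\ref{lem-freduce} with the already established algorithm for $k$-ChMeans (Theorem~\ref{the-chromatic}). First I would observe that the reduction in Lemma~\ref{lem-freduce} is purely syntactic: given the input $P=\{p_1,\dots,p_n\}$ of $(l,k)$-FMeans, form the colored point set $\tilde P=\bigcup_{i=1}^n\{p_i^1,\dots,p_i^l\}$, where for each $i$ the $l$ copies $p_i^1,\dots,p_i^l$ all sit at the location of $p_i$ and carry $l$ fresh, mutually distinct colors (one ``color class'' per original point, of size exactly $l$). This set has $ln$ points, and since $l\le k$ it can be built in $O(lnd)=O(knd)$ time, which is absorbed into the stated bound. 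By Lemma~\ref{lem-freduce}, any $\lambda$-approximate solution to $k$-ChMeans on $\tilde P$ yields, via the stated bijection on center tuples, a $\lambda$-approximate solution to $(l,k)$-FMeans on $P$ with the same objective value; in particular a $(1+\epsilon)$-approximation is preserved.

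Next I would invoke Theorem~\ref{the-chromatic}: running the Peeling-and-Enclosing algorithm (Algorithm~\ref{alg-kcmeans}) on $\tilde P$ together with the chromatic selection algorithm produces, with constant probability, a $(1+\epsilon)$-approximation of $k$-ChMeans on $\tilde P$ in $O\big(2^{poly(k/\epsilon)}(\log N)^{k+1}Nd\big)$ time, where $N=|\tilde P|=ln$. Since $l\le k$ is a fixed constant, $N=\Theta(n)$, so $(\log N)^{k+1}=\Theta((\log n)^{k+1})$ and $N=\Theta(n)$; hence the running time is $O\big(2^{poly(k/\epsilon)}(\log n)^{k+1}nd\big)$, matching the claimed bound (the factor $l$ and the $O(knd)$ construction cost are both swallowed by the $2^{poly(k/\epsilon)}$ factor and the fixed-$k$ assumption). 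Finally, translating the output back through the bijection of Lemma~\ref{lem-freduce} gives the desired $(1+\epsilon)$-approximation for $(l,k)$-FMeans with constant probability, completing the proof.

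There is essentially no hard obstacle here once Lemma~\ref{lem-freduce} and Theorem~\ref{the-chromatic} are in hand; the only points requiring a line of care are (i) checking that the upper bound $\Delta$ on $\delta^2_{opt}$ required by Algorithm~\ref{alg-kcmeans} is available for the instance $\tilde P$ — this follows from Theorem~\ref{the-constant} applied to $\tilde P$, exactly as for every other problem in this section — and (ii) verifying that the color-class sizes in $\tilde P$ are exactly $l\le k$, so that $\tilde P$ is a legitimate chromatic instance and the chromatic Partition step (a minimum-cost one-to-one matching per color class, of size $O(k^2(k+d))$ each, over $n$ classes) applies verbatim. Both are immediate, so the proof is a short reduction argument.
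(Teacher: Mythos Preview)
Your proposal is correct and follows exactly the paper's approach: the paper's proof is simply ``With Lemma~\ref{lem-freduce}, we immediately have the following theorem,'' and your elaboration (reduce to $k$-ChMeans on $\tilde P$, invoke Theorem~\ref{the-chromatic}, and note that $|\tilde P|=ln=\Theta(n)$ since $l\le k$) is precisely what that sentence unpacks to. One wording slip to fix: you write that the $l$ copies of $p_i$ ``carry $l$ fresh, mutually distinct colors,'' but the correct reduction (which your parenthetical and later analysis already reflect) gives all $l$ copies of $p_i$ the \emph{same} color, distinct from the color of any other original point's copies---otherwise the chromatic constraint would be vacuous and would not force the copies into $l$ different clusters.
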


\noindent\textbf{Note:} As mentioned in \cite{HHL}, a more general version of fault tolerant clustering problem is to allow each point $p_i \in P$ to have an individual $l$-value $l_{i}$. 
From the above discussion, it is easy to see that this general version can also be solved in the same way ({\em i.e.,} through reduction to  $k$-ChMeans)  and achieve the same approximation result. 

\subsection{Semi-Supervised $k$-means Clustering}

As shown in Section~\ref{sec-mainresult}, semi-supervised clustering has various forms. In this paper, we consider the semi-supervised $k$-means clustering problem which takes into account the geometric cost and priori knowledge. 
Let $P$ be a set of $n$ points in $\mathbb{R}^d$, and $\overline{\mathcal{S}}=\{\overline{S}_1,\cdots, \overline{S}_k \}$ be a given clustering of $P$. 
{\em Semi-supervised $k$-means clustering} (denoted by $k$-SMeans) on $P$ and $\overline{\mathcal{S}}$ is 
the problem of finding a clustering $\mathcal{S}=\{S_1,\cdots,S_k \}$ of $P$ such that the following objective function is minimized,
\begin{eqnarray}
\alpha \frac{Cost(\mathcal{S})}{E_1}+(1-\alpha) \frac{dist\{\mathcal{S}, \overline{\mathcal{S}}\}}{E_2}, \label{for-hybrid}
\end{eqnarray}
where $\alpha \in [0,1]$ is a given constant, $E_1$ and $E_2$ are two given scalars to normalize the two terms, $Cost(\mathcal{S})$ is the $k$-means clustering cost of $\mathcal{S}$, and $dist\{\mathcal{S}, \overline{\mathcal{S}}\}$ is the distance between $\mathcal{S}$ and $\overline{\mathcal{S}}$  defined in the same way as in \cite{BBG}. For any pair of $S_j$ and $\overline{S}_i$, $1\leq j, i\leq k$, their  difference is  $|S_{j} \setminus \overline{S}_i |$.  Given a bipartite matching $\sigma$ between
$\mathcal{S}$ and $\overline{\mathcal{S}}$, $dist\{\mathcal{S}, \overline{\mathcal{S}}\}$ is  defined as $\sum^k_{j=1}|S_{j} \setminus \overline{S}_{\sigma (j)}|$. 

The challenge is that the bipartite matching $\sigma$ is unknown in advance. We fix the $k$-tuple candidate $P_{v}=\{p_{v_1}, \cdots p_{v_k}\}$. To find the desired $\sigma$ to minimize the objective function (\ref{for-hybrid}), we build a bipartite graph, where the left (resp., right) column contains $k$ vertices corresponding to $p_{v_1}, \cdots, p_{v_k}$ (resp., $\overline{S}_1,\cdots, \overline{S}_k $), respectively. For each pair $(p_{v_j}, \overline{S}_i)$, we connect them by an edge; we calculate the edge weight $w_{(i,j)}$ in the following way. For each $p\in \overline{S}_i$, it could be potentially assigned to any of the $k$ clusters in $\mathcal{S}$; if $i=\sigma (j)$, the induced $k$ costs of $p$ will be 
$\{c^1_p, c^2_p, \cdots, c^k_p\}$,
where $c^l_p=\alpha \frac{||p-p_{v_l}||^2}{E_1}$ if $l=j$, or $c^l_p=\alpha \frac{||p-p_{v_l}||^2}{E_1}+(1-\alpha)\frac{1}{E_2}$ otherwise. Thus, we set 
\begin{eqnarray}
w_{(i,j)}=\sum_{p\in \overline{S}_i}\min_{1\leq l\leq k}c^l_p. \label{for-semi_cost}
\end{eqnarray}
We solve the minimum cost bipartite matching problem to determine $\sigma$.
%
%
%
To build such a bipartite graph, we need to first compute all the $kn$ distances from the points in $P$ to the $k$-tuple $P_{v}$; then, we calculate the $k^2$ edge weights via (\ref{for-semi_cost}).
The bipartite graph can be built in a total of $O(k nd+k^2n)$ time, and the optimal matching can be obtained via Hungarian algorithm in $O(k^3)$ time.


\begin{theorem}
\label{the-semi}

There exists an algorithm yielding a $(1+\epsilon)$-approximation for $k$-SMeans with constant probability, in $O\big(2^{poly(\frac{k}{\epsilon})}(\log n)^{k+1}nd\big)$ time. 

\end{theorem}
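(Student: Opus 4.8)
The plan is to follow the same two–stage template used for the previous problems in this section: first invoke the Peeling–and–Enclosing machinery of Section~\ref{sec-candidates} to produce the $O(2^{poly(k/\epsilon)}(\log n)^{k})$ candidate $k$-tuples, and then show that the minimum–cost bipartite matching described above is a correct selection algorithm for $k$-SMeans. The observation that makes the first stage applicable as a black box is that Algorithm~\ref{alg-kcmeans}, together with the analysis in Lemma~\ref{lem-induction} and Lemma~\ref{lem-equal}, never uses the constraint defining the target clustering: it takes an arbitrary partition $\{Opt_1,\dots,Opt_k\}$ of $P$ plus an estimate of its $k$-means value $\delta^2_{opt}$ and returns a $k$-tuple $\{p_{v_1},\dots,p_{v_k}\}$ with $\sum_j|Opt_j|\,\|m_j-p_{v_j}\|^2\le O(\epsilon)\,n\,\delta^2_{opt}$. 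Hence I would apply it with $\{Opt_1,\dots,Opt_k\}$ taken to be the optimal semi-supervised clustering $\mathcal{S}^{*}=\{S^{*}_1,\dots,S^{*}_k\}$ (re-indexed in non-increasing size order), $m_j$ its mean points, $\sigma^{*}$ its optimal matching against $\overline{\mathcal{S}}$, and $\delta^2_{opt}=Cost(\mathcal{S}^{*})$.

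The second stage is then short. Fix the lucky candidate $P_v=\{p_{v_1},\dots,p_{v_k}\}$. Our selection algorithm returns, over all choices of clustering and matching, the one minimizing $\alpha\frac{Cost'}{E_1}+(1-\alpha)\frac{dist\{\cdot,\overline{\mathcal{S}}\}}{E_2}$, where $Cost'$ is the $k$-means cost measured against the \emph{candidate} centers. Comparing with the particular choice $(\mathcal{S}^{*},\sigma^{*})$ and using Claim~1 in the proof of Lemma~\ref{lem-close}, $\frac1n\sum_j\sum_{p\in S^{*}_j}\|p-p_{v_j}\|^2=Cost(\mathcal{S}^{*})+\frac1n\sum_j|S^{*}_j|\,\|m_j-p_{v_j}\|^2\le(1+O(\epsilon))\,Cost(\mathcal{S}^{*})$, while the $dist$ term of $(\mathcal{S}^{*},\sigma^{*})$ is exactly $dist\{\mathcal{S}^{*},\overline{\mathcal{S}}\}$; hence the value returned is at most $\alpha\frac{(1+O(\epsilon))Cost(\mathcal{S}^{*})}{E_1}+(1-\alpha)\frac{dist\{\mathcal{S}^{*},\overline{\mathcal{S}}\}}{E_2}\le(1+O(\epsilon))\,OPT$, and since the true cost of the returned clustering (using the actual means of its parts) can only be smaller than this candidate-center surrogate, it too is $\le(1+O(\epsilon))\,OPT$. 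Rescaling $\epsilon$ (legitimate as $k$ is fixed) gives the $(1+\epsilon)$-approximation. For the time bound, Theorem~\ref{the-ptas} gives $O(2^{poly(k/\epsilon)}(\log n)^{k})$ candidates in $O(2^{poly(k/\epsilon)}n(\log n)^{k+1}d)$ time, and on each candidate the selection algorithm builds its bipartite graph in $O(knd+k^{2}n)$ time and runs the Hungarian algorithm in $O(k^{3})$ time; multiplying yields $O(2^{poly(k/\epsilon)}(\log n)^{k+1}nd)$, and the $O(\log n)$-fold repetition needed to push the success probability up to $1-1/n$ is absorbed exactly as in Remark~\ref{rem-ptas}.

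The one step I expect to be the real obstacle is supplying Algorithm~\ref{alg-kcmeans} with an upper bound $\Delta$ on $\delta^2_{opt}=Cost(\mathcal{S}^{*})$ of the form $\Delta/c\le Cost(\mathcal{S}^{*})\le\Delta$ with $c$ independent of $n$, because here the objective is a \emph{mixture} of a $k$-means term and a combinatorial term, so the quantity the peeling cares about is only the $k$-means component $Cost(\mathcal{S}^{*})$, not the full optimum $OPT$ — and these two need not be within a constant factor. I would attack this by adapting Section~\ref{sec-upper}: run a $\lambda$-approximate \emph{unconstrained} $k$-means on $P$ to obtain $\mathcal{C}$ and its cost $\omega$, and rerun the snapping-to-$\mathcal{C}$ argument of Theorem~\ref{the-constant} with the mixture objective in place of the pure $k$-means objective (the moves used there, and the distance term, are all insensitive to the snapping, so the same constants go through) to get a $k$-tuple in $[\mathcal{C}]^{k}$ of mixture value $\le(18\lambda+16)\,OPT$. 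To turn this into a bracket for $Cost(\mathcal{S}^{*})$ itself I would combine the two cheap inequalities $Cost(\mathcal{S}^{*})\ge(\text{optimal unconstrained }k\text{-means cost})\ge\omega/\lambda$ and $Cost(\mathcal{S}^{*})\le Cost(\overline{\mathcal{S}})$ (the prior clustering $\overline{\mathcal{S}}$ is feasible and contributes $0$ to the $dist$ term, so the optimum cannot beat it on the $k$-means term) with the $\Theta(\sqrt n)$ of multiplicative slack available in the radius schedule $\mathcal{R}$ of Algorithm~\ref{alg-tree}: after trivially extending the index range of $t$ in $\mathcal{R}$ to include negative values (which keeps $|\mathcal{R}|=O(\frac{\log n}{\epsilon})$), any $\delta$ that over-estimates $\sqrt{Cost(\mathcal{S}^{*})}$ by at most a $\Theta(\sqrt n)$ factor makes Claim~2 in the proof of Lemma~\ref{lem-induction}, and hence the whole induction, go through with the \emph{same} error bound, so it suffices to try $O\!\big(\log_{\sqrt n}(\lambda\,Cost(\overline{\mathcal{S}})/\omega)\big)$ geometrically spaced guesses for $\delta$ — a constant number when the instance has polynomially bounded aspect ratio, and $O(\log n)$ in general, which is absorbed just like the $O(\frac1\epsilon\log c)$ factor in Theorem~\ref{the-ptas}. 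Carefully checking that this enlarged search keeps the candidate count and running time within the stated $(\log n)^{k+1}$ bound, and disposing of the degenerate case $Cost(\mathcal{S}^{*})=0$ (then $P$ has at most $k$ distinct points and the problem is solved directly), is where I would spend the bulk of the effort.
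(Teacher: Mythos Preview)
Your two-stage template (Peeling-and-Enclosing for candidate generation, then the bipartite-matching selection) is exactly what the paper does; the paper in fact gives no separate proof for Theorem~\ref{the-semi} beyond describing the selection algorithm and invoking the framework. Your correctness argument for the second stage (comparing the selection algorithm's output against $(\mathcal{S}^{*},\sigma^{*})$ via Claim~1, then observing that replacing candidate centers by actual means only decreases the $k$-means term) is precisely the intended reasoning and is sound.

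You are right to flag the upper-bound estimation as the delicate point: $k$-SMeans does not fit the abstract ``hard constraint plus $k$-means objective'' mold of Section~\ref{sec-upper}, so Theorem~\ref{the-constant} does not literally bracket $Cost(\mathcal{S}^{*})$ within a constant factor, and the paper glosses over this. Your two bounds $\omega/\lambda\le Cost(\mathcal{S}^{*})\le Cost(\overline{\mathcal{S}})$ are correct (the second needs $\alpha>0$, which you should state; the case $\alpha=0$ is trivial anyway). The weakness is in the last step: the ratio $\lambda\,Cost(\overline{\mathcal{S}})/\omega$ is not, in general, polynomially bounded in $n$ --- $\overline{\mathcal{S}}$ is an arbitrary user-supplied clustering and can have $k$-means cost exponentially larger than $\omega$ --- so your claim that $O(\log_{\sqrt{n}}(\lambda\,Cost(\overline{\mathcal{S}})/\omega))$ geometrically spaced guesses for $\delta$ is ``$O(\log n)$ in general'' is unjustified without an additional aspect-ratio assumption. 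A cleaner route is to bypass the bracket on $Cost(\mathcal{S}^{*})$ altogether: for each tuple in $[\mathcal{C}]^{k}$ consider specifically the clustering $\mathcal{S}^{*}$ with centers snapped to their nearest $\mathcal{C}$-points; a direct triangle-inequality computation (as in Lemma~\ref{lem-ieq}) shows its $k$-means cost with respect to those snapped centers is at most $(6+4\lambda)\,Cost(\mathcal{S}^{*})$, and since the selection algorithm on that tuple returns something no worse in \emph{mixture} value, you get a constant-factor approximation to the mixture optimum $OPT$ --- but you still need to convert that into a bracket on $Cost(\mathcal{S}^{*})$, and here $\alpha,E_{1},E_{2}$ enter. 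The honest resolution is either to assume polynomially bounded spread, or to accept an extra $O(\log(Cost(\overline{\mathcal{S}})/\omega))$ factor in the running time; the paper's stated bound implicitly requires one of these.
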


\section{Constrained $k$-Median Clustering ($k$-CMedian)}
\label{sec-kmedian}

In this section, we extend our approach for $k$-CMeans to the constrained $k$-median clustering problem ($k$-CMedian). Similar to $k$-CMeans, we show that the Peeling-and-Enclosing framework can be used to construct a set of candidates for the constrained median points. Combining this with the selection algorithms (with trivial modification) in Section \ref{sec-application}, we achieve the $(1+\epsilon)$ approximations for  a class of $k$-CMedian problems.

To solve $k$-CMedian, a straightforward idea is to extend the simplex lemma to median points and combine it with the Peeling-and-Enclosing framework to achieve an approximate solution. However, due to the essential difference between mean and median points, such an extension for the simplex lemma  is not always possible. The main reason is that the median point  
({\em i.e.,} Fermat point) does not necessarily lie inside the simplex, and thus there is no guarantee to find the median point by searching inside the simplex. Below is an example showing that the median point  actually can lie outside the simplex.

\begin{figure}[h]
  \centerline{
  \includegraphics[height=1in]{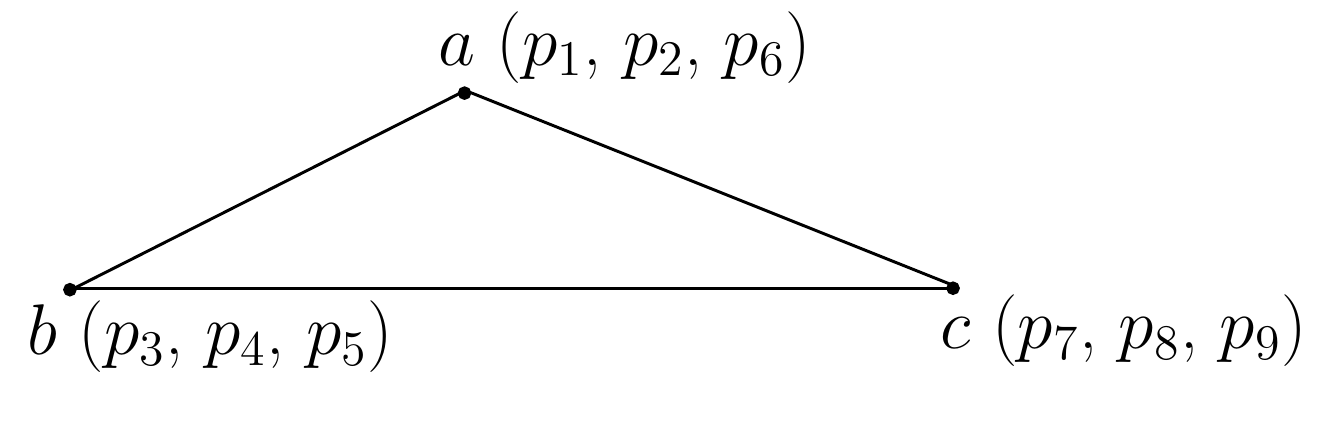}}
    \caption{An example showing non-existence of a simplex lemma for $k$-CMedian. }
  \label{fig-counter}
  \end{figure}

Let $P=\{p_1, p_2, \cdots, p_{9}\}$ be a set of points in $\mathbb{R}^d$. We consider the following partition of $P$,  $P_1=\{p_i\mid 1\leq i\leq 5\}$ and $P_2=\{p_i\mid 6\leq i\leq 9\}$. Assume that all the points of $P$ locate at the three vertices of a triangle $\Delta abc$. Particularly, $\{p_1, p_2, p_6\}$ coincide with vertex $a$, $\{p_3,p_4,p_5\}$   with vertex $b$, and $\{p_7,p_8,p_9\}$  with vertex $c$ (see Figure\ref{fig-counter}). It is easy to see that the median points of $P_1$ and $P_2$ are $b$ and $c$, respectively.  If  the angle $\angle bac\geq \frac{2\pi}{3}$, the median point of $P$ is  vertex $a$ (note that the median point can be viewed as the Fermat point of $\Delta abc$ with each vertex associated with weight $3$). This means that the median point of $P$ is outside the simplex formed by the median points of $P_1$ and $P_2$ ({\em i.e.,} segment $\overline{bc}$). Thus, a good  approximation of the median point cannot be obtained by searching a grid inside $\overline{bc}$. 

To overcome this difficulty, we show that a  weaker version of the simplex lemma exists for median, which enables us to achieve similar results for $k$-CMedian. 

\subsection{Weaker Simplex Lemma for Median Point}
\label{sec-wsl}
Comparing to the simplex lemma in Section \ref{sec-simplex}, the following Lemma \ref{lem-wsl} has two differences. One is that the lemma requires a partial partition on a significantly large subset of $P$,  rather than a complete partition on $P$. Secondly, the grid is built in the flat spanned by $\{o_1, \cdots, o_j\}$, instead of the simplex. Later, we will show that the grid is actually built in a surrounding region of the simplex, and thus the lemma is called ``{\em weaker simplex lemma}''.


\begin{lemma}[Weaker Simplex Lemma]
\label{lem-wsl}
Let $P$ be a set of $n$ points in $\mathbb{R}^d$, and $\bigcup^j_{l=1}P_l\subset P$ be a partial partition of $P$ with $P_{l_1}\cap P_{l_2}=\emptyset$ for any $l_1\neq l_2$. Let $o_{l}$ be the median point of $P_{l}$ for $1\le l \le j$, and $\mathcal{F}$ be the flat spanned by $\{o_1, \cdots, o_j\}$. If $|P\setminus(\bigcup^j_{l=1}P_l)|\leq \epsilon|P|$ for some constant $\epsilon\in (0,1/5)$ and each $P_l$ is contained inside a ball $\mathcal{B}(o_l, L)$ centered at $o_{l}$ and with radius $L\geq 0$, 
then it is possible to build a grid in $\mathcal{F}$ with size $O(j^2(\frac{j\sqrt{j}}{\epsilon})^j)$ such that at least one grid point $\tau$ satisfies the following inequality, where $o$ is the median point of $P$ (see Figure~\ref{fig-weaker}).
\begin{eqnarray}
\frac{1}{|P|}\sum_{p\in P} ||\tau-p||\leq (1+\frac{9}{4}\epsilon)\frac{1}{|P|}\sum_{p\in P}||p-o||+(1+\epsilon)L. \label{for-wsl}
\end{eqnarray}
\end{lemma}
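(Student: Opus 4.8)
The plan is to exhibit a single point of the flat $\mathcal{F}$ whose $1$-median cost on $P$ is already within the claimed factors of $o$'s cost, and then to discretize a suitable \emph{bounded} neighbourhood of the simplex $\mathcal{V}=\mathrm{conv}(o_1,\dots,o_j)$ finely enough that one grid point $\tau$ inherits essentially the same cost. Throughout write $\mu=\frac1{|P|}\sum_{p\in P}\|p-o\|$ and $P_0=P\setminus\bigcup_l P_l$, so $|P_0|\le\epsilon|P|$; and for a subset $W$ let $\mathrm{cost}_W(x)=\sum_{p\in W}\|x-p\|$. First I would dispose of tiny pieces: if $|P_l|<\frac{\epsilon}{cj}|P|$ for a suitable constant $c$, move $P_l$ into the leftover set; doing this for all small $l$ enlarges the leftover fraction by only $O(\epsilon)$, which the slack $\epsilon<1/5$ accommodates. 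Since we do not know which pieces are tiny, all $2^j\le(j\sqrt j/\epsilon)^j$ choices are enumerated, a factor folded into the grid-size bound. Hence assume $|P_l|\ge\beta|P|$ with $\beta=\Theta(\epsilon/j)$ for every $l$.

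The candidate point in $\mathcal{F}$ is $o'$, the orthogonal projection of the true median $o$ onto $\mathcal{F}$. I would bound $\mathrm{cost}_P(o')$ by splitting $P$ into $\bigcup_l P_l$ and $P_0$. Decomposing $\mathbb{R}^d$ into the directions parallel to $\mathcal{F}$ and those orthogonal to it, for $p\in P_l$ one gets $\|o'-p\|^2=\|(o-p)_\parallel\|^2+h_p^2$ with $h_p\le\|p-o_l\|\le L$ the height of $p$ above $\mathcal{F}$; crucially the points with $h_p$ small relative to $t:=\mathrm{dist}(o,\mathcal{F})$ actually satisfy $\|o'-p\|\le\|o-p\|$, and for the remaining ones one uses $\|o'-p\|\le\|o-p\|+h_p$ together with $\sum_{p\in\bigcup_l P_l}h_p\le\sum_l\mathrm{cost}_{P_l}(o_l)\le\sum_l\mathrm{cost}_{P_l}(o)\le|P|\mu$. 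This keeps the $L$-loss at $(1+\epsilon)L$ and the $\mu$-loss inside the multiplicative factor $(1+\tfrac94\epsilon)$. For $p\in P_0$ I only have $\|o'-p\|\le\|o-p\|+t$, contributing $|P_0|t\le\epsilon|P|\,t$, which is controlled once $t$ is bounded.

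To bound $t$ and to locate the relevant region, I would use the weighted centroid $\bar o=\frac{\sum_l|P_l|o_l}{\sum_l|P_l|}\in\mathcal{V}$: since $\|\bar o-o\|\le\sum_l\frac{|P_l|}{\sum_w|P_w|}\|o_l-o\|$ and the per-piece inequality $\|o_l-o\|\le L+\frac1{|P_l|}\mathrm{cost}_{P_l}(o)$ holds (as $P_l\subseteq\mathcal{B}(o_l,L)$ and $o_l$ is the median of $P_l$), we get $t\le\|\bar o-o\|\le\frac{\mu+L}{1-\epsilon}$; the same estimates show every $o_l$, hence also the best point of $\mathcal{F}$, lies within $O(\frac{j}{\epsilon}(\mu+L))$ of $\mathcal{V}$, by comparing any far-away point of $\mathcal{F}$ against $o'$ using optimality. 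I then build the grid in $\mathcal{F}$ over this enlarged region --- the simplex $\mathcal{V}$ together with its $O(j^2)$ one- and two-vertex dilations (this is the ``surrounding region'') --- discretized at resolution $\Theta(\epsilon\cdot\mathrm{diam}/\sqrt j)$, which has the stated size $O(j^2(j\sqrt j/\epsilon)^j)$ and contains a point $\tau$ with $\|\tau-o'\|$ small enough that $\frac1{|P|}\mathrm{cost}_P(\tau)\le\frac1{|P|}\mathrm{cost}_P(o')+O(\epsilon)(\mu+L)$. Assembling the three estimates gives (\ref{for-wsl}).

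The main obstacle is exactly the phenomenon that, unlike the mean case of Lemma~\ref{lem-simplex} where $o\in\mathcal{V}$ automatically, the median can genuinely fall outside $\mathcal{V}$ (the triangle example). Thus the heart of the argument is (i) proving that the best point of $\mathcal{F}$ nonetheless stays in a bounded, gridable dilation of $\mathcal{V}$ --- this is precisely where the smallness hypothesis $|P\setminus\bigcup_l P_l|\le\epsilon|P|$ and a variational comparison against $o'$ are indispensable, and where the $\epsilon<1/5$ threshold is consumed --- and (ii) threading the constants through the three sources of loss (projecting $o$ to $o'$; the heights $h_p\le L$; and the grid resolution) so that they telescope to exactly $(1+\tfrac94\epsilon)$ and $(1+\epsilon)$ rather than to larger multiples of $\epsilon$ and of $L$.
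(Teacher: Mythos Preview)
Your high-level skeleton matches the paper: project $o$ orthogonally to $\tilde o\in\mathcal F$, bound $\frac1{|P|}\sum_p\|p-\tilde o\|$ by $\frac{1}{1-\epsilon}\mu+L$ (the paper's Claim~5), then grid a bounded region of $\mathcal F$ and pick $\tau$ near $\tilde o$. The cost bound at $\tilde o$ and the final telescoping are essentially what the paper does.

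The genuine gap is the localization of $\tilde o$. Your bound places $\tilde o$ within $L+O(\tfrac{j}{\epsilon})\mu$ of each surviving $o_l$, but $\mu$ is \emph{unknown}, so a region whose radius scales with $\mu$ cannot be gridded a priori. Your phrase ``comparing any far-away point of $\mathcal F$ against $o'$ using optimality'' does not close this: $o'$ is merely a projection, not an optimizer, so there is no variational principle attached to it. The paper's decisive step (its Claim~6) is a first-variation argument on the \emph{true} median $o$: if at least $\tfrac{2}{3}(1-\epsilon)|P|$ points lie in a ball $\mathcal B(c,L')$, then perturbing $o$ toward $c$ would strictly decrease the cost unless $\|o-c\|\le\bigl(1+\tfrac{1+2\epsilon}{\sqrt{3-12\epsilon}}\bigr)L'$. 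Applied with $c=o_{l_0}$, $L'=L$ (single dominant piece) or with $c=o_{l_1}$, $L'=\|o_{l_1}-o_{l_2}\|+L$ (several large pieces, merged), this gives a bound on $\|o-o_l\|$, hence on $\|\tilde o-o_l\|$, that depends \emph{only} on $L$ and the known pairwise distances $\|o_l-o_{l'}\|$. That is what lets the paper set the type-1 and type-2 ball radii without knowing $\mu$, and only afterwards convert the grid error into an $\epsilon\mu+\epsilon L$ bound via $\|o_l-o_{l'}\|\le 2L+\tfrac{6j}{1-\epsilon}\mu$. Your proposal needs exactly this variational ingredient; without it the ``one- and two-vertex dilations'' you grid need not contain $\tilde o$ when $\mu$ is large relative to $\mathrm{diam}(\mathcal V)$ and $L$.

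A secondary remark: the preprocessing that prunes pieces with $|P_l|<\tfrac{\epsilon}{cj}|P|$ and enumerates $2^j$ subsets is a detour the paper does not take. It is not incorrect, but it does not sidestep the localization issue above (after pruning you can still land in the ``one dominant piece'' regime), and it shifts the pruned points into the leftover set where they are controlled only through $|P_0'|\cdot t$, which again requires the bound on $t$ that the variational argument supplies.
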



\begin{figure}[]
 \centering
  \includegraphics[height=1.3in]{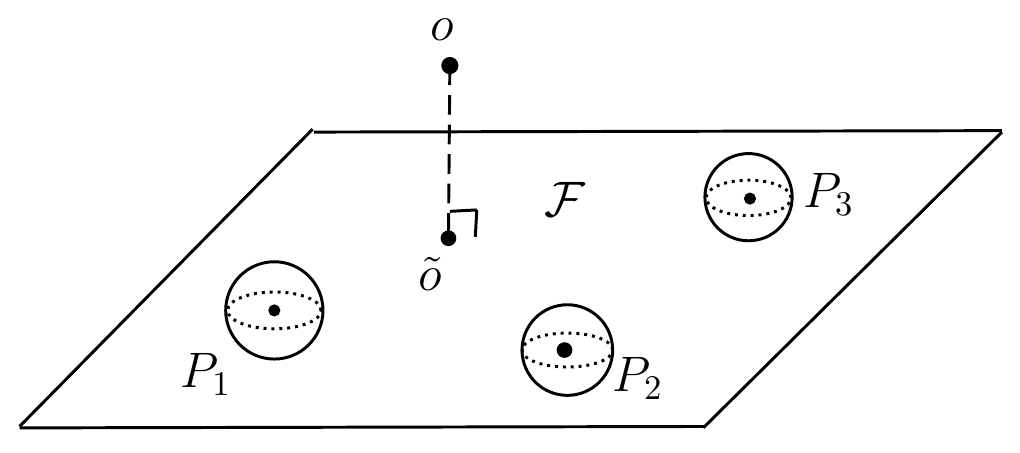}
        \caption{An illustration for Lemma \ref{lem-wsl}.}
  \label{fig-weaker}
 \end{figure}




\noindent\textbf{Proof Synopsis:} To prove Lemma \ref{lem-wsl}, we let $\tilde{o}$ be the orthogonal projection of $o$ to 
 $\mathcal{F}$ (see Figure\ref{fig-weaker}). In  Claim 4, we show that the distance between $o$ and $\tilde{o}$ is bounded, and consequently, the induced cost of $\tilde{o}$, {\em i.e.,} $\frac{1}{|P|}\sum_{p\in P}||p-\tilde{o}||$, is also bounded according to  Claim 5. Thus, $\tilde{o}$ is a good approximation of $o$, and we can focus on building a grid inside $\mathcal{F}$ to approximate $\tilde{o}$. Since  $\mathcal{F}$ is unbounded, we need to determine a range for the grid.  Claim 6 resolves the issue. It considers two cases. One is that there are at least two subsets in the partial partition, $\{P_1, \cdots, P_j\}$, having large enough fractions of $P$; the other is that only one subset is large enough. For either case,  Claim 6 shows that we can determine the range of the grid using the location information of $\{o_1, \cdots, o_j\}$. Finally, we can obtain the desired grid point $\tau$ in the following way: draw a set of balls centered at $\{o_1, \cdots, o_j\}$ with proper radii;  build the grids inside each of the balls, and find the desired grid point $\tau$ in one of these balls.  Note that since all the balls are inside $\mathcal{F}$,  the complexity of the union of the grids is independent of the dimensionality $d$. \\





%

\noindent\textbf{Claim 4}
{\em \begin{eqnarray}
||o-\tilde{o}||\leq  L+\frac{1}{1-\epsilon} \frac{1}{|P|}\sum_{p\in P}||o-p||.\label{for-wsl4}
\end{eqnarray}}\\
%

\begin{proof}
\begin{figure}[]
 \centering
  \includegraphics[height=1in]{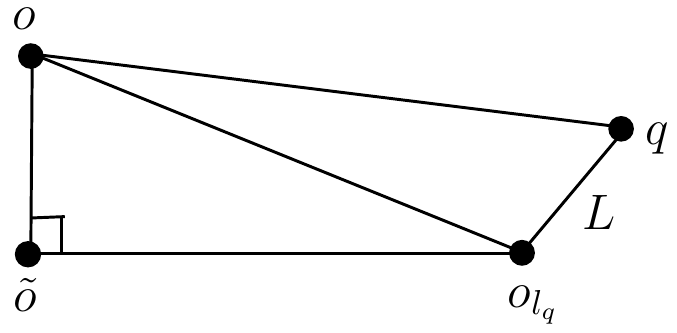}
        \caption{An illustration for  Claim 4.}
  \label{fig-claim5}
 \end{figure}
Lemma~\ref{lem-wsl} assumes that $\bigcup^j_{l=1}P_l\geq (1-\epsilon)|P|$. By Markov's inequality, we know that there exists one point $q\in \bigcup^j_{l=1}P_l$ such that
\begin{eqnarray}
||q-o||\leq \frac{1}{1-\epsilon} \frac{1}{|P|}\sum_{p\in P}||o-p||.\label{for-wsl3}
\end{eqnarray}
Let $P_{l_{q}}$ be the subset containing $q$.  Then from (\ref{for-wsl3}), we  immediately have
\begin{eqnarray}
||o-\tilde{o}||&\leq& ||o_{l_q}-o|| \nonumber\\
&\leq& ||o_{l_q}-q||+||q-o||\nonumber\\
&\leq& L+\frac{1}{1-\epsilon} \frac{1}{|P|}\sum_{p\in P}||o-p||. \label{for-wsl6}
\end{eqnarray}
This implies  Claim 4 (see Figure \ref{fig-claim5}).
\qed
\end{proof}


\noindent\textbf{Claim 5}
{\em \begin{eqnarray}
\frac{1}{|P|}\sum_{p\in P}||p-\tilde{o}|| \leq \frac{1}{1-\epsilon}\frac{1}{|P|}\sum_{p\in P}||p-o||+L. \label{for-wsl7}
\end{eqnarray}}\\

\begin{proof}
\begin{figure}[]
 \centering
  \includegraphics[height=1in]{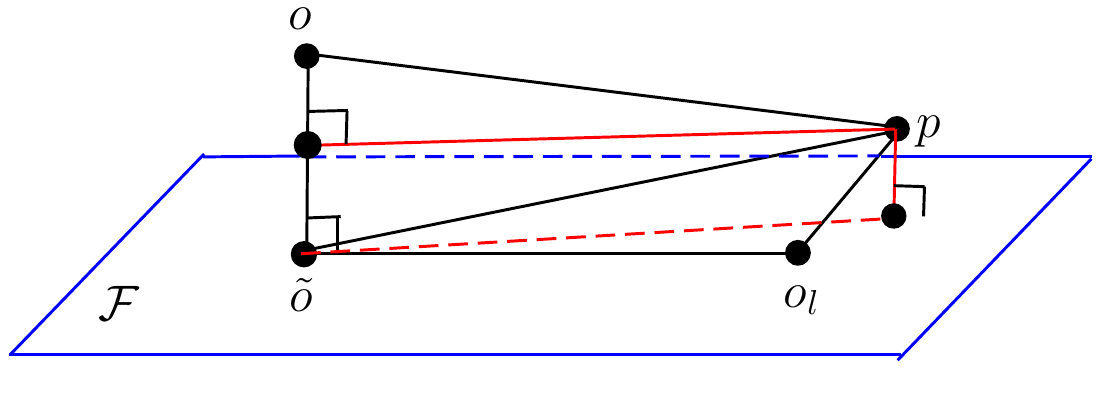}
        \caption{ An illustration for  Claim 5.}
  \label{fig-claim6}
 \end{figure}
For any point $p\in P_l$, let $dist\{\overline{o\tilde{o}}, p\}$ (resp., $dist\{\overline{\mathcal{F}}, p\}$) denote its distance to the line 
$\overline{o\tilde{o}}$ (resp., flat $\mathcal{F}$). See Figure \ref{fig-claim6}.
Then we have
\begin{eqnarray}
||p-\tilde{o}||&=&\sqrt{dist^2\{\overline{o\tilde{o}}, p\}+dist^2\{\mathcal{F}, p\}}, \label{for-w1}\\
||p-o||&\geq& dist\{\overline{o\tilde{o}}, p\}. \label{for-w2}
\end{eqnarray} 
Combining (\ref{for-w1}) and (\ref{for-w2}), we have
\begin{eqnarray}
||p-\tilde{o}||-||p-o||&\leq&\sqrt{dist^2\{\overline{o\tilde{o}}, p\}+dist^2\{\mathcal{F}, p\}}- dist\{\overline{o\tilde{o}}, p\}\nonumber\\
&\leq& dist\{\mathcal{F},p\}\nonumber\\
&\leq& ||p-o_l||\leq L.\label{for-wsl1}
\end{eqnarray}


For any point $p\in P\setminus(\bigcup^j_{l=1}P_l)$, we have
\begin{eqnarray}
||p-\tilde{o}||&\leq& ||p-o||+||o-\tilde{o}||. \label{for-wsl2}
\end{eqnarray}
Combining (\ref{for-wsl1}), (\ref{for-wsl2}), and (\ref{for-wsl4}), we have
\begin{eqnarray}
\frac{1}{|P|}\sum_{p\in P}||p-\tilde{o}||&=&\frac{1}{|P|}(\sum_{p\in \bigcup^j_{l=1}P_l}||p-\tilde{o}||+\sum_{p\in  P\setminus(\bigcup^j_{l=1}P_l)}||p-\tilde{o}||)\nonumber\\
&\leq&\frac{1}{|P|}(\sum_{p\in \bigcup^j_{l=1}P_l}(L+||p-o||)+\sum_{p\in  P\setminus(\bigcup^j_{l=1}P_l)}(||p-o||+||o-\tilde{o}||))\nonumber\\
&\leq&(1-\epsilon)L+\frac{1}{|P|}\sum_{p\in P}||p-o||+\epsilon L+\frac{\epsilon}{1-\epsilon}\frac{1}{|P|}\sum_{p\in P}||p-o||\nonumber\\
&=& \frac{1}{1-\epsilon}\frac{1}{|P|}\sum_{p\in P}||p-o||+L. \label{for-wsl5}
\end{eqnarray}
Thus the claim is true.
\qed
\end{proof}

\noindent\textbf{Claim 6}
{\em At least one of the following  two statements is true. 
\vspace{0.05in}

 \begin{enumerate}
 \item There exist  at least two points in $\{o_1, \cdots, o_j\}$ whose distances to $\tilde{o}$ are no more than $L+\frac{3j}{1-\epsilon}\frac{1}{|P|}\sum_{p\in P}$ $||p-o||$.
 \item There exists one point in $\{o_1, \cdots, o_j\}$, say $o_{l_0}$, whose distance to $\tilde{o}$ is no more than $(1+\frac{1+2\epsilon}{\sqrt{3-12\epsilon}})L$.\footnote{Note that we assume $\epsilon<\frac{1}{5}$ in Lemma \ref{lem-wsl}, so $(1+\frac{1+2\epsilon}{\sqrt{3-12\epsilon}})L$ is a finite real number.}
 \end{enumerate}}

\begin{proof}
We consider two cases:  (i) there are two subsets $P_{l_1}$ and $P_{l_2}$ of $P$ with size at least 
$\frac{1-\epsilon}{3j}|P|$, and (ii) no such pair of subsets exists.

For  case (i), by Markov's inequality, we know that there exist two points $q\in P_{l_1}$ and $q'\in P_{l_2}$ such that 
\begin{eqnarray}
||q-o||\leq \frac{3j}{1-\epsilon}\frac{1}{|P|}\sum_{p\in P}||p-o||; \label{for-wsl8}\\
||q'-o||\leq \frac{3j}{1-\epsilon}\frac{1}{|P|}\sum_{p\in P}||p-o||.\label{for-wsl9}
\end{eqnarray}
This, together with triangle inequality, indicates that both
$||o_{l_1}-o||$ and  $||o_{l_2}-o||$ are no more than $L+\frac{3j}{1-\epsilon}\frac{1}{|P|}\sum_{p\in P}||p-o||$. 
Since $\tilde{o}$ is the orthogonal projection of $o$ to $\mathcal{F}$,  we have $||o_{l_1}-\tilde{o}||\leq ||o_{l_1}-o||$ and $||o_{l_2}-\tilde{o}||\leq ||o_{l_2}-o||$. Thus, the first statement is true in this case.  

\begin{figure}[]
 \centering
  \includegraphics[height=1.2in]{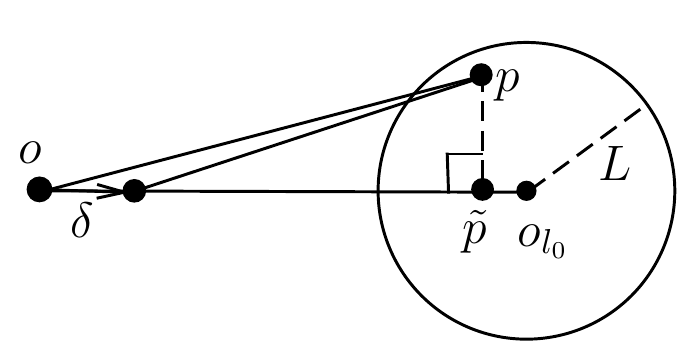}
        \caption{An illustration for  Claim 6.}
  \label{fig-claim7}
 \end{figure}

For case (ii), {\em i.e.,} no two subsets with size at least  $\frac{1-\epsilon}{3j}|P|$,
since $\sum^j_{l=1}|P_l| \ge (1-\epsilon)|P|$, by pigeonhole principle we know that there must exist one $P_{l_0}$, $1\leq l_0\leq j$, with size 
\begin{eqnarray}
|P_{l_0}|\geq (1-(j-1)\frac{1}{3j})(1-\epsilon)|P|\geq \frac{2}{3}(1-\epsilon)|P|. \label{for-wsl10}
\end{eqnarray}
Let $x=||o-o_{l_0}||$. We assume that $x>L$, since otherwise the second statement is automatically true.  

Now imagine moving $o$ slightly toward $o_{l_0}$ by a small distance $\delta$. See Figure~\ref{fig-claim7}. 
For any point $p\in P_{l_0}$,  let  $\tilde{p}$  be its orthogonal projection to the line $\overline{oo_{l_0}}$, and $a$ and $b$ be the distances $||o-\tilde{p}||$ and $||p-\tilde{p}||$, respectively.
Then, the distance between $p$ and $o$ is decreased by $\sqrt{a^2+b^2}-\sqrt{(a-\delta)^2+b^2}$. Also, we have
\begin{eqnarray}
\lim_{\delta\rightarrow 0}\frac{\sqrt{a^2+b^2}-\sqrt{(a-\delta)^2+b^2}}{\delta}&=&\lim_{\delta\rightarrow 0}\frac{2a-\delta}{\sqrt{a^2+b^2}+\sqrt{(a-\delta)^2+b^2}}\nonumber\\
&=&\frac{(a/b)}{\sqrt{(a/b)^2+1}}.\label{for-wsl11}
\end{eqnarray}
Since $p$ is inside ball $\mathcal{B}(o_{l_0}, L)$, we have $a/b\geq  (x-L)/L$. For any point $p\in P\setminus P_{l_{0}}$, the distance to $o$ is non-increased or increased by at most $\delta$. Thus, the average distance from the points in $P$ to $o$ is decreased by at least 
\begin{eqnarray}
\frac{2}{3}(1-\epsilon)\frac{((x-L)/L) \delta}{\sqrt{((x-L)/L)^2+1}}-(1-\frac{2}{3}(1-\epsilon))\delta. \label{for-wsl12}
\end{eqnarray}
Since the original position of $o$ is the  median point of $P$, the value of (\ref{for-wsl12}) should be non-positive. With  simple calculation, we have 
\begin{eqnarray}
(x-L)/L\leq \frac{1+2\epsilon}{\sqrt{3-12\epsilon}}\Longrightarrow x\leq (1+\frac{1+2\epsilon}{\sqrt{3-12\epsilon}})L. \label{for-wsl13}
\end{eqnarray}
By the same argument  in  case (i), we know that $||o_{l_0}-\tilde{o}||\leq ||o_{l_0}-o||$. This, together with (\ref{for-wsl13}),  implies that the second statement is true for case (ii).
This completes the proof for  this claim.
\qed
\end{proof}

With the above claims, we now prove Lemma \ref{lem-wsl}.

\begin{proof}[\textbf{of Lemma \ref{lem-wsl}}]
We build a grid in $\mathcal{F}$ as follows. First, draw a set of balls.
\begin{itemize}
\item For each $o_l$, $1\leq l\leq j$, draw a ball  (called {\em type-1 ball}) centered at $o_l$ and with radius $(1+\frac{1+2\epsilon}{\sqrt{3-12\epsilon}})L$.
\item For each pair of $o_{l}$ and $o_{l'}$, $1\leq l, l'\leq j$, draw a ball (called {\em type-2 ball}) centered at $o_{l}$ and with radius  $(1+\frac{1+2\epsilon}{\sqrt{3-12\epsilon}})(||o_{l}-o_{l'}||+L)$.
\end{itemize}

We claim that among the above balls, there must exist one ball that contains $\tilde{o}$. If there is only one  subset in $\{P_1, \cdots, P_j\}$ with size no smaller than $\frac{1-\epsilon}{3j}|P|$, it corresponds to the second case   in  Claim 6, and thus there exists a type-1 ball containing $\tilde{o}$. Now consider the case  that there are multiple subsets, say $\{P_{l_1}, \cdots, P_{l_t}\}$ for some $t\geq 2$,  all with size no smaller than $\frac{1-\epsilon}{3j}|P|$. 
Without loss of generality, assume that  $||o_{l_1}-o_{l_2}||=\max\{||o_{l_1}-o_{l_s}||\mid 1\leq s\leq t\}$. Then, we can view $\bigcup^t_{s=1}P_{l_s}$ as a big subset of $P$ bounded  by a ball centered at $o_{l_1}$ and with radius  $||o_{l_1}-o_{l_2}||+L$. By the same argument given in the proof   of  Claim 6 for (\ref{for-wsl10}), we know that $|\bigcup^t_{s=1}P_{l_s}|\geq \frac{2}{3}(1-\epsilon)|P|$. This also means that  we can reduce this case to the second case in  Claim 6, {\em i.e.,} replacing $P_{l_0}$, $o_{l_0}$ and $L$ by $|\bigcup^t_{s=1}P_{l_s}|$, $o_{l_1}$ and $||o_{l_1}-o_{l_2}||+L$ respectively. Thus,  there is a type-2 ball containing $\tilde{o}$.

Next, we discuss how to build the grids inside these balls. For type-1 balls with radius  $(1+\frac{1+2\epsilon}{\sqrt{3-12\epsilon}})L$, we build the grids inside them with grid length $\frac{\epsilon}{\sqrt{j}}L$. For type-2 balls with radius  $r_{l, l'}=(1+\frac{1+2\epsilon}{\sqrt{3-12\epsilon}})(||o_{l}-o_{l'}||+L)$ for some $l$ and $l'$, we build the grids inside them with grid length  
\begin{eqnarray}
\frac{1}{1+\frac{1+2\epsilon}{\sqrt{3-12\epsilon}}}\frac{(1-\epsilon)\epsilon}{6j\sqrt{j}}r_{l,l'}.\label{for-revise-1}
\end{eqnarray}

If $\tilde{o}$ is contained  in a type-1 ball, then there exists one grid point $\tau$ whose distance to $\tilde{o}$ is no more than $\epsilon L$. If $\tilde{o}$ is contained in a type-2 ball, such a distance is no more than 
\begin{eqnarray}
\frac{(1-\epsilon)\epsilon}{6j}(||o_{l}-o_{l'}||+L) \label{for-revise-2}
\end{eqnarray}
by (\ref{for-revise-1}). By the first statement in  Claim 6 and triangle inequality, we know that 
\begin{eqnarray}
||o_{l}-o_{l'}||\leq ||o_{l}-\tilde{o}||+||\tilde{o}-o_{l'}||\leq 2(L+\frac{3j}{1-\epsilon}\frac{1}{|P|}\sum_{p\in P}||p-o||). \label{for-revise-3}
\end{eqnarray} 
(\ref{for-revise-2}) and (\ref{for-revise-3}) imply that there exists one grid point $\tau$ whose distance to $\tilde{o}$ is no more than
\begin{eqnarray}
\epsilon \frac{1}{|P|}\sum_{p\in P}||p-o||+\frac{(1-\epsilon)\epsilon}{2j}L\leq \epsilon \frac{1}{|P|}\sum_{p\in P}||p-o||+\epsilon L. \label{for-wsl14}
\end{eqnarray}
Thus in both types of ball-containing,   
by  triangle inequality and  Claim 5, we have
\begin{eqnarray}
\frac{1}{|P|}\sum_{p\in P}||p-\tau||&\leq&\frac{1}{|P|}\sum_{p\in P}(||p-\tilde{o}||+||\tilde{o}-\tau||)\nonumber\\
&\leq&(\frac{1}{1-\epsilon}+\epsilon)\frac{1}{|P|}\sum_{p\in P}||p-o||+(1+\epsilon)L\nonumber\\
&\leq&(1+\frac{9}{4}\epsilon)\frac{1}{|P|}\sum_{p\in P}||p-o||+(1+\epsilon)L,\label{for-wsl15}
\end{eqnarray}
where the second inequality follows from the assumption that $\epsilon\leq \frac{1}{5}$.

As for the grid size, since we build the grids inside the balls in the $(j-1)$-dimensional flat $\mathcal{F}$, through simple calculation, we know that the grid size is $O(j^2(\frac{j\sqrt{j}}{\epsilon})^j)$. This completes the proof.
\qed
\end{proof}

\subsection{Peeling-and-Enclosing Algorithm for $k$-CMedian Using Weaker Simplex Lemma}
\label{sec-ptasmedian}

In this section,  we present a unified {\em Peeling-and-Enclosing} algorithm for generating  a set of candidate median points  for $k$-CMedian. Similar to the algorithm for $k$-CMeans, our algorithm iteratively determines the $k$ median points. 
At each iteration, it uses a set of peeling spheres and a simplex to search for an approximate median point. Since the simplex lemma no longer holds for $k$-CMedian, we use the weaker simplex lemma as a replacement.  Thus a number of changes are needed to accommodate the differences.


Before presenting our algorithm, we first introduce the following lemma proved by  Bad\u{o}iu {\em et al.} in \cite{BHI} for finding an approximate median point of a given point set.

%

\begin{theorem}[\cite{BHI}]
\label{the-1med}
Let $P$ be a normalized set of $n$ points in $\mathbb{R}^d$ space, $1>\epsilon>0$, and $R$ be a random sample of $O(1/\epsilon^3\log1/\epsilon)$ points from $P$. Then one can compute, in $O(d2^{O(1/\epsilon^4)}\log n)$ time, a point set  $S(P,R)$ of cardinality $O(2^{O(1/\epsilon^4)}\log n)$ , such that with constant probability (over the choices of $R$), there is a point $\tau\in S(P,R)$ such that $\sum_{p\in P}||\tau-p||\leq (1+\epsilon) \sum_{p\in P}||o-p||$, where $o$ is the optimal median point of $P$.
\end{theorem}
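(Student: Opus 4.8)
The plan is to follow the sampling-and-enumeration strategy of Bad\u{o}iu, Har-Peled and Indyk. Write $\mathrm{OPT}:=\sum_{p\in P}\|o-p\|$ for the optimal $1$-median cost and $r^{*}:=\mathrm{OPT}/n$ for the optimal average radius. First I would use the hypothesis that $P$ is normalized to pin down the scale: normalization bounds the spread of $P$ polynomially in $n$, so $r^{*}$ takes one of only $O(\log n)$ dyadic values $\rho\in\{2^{0},2^{1},2^{2},\dots\}$, and the construction below is run once for each guess $\rho$ of $r^{*}$; this is the source of the $\log n$ factor in both the running time and the size of $S(P,R)$.

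The heart of the proof is a \textbf{structural sampling lemma}: with constant probability over the random sample $R$ (of size $O(\epsilon^{-3}\log\epsilon^{-1})$), the affine flat $\mathcal{F}=\mathrm{aff}(R)$ --- whose dimension is at most $|R|$ and hence independent of $d$ --- contains a point $\tilde o$ with $\|\tilde o-o\|\le\epsilon r^{*}$. Granting this, $\tilde o$ is already a $(1+\epsilon)$-approximate median, since for every $x$ the triangle inequality gives $\sum_{p\in P}\|x-p\|\le n\|x-o\|+\mathrm{OPT}$, so $\sum_{p\in P}\|\tilde o-p\|\le\epsilon\,\mathrm{OPT}+\mathrm{OPT}=(1+\epsilon)\mathrm{OPT}$. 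To prove the structural lemma I would argue that, although the \emph{Fermat--Weber point} $o$ is not the mean of any natural subset of $P$ --- which is exactly why the Simplex Lemma of Section~\ref{sec-simplex} fails in the median setting --- it is nonetheless pinned near the bulk of the mass: by Markov's inequality all but an $O(\epsilon)$-fraction of $P$ lies within $O(r^{*}/\epsilon)$ of $o$, and a uniform sample of the stated size hits this heavy core densely enough at every relevant distance scale to force $\mathcal{F}$ within $\epsilon r^{*}$ of $o$. One should separate two regimes: if $o$ sits at a ``kink'' (many coincident points of $P$ pulling the median to a single location), those coincident points form a constant fraction of $P$, a sample hits one, and then $o\in\mathcal{F}$ exactly; if $o$ sits at a smooth point, the objective behaves quadratically near $o$, so landing $\mathcal{F}$ merely within $O(\sqrt{\epsilon}\,r^{*})$ already suffices. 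The quantitative interpolation between these extremes is what fixes the sample size at $O(\epsilon^{-3}\log\epsilon^{-1})$, and the ``far but possibly heavy'' points that the sample may miss are harmless precisely because their total weight times distance is at most $\mathrm{OPT}$ and the bound above only needs closeness to $o$, not an estimate of the whole set.

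Given the structural lemma, producing $S(P,R)$ is mechanical. For each scale $\rho$ I would place, inside the $(\le|R|)$-dimensional flat $\mathcal{F}$, a uniform grid of side length $\Theta(\epsilon\rho/\sqrt{|R|})$ filling a ball of radius $O(\rho)$ around the centroid of $R$ --- one checks, as in the spherical worst case, that the projection $\tilde o$ of $o$ onto $\mathcal{F}$ lies in this ball. Rounding $\tilde o$ to its nearest grid point $\tau$ moves it by at most $\epsilon\rho=O(\epsilon)r^{*}$, so $\sum_{p\in P}\|\tau-p\|\le(1+O(\epsilon))\mathrm{OPT}$, and rescaling $\epsilon$ by a constant restores the exact statement. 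The grid has $(\sqrt{|R|}/\epsilon)^{O(|R|)}=2^{O(\epsilon^{-4})}$ points per scale, hence $2^{O(\epsilon^{-4})}\log n$ points overall, each written down in $O(d)$ time, giving the claimed $O(d\,2^{O(\epsilon^{-4})}\log n)$ bound; which candidate is actually good is not decided here, but deferred to the calling Peeling-and-Enclosing procedure.

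The step I expect to be the real obstacle is the structural sampling lemma, specifically quantifying how close $\mathrm{aff}(R)$ is forced to come to the Fermat--Weber point. The naive attempt --- use the sample mean, as one does for $k$-means --- provably fails, since the empirical second moment of $P$ can be $\omega((r^{*})^{2})$ when a few outliers carry a constant fraction of $\mathrm{OPT}$, so the sample mean need not be anywhere near $o$. Making the ``hit every distance scale'' argument rigorous and tracking the loss through the kink/smooth dichotomy is the delicate part; the scale enumeration, the grid, and the error bookkeeping are routine once that lemma is in hand.
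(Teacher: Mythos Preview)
Your overall architecture (take a small sample, work inside its affine span $\mathcal{F}$, lay a grid, enumerate $O(\log n)$ scales for the unknown cost) is exactly the Bad\u{o}iu--Har-Peled--Indyk plan. The gap is in the ``structural sampling lemma'': you assert that, with constant probability, $\mathrm{dist}(o,\mathcal{F})\le \epsilon r^{*}$ (or $\sqrt{\epsilon}\,r^{*}$ in the smooth case). That is strictly stronger than what BHI prove, and your proposed mechanism for it is not the one that works.

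The actual BHI argument does \emph{not} force $\mathcal{F}$ close to $o$. It runs an incremental process: start from a single sample point $s_{1}$ (which is within $2r^{*}$ of $o$ by Markov), and at each step let $\tilde o$ be the projection of $o$ onto the current flat and define the slab
\[
U=\Bigl\{\,p\in\mathbb{R}^{d}:\ \bigl|\angle o\tilde o\,p-\tfrac{\pi}{2}\bigr|\le \tfrac{\epsilon}{16}\,\Bigr\}.
\]
The dichotomy is: either (1) $\mathcal{F}$ is already $\epsilon r^{*}$-close to $o$, or (2) $|P\setminus U|\le \epsilon|P|$. If neither holds, a constant fraction of $P$ lies outside $U$, so a fresh random sample point has, with constant probability, a large component along the direction $o-\tilde o$; adding it to the flat shrinks $\|o-\tilde o\|$ by a $(1-\Omega(\epsilon))$ factor. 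After $O(\epsilon^{-3}\log\epsilon^{-1})$ steps one of (1),(2) must occur. Case~(2) is the point you are missing: here $\|o-\tilde o\|$ may still be as large as $O(r^{*})$, but $\tilde o$ is nevertheless a $(1+O(\epsilon))$-approximate median, because for $p\in U$ the near-right angle at $\tilde o$ gives $\|p-\tilde o\|\le (1+\epsilon)\|p-o\|$, and the $\le\epsilon|P|$ points outside $U$ contribute at most $\epsilon|P|\cdot\|o-\tilde o\|=O(\epsilon)\,\mathrm{OPT}$ extra.

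Your ``kink vs.\ smooth'' dichotomy does not substitute for this. The ``smooth $\Rightarrow$ quadratic'' step is false in general: the $1$-median objective $x\mapsto\sum_{p}\|p-x\|$ can have a degenerate (rank-deficient) Hessian at a smooth optimum, so closeness of $\mathcal{F}$ to $o$ at scale $\sqrt{\epsilon}\,r^{*}$ need not translate into a $(1+\epsilon)$ cost bound, and in any event you have given no mechanism that forces even that closeness. The incremental angle-slab argument is what both bounds the sample size at $O(\epsilon^{-3}\log\epsilon^{-1})$ and, crucially, furnishes case~(2) as a separate termination condition where $\mathcal{F}$ is allowed to stay far from $o$.
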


\begin{figure}[]
 \centering
  \includegraphics[height=1in]{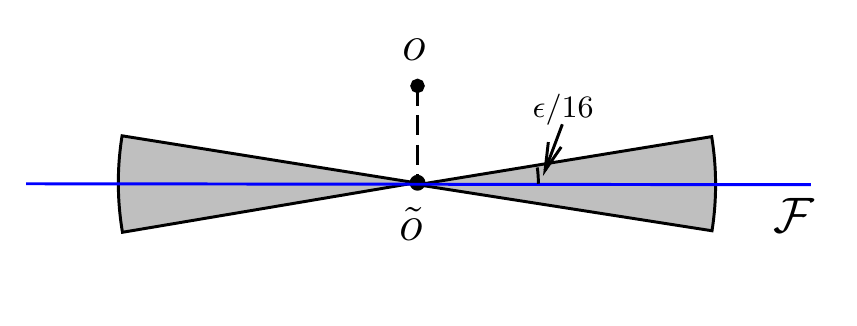}
        \caption{The gray area is $U$.}
  \label{fig-u}
 \end{figure}

\noindent\textbf{Sketch of the proof of Theorem \ref{the-1med}.} Since our algorithm uses some ideas in Theorem \ref{the-1med}, we  sketch its proof for completeness.  First, by Markov's inequality, we know that there exists one point, say $s_{1}$, from $R$ whose distance to $o$ is no more than $2 \frac{1}{|P|}\sum_{p\in P}||o-p||$ with certain probability. Then the sampling procedure can be viewed as an incremental process starting with $s_1$;  a flat $\mathcal{F}$ spanned by all previously obtained sample points is maintained;  at  each time that a new sample point is added,  $\mathcal{F}$ is updated. Let $\tilde{o}$ be  the projection of $o$ on $\mathcal{F}$, and 
\begin{eqnarray}
U=\{ p\in \mathbb{R}^d  \mid  \frac{\pi}{2}-\frac{\epsilon}{16}\leq \angle o\tilde{o}p\leq \frac{\pi}{2}+\frac{\epsilon}{16}\}. \label{for-cone}
\end{eqnarray}
See Figure \ref{fig-u}. It has been shown that this incremental sampling process stops before at most $O(1/\epsilon^3$ $\log1/\epsilon)$ points are taken, and one of the following two events happens with constant probability: (1) $\mathcal{F}$ is close enough to $o$, and (2) $|P\setminus U|$ is small enough. For either event, a grid can be built inside $\mathcal{F}$, and one of the grid points $\tau$ is the desired approximate median point.

Below we give an overview of our Peeling-and-Enclosing algorithm for $k$-CMedian. Let $P=\{p_1, \cdots, p_n\}$ be the set of $\mathbb{R}^{d}$ points in $k$-CMedian, and  $\mathcal{OPT}=\{Opt_1, \cdots,$ $Opt_k\}$ be the $k$ (unknown) optimal clusters with $m_{j}$ being the median point of cluster $Opt_j$ for $1\leq j\leq k$. Without loss of generality, we assume that $|Opt_1|\geq |Opt_2|\geq\cdots\geq |Opt_k |$. Denote  by $\mu_{opt}$ the optimal objective value, {\em i.e.,} $\mu_{opt}=\frac{1}{n}\sum^k_{j=1}\sum_{p\in Opt_j}||p-m_j||$.\\

\noindent\textbf{Algorithm overview:} We mainly focus on the differences with the $k$-CMeans algorithm. First, our algorithm uses Theorem \ref{the-1med} (instead of Lemma \ref{lem-dis}) to find an approximation $p_{v_{1}}$ for $m_1$. Then, it iteratively finds the approximate median points for $\{m_2, \cdots, m_k\}$ using the Peeling-and-Enclosing strategy. At the $(j+1)$-th iteration, 
it  has already obtained the approximate median points $p_{v_1}, \cdots, p_{v_j}$ for clusters $Opt_{1}, \cdots, Opt_{j}$, respectively.  To find the approximate median point $p_{v_{j+1}}$ for $Opt_{j+1}$, the algorithm draws $j$ peeling spheres $B_{j+1,1}, \cdots, B_{j+1,j}$ centered at $\{p_{v_1}, \cdots, p_{v_j}\}$, respectively, and considers the size of $\mathcal{A}=Opt_{j+1}\setminus (\bigcup^j_{l=1}B_{j+1,l})$. 
If $|\mathcal{A}|$ is small, it builds a flat (instead of  a simplex) spanned by $\{p_{v_1}, \cdots, p_{v_j}\}$, and finds  $p_{v_{j+1}}$  by using the weaker simplex lemma where the $j$ peeling spheres can be viewed as a partial partition on $Opt_{j+1}$. If $|\mathcal{A}|$ is large, it adopts a strategy similar to the one in Theorem \ref{the-1med} to find $p_{v_{j+1}}$:  start with the flat $\mathcal{F}$ spanned by $\{p_{v_1}, \cdots, p_{v_j}\}$, and grow  $\mathcal{F}$ by repeatedly adding a sample point in $\mathcal{A}$ to it. As it will be shown in Lemma \ref{lem-cone}, $\mathcal{F}$ will become close enough to $m_{j+1}$, and $p_{v_{j+1}}$ can be obtained by searching a grid (built in a way similar to Lemma \ref{lem-wsl}) in $\mathcal{F}$.
 By choosing a proper value ({\em i.e.,} $O(\epsilon)\mu_{opt}$) for $L$ in Lemma~\ref{lem-wsl} and Lemma~\ref{lem-cone}, we can achieve the desired  $(1+\epsilon)$-approximation. 
  As for the running time, although Theorem \ref{the-1med} introduces  an extra factor of $\log n$ for estimating the optimal cost of each $Opt_{j+1}$, our algorithm actually does not need it as such estimations have already been obtained during the Peeling-and-Enclosing step (see  Claim 2 in the proof of Lemma \ref{lem-induction}).
  Thus, the running time is still $O(n(\log n)^{k+1}d)$, which is the same as $k$-CMeans.

The algorithm is shown in Algorithm~\ref{alg-kcmedian}. The following lemma is needed to ensure the correctness of our algorithm.

\begin{lemma}
\label{lem-cone}
Let $\mathcal{F}$ be a flat in $\mathbb{R}^d$ containing $\{p_{v_1}, \cdots, p_{v_j}\}$ and having a distance to $m_{j+1}$ no more than $\frac{2}{|Opt_{j+1}|}\sum_{p\in Opt_{j+1}}||p-m_{j+1}||$. Assume that all  the peeling spheres $B_{j+1,1}, \cdots, B_{j+1,j}$ are centered at $\{p_{v_1}, \cdots, p_{v_j}\}$, respectively, and have a  radius $L\geq 0$. Then if $|Opt_{j+1}\setminus ((\bigcup^j_{w=1}B_{j+1,w})\bigcup U)|\leq \epsilon |Opt_{j+1}|$, we have
\begin{eqnarray}
&&\frac{1}{|Opt_{j+1}|}\sum_{p\in Opt_{j+1}}||p-\tilde{m}_{j+1}||\nonumber\\
&\leq& (1+2\epsilon)\frac{1}{|Opt_{j+1}|}\sum_{p\in Opt_{j+1}}||p-m_{j+1}||+L \label{for-cone1}
\end{eqnarray}
for any $0\leq \epsilon\leq 1$, where $\tilde{m}_{j+1}$ is the projection of $m_{j+1}$ on $\mathcal{F}$ and $U$ is defined in (\ref{for-cone}) ( after replacing $o$ and $\tilde{o}$ by $m_{j+1}$ and $\tilde{m}_{j+1}$, respectively).
\end{lemma}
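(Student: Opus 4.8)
The plan is to imitate the argument behind Claim~5 (and, at bottom, the flat‑projection lemma underlying Theorem~\ref{the-1med}~\cite{BHI}), but with $Opt_{j+1}$ split according to which of the two ``good'' regions each point lies in. Write $\mu:=\frac{1}{|Opt_{j+1}|}\sum_{p\in Opt_{j+1}}||p-m_{j+1}||$ for the average median cost of the cluster. Since $\tilde{m}_{j+1}$ is the orthogonal projection of $m_{j+1}$ onto $\mathcal{F}$, the hypothesis on $\mathcal{F}$ says exactly that $||m_{j+1}-\tilde{m}_{j+1}||=dist\{\mathcal{F}, m_{j+1}\}\le 2\mu$; this is the only place the closeness of $\mathcal{F}$ to $m_{j+1}$ is used, and every error term below will be a multiple of it.

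First I would partition $Opt_{j+1}$ into $A_1:=Opt_{j+1}\cap(\bigcup_{w=1}^{j}B_{j+1,w})$, $A_2:=(Opt_{j+1}\cap U)\setminus A_1$, and $A_3:=Opt_{j+1}\setminus\big((\bigcup_{w=1}^{j}B_{j+1,w})\cup U\big)$, which are pairwise disjoint, cover $Opt_{j+1}$, and satisfy $|A_3|\le\epsilon|Opt_{j+1}|$ by hypothesis. For $p\in A_1$: $p$ lies in some $B_{j+1,l}$ whose center $p_{v_l}$ is in $\mathcal{F}$, so $dist\{\mathcal{F}, p\}\le||p-p_{v_l}||\le L$, and decomposing the displacement of $p$ along the line $\overline{m_{j+1}\tilde{m}_{j+1}}$ and its orthogonal complement and using $\sqrt{a^2+b^2}-a\le b$ exactly as in Claim~5 gives $||p-\tilde{m}_{j+1}||-||p-m_{j+1}||\le dist\{\mathcal{F}, p\}\le L$. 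For $p\in A_2$: the defining condition $\angle m_{j+1}\tilde{m}_{j+1}p\in[\frac{\pi}{2}-\frac{\epsilon}{16},\frac{\pi}{2}+\frac{\epsilon}{16}]$, substituted into the law of cosines for the triangle $m_{j+1}\tilde{m}_{j+1}p$ and solved for $||p-\tilde{m}_{j+1}||$, gives $||p-\tilde{m}_{j+1}||\le||p-m_{j+1}||+\sin(\epsilon/16)\,||m_{j+1}-\tilde{m}_{j+1}||$ (and $||p-\tilde{m}_{j+1}||\le||p-m_{j+1}||$ whenever that angle is at least $\frac{\pi}{2}$). For $p\in A_3$: the plain triangle inequality gives $||p-\tilde{m}_{j+1}||\le||p-m_{j+1}||+||m_{j+1}-\tilde{m}_{j+1}||\le||p-m_{j+1}||+2\mu$, and there are at most $\epsilon|Opt_{j+1}|$ such points.

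Summing the three per‑point estimates over $Opt_{j+1}$ and dividing by $|Opt_{j+1}|$, the $A_1$ points contribute the additive $L$ (since $|A_1|\le|Opt_{j+1}|$), the $A_2$ points contribute at most $\sin(\epsilon/16)\cdot 2\mu\le\frac{\epsilon}{8}\mu$, and the $A_3$ points contribute at most $2\epsilon\mu$, so
\[
\frac{1}{|Opt_{j+1}|}\sum_{p\in Opt_{j+1}}||p-\tilde{m}_{j+1}||\ \le\ \big(1+O(\epsilon)\big)\mu+L .
\]
Note that if $U$ were the exact orthogonal hyperplane through $\tilde{m}_{j+1}$ the $A_2$ term would vanish and the coefficient would be exactly $1+2\epsilon$; the small positive width $\frac{\epsilon}{16}$ of the cone — which is needed for the incremental‑sampling step supplying the hypothesis on $|Opt_{j+1}\setminus((\bigcup_w B_{j+1,w})\cup U)|$ — perturbs this only by $O(\epsilon)$, so after a routine tightening of the bookkeeping (or simply rescaling $\epsilon$) one gets (\ref{for-cone1}).

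The step I expect to be the main obstacle is the $A_2$ estimate: showing that lying in the near‑orthogonal cone $U$ forces $||p-\tilde{m}_{j+1}||$ and $||p-m_{j+1}||$ to agree up to an additive $O(\epsilon)||m_{j+1}-\tilde{m}_{j+1}||$. A crude Pythagorean bound does not suffice, since it fails when $||p-\tilde{m}_{j+1}||$ is comparable to (or smaller than) $||m_{j+1}-\tilde{m}_{j+1}||$; the clean route is to keep the explicit larger root of the law‑of‑cosines quadratic, namely $||p-\tilde{m}_{j+1}||=||m_{j+1}-\tilde{m}_{j+1}||\cos\beta+\sqrt{||p-m_{j+1}||^2-||m_{j+1}-\tilde{m}_{j+1}||^2\sin^2\beta}$ with $\beta:=\angle m_{j+1}\tilde{m}_{j+1}p$, and then bound the square root by $||p-m_{j+1}||$ and $\cos\beta$ by $\sin(\epsilon/16)$. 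Everything else reduces to the two ingredients $dist\{\mathcal{F}, m_{j+1}\}\le 2\mu$ and $\sqrt{a^2+b^2}-a\le b$ already used in Claim~5, together with $|A_3|\le\epsilon|Opt_{j+1}|$ from the hypothesis.
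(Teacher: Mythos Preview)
Your proposal is correct and follows the same three-piece decomposition as the paper's proof: points in the peeling balls, points in the cone $U$, and the small leftover set. The handling of $A_1$ and $A_3$ is identical to the paper's (\ref{for-cone2}) and (\ref{for-cone3}). The only substantive difference is the $A_2$ estimate: the paper simply invokes the multiplicative inequality $\|p-\tilde m_{j+1}\|\le(1+\epsilon)\|p-m_{j+1}\|$ from Theorem~3.2 of \cite{BHI}, whereas you derive an additive bound $\|p-\tilde m_{j+1}\|\le\|p-m_{j+1}\|+\sin(\epsilon/16)\,\|m_{j+1}-\tilde m_{j+1}\|$ directly from the law of cosines. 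Your route is more self-contained and your analysis of the quadratic root is sound (the geometric constraint $r\ge h\sin\beta$ guarantees a real discriminant, and $x$ is always bounded by the larger root). Both arguments land at $(1+O(\epsilon))\mu+L$; your constant $1+\tfrac{17}{8}\epsilon$ versus the paper's stated $1+2\epsilon$ is the minor bookkeeping discrepancy you already flagged, and indeed a rescaling of $\epsilon$ (or noting that the BHI multiplicative factor is really $1+\tan(\epsilon/16)$, much tighter than $1+\epsilon$) closes the gap.
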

\begin{proof}
To prove this lemma, we first compare it with Lemma \ref{lem-wsl}.  The main difference is that there is an extra part $U \cap Opt_{j+1}$ in $Opt_{j+1}$, where $Opt_{j+1}$ can be viewed as the point set $P$ in Lemma~\ref{lem-wsl}. Thus, $Opt_{j+1}$ can be viewed as having  three subsets, $(\bigcup^j_{w=1}B_{j+1,w}) \bigcap Opt_{j+1}$, $U \bigcap Opt_{j+1}$ and $Opt_{j+1}\setminus ((\bigcup^j_{w=1}B_{j+1,w})\bigcup U)$. 

For each point $p\in (\bigcup^j_{w=1}B_{j+1,w})\bigcap Opt_{j+1}$, similar to (\ref{for-wsl1}) in  Claim 5, we know that the cost  increases by at most $L$ if the median point moves from $m_{j+1}$ to $\tilde{m}_{j+1}$. Thus we have
\begin{eqnarray}
&&\sum_{p\in Opt_{j+1}\bigcap(\bigcup^j_{w=1}B_{j+1,w})}||p-\tilde{m}_{j+1}|| \nonumber\\
&\leq&\sum_{p\in Opt_{j+1}\bigcap(\bigcup^j_{w=1}B_{j+1,w})}(||p-m_{j+1}||+L). \label{for-cone2}
\end{eqnarray}

For the part $Opt_{j+1}\setminus ((\bigcup^j_{w=1}B_{j+1,w})\bigcup U)$,  by triangle inequality we have
\begin{eqnarray}
&&\sum_{p\in Opt_{j+1}\setminus ((\bigcup^j_{w=1}B_{j+1,w})\bigcup U)}||p-\tilde{m}_{j+1}||\nonumber\\
&\leq& \sum_{p\in Opt_{j+1}\setminus ((\bigcup^j_{w=1}B_{j+1,w})\bigcup U)}(||p-m_{j+1}||+||m_{j+1}-\tilde{m}_{j+1}||)\nonumber\\
 &\leq& \sum_{p\in Opt_{j+1}\setminus ((\bigcup^j_{w=1}B_{j+1,w})\bigcup U)}||p-m_{j+1}||+2\epsilon\sum_{p\in Opt_{j+1}}||p-m_{j+1}||, \label{for-cone3}
\end{eqnarray}
where the second inequality follows from the assumption that $\mathcal{F}$'s distance  to $m_{j+1}$ is no more than 
$\frac{2}{|Opt_{j+1}|}\sum_{p\in Opt_{j+1}}||p-m_{j+1}||$ and $$|Opt_{j+1}\setminus ((\bigcup^j_{w=1}B_{j+1,w})\bigcup U)|\leq \epsilon |Opt_{j+1}|.$$

For each point $p\in Opt_{j+1}\cap U$, recall that the angle $\angle m_{j+1}\tilde{m}_{j+1}p\in [\frac{\pi}{2}-\frac{\epsilon}{16}, \frac{\pi}{2}+\frac{\epsilon}{16}]$ in (\ref{for-cone}). In Theorem 3.2 of \cite{BHI}, it showed that $||p-\tilde{m}_{j+1}||\leq (1+\epsilon)||p-m_{j+1}||$. Therefore, 
\begin{eqnarray}
\sum_{p\in Opt_{j+1}\cap U}||p-\tilde{m}_{j+1}||\leq (1+\epsilon)\sum_{p\in Opt_{j+1}\cap U}||p-m_{j+1}||. \label{for-cone4}
\end{eqnarray}

Combining (\ref{for-cone2}), (\ref{for-cone3}) and (\ref{for-cone4}), we  obtain (\ref{for-cone1}).
\qed
\end{proof}

To complete the Peeling-and-Enclosing algorithm for $k$-CMedian, we also need 
an upper bound for the optimal objective value.  In Section \ref{sec-conmedian},  we will show how to obtain such an estimation. For this moment, we assume that the upper bound is available.   

\begin{algorithm}
   \caption{Peeling-and-Enclosing for $k$-CMedian}
   \label{alg-kcmedian}
\begin{algorithmic}
   \STATE {\bfseries Input:} $P=\{p_1, \cdots, p_n\}$ in $\mathbb{R}^d$, $k\geq 2$, a constant $\epsilon\in(0, \frac{1}{4k^2})$, and an upper bound $\Delta\in [\mu_{opt}, c\mu_{opt}]$ with $c\geq 1$.
   \STATE {\bfseries Output:} A set of $k$-tuple candidates for the $k$ constrained median points.
    \begin{enumerate}

\item For $i=0$ to $\lceil\log_{1+\epsilon}c\rceil$ do
\begin{enumerate}
\item Set $\mu=(1+\epsilon)^i\Delta/c$, and run Algorithm~\ref{alg-tree2}.

\item  Let $\mathcal{T}_i$ be the output tree.

\end{enumerate}

\item For each root-to-leaf path of  every $\mathcal{T}_i$, build a $k$-tuple candidate using the $k$ points associated with the path.

\end{enumerate}

\end{algorithmic}
\end{algorithm}

\begin{algorithm}
   \caption{Peeling-and-Enclosing-Tree \Rmnum{2}}
   \label{alg-tree2}
\begin{algorithmic}
   \STATE {\bfseries Input:} $P=\{p_1, \cdots, p_n\}$ in $\mathbb{R}^d$, $k\geq 2$, a constant $\epsilon\in(0, \frac{1}{4k^2})$, and $\mu>0$.
 \begin{enumerate}
\item Initialize $\mathcal{T}$ with a single root node $v$ associated with no point.
\item Recursively grow each node $v$ in the following way
\begin{enumerate}
\item If the height of $v$ is already $k$, then it is a leaf.
\item Otherwise, let $j$ be the height of $v$. Build the radius candidates set $\mathcal{R}=$ \\ $\cup^{\log n}_{t=0}\{\frac{1+l\frac{\epsilon}{2}}{2(1+\epsilon)}j2^{t}\epsilon\mu\mid 0\le l\le 4+\frac{2}{\epsilon}\}$. For each $r\in\mathcal{R}$, do
\begin{enumerate}
\item   Let $\{p_{v_1}, \cdots, p_{v_j}\}$ be the $j$ points associated with nodes on the root-to-$v$ path. 

\item For each $p_{v_l}$, $1\leq l\leq j$, construct a ball $B_{j+1,l}$ centered at $p_{v_l}$ and with radius $r$. 

\item Compute a flat spanned by $\{p_{v_1}, \cdots, p_{v_j}\}$, and build a grid inside it by Lemma~\ref{lem-wsl}.
\item Take a random sample from $P\setminus\cup^j_{l=1}B_{j+1,l}$ with size $s=\frac{k^3}{\epsilon^{11}}\ln\frac{k^2}{\epsilon^6}$, and compute the flat determined by these sample  points and $\{p_{v_1}, $ $\cdots, p_{v_j}\}$. Build a grid inside the flat by Theorem~\ref{the-1med}.


\item In total, there are $O(2^{poly(\frac{k}{\epsilon})})$ grid points inside these two flats. For each grid point, add one child to $v$, and associate it with the grid point.

\end{enumerate}

\end{enumerate}
\item Output $\mathcal{T}$.
\end{enumerate}
\end{algorithmic}
\end{algorithm}

Using the same idea for proving Theorem \ref{the-ptas}, we obtain the following theorem for $k$-CMedian.

 \begin{theorem}
\label{the-ptas2}
Let $P$ be a set of $n$ points in $\mathbb{R}^{d}$ and $k\in \mathbb{Z}^+$ be a fixed constant. In $O(2^{poly(\frac{k}{\epsilon})}n(\log n)^{k+1} d )$ time,  Algorithm~\ref{alg-kcmedian} outputs $O(2^{poly(\frac{k}{\epsilon})}$ $(\log n)^{k})$ $k$-tuple candidate median points. With constant probability, there exists one $k$-tuple candidate in the output which is able to induce a $\big(1+O(\epsilon)\big)$-approximation of $k$-CMedian (together with the solution for the corresponding Partition step).
\end{theorem}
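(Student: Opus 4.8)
The plan is to mimic the proof of Theorem~\ref{the-ptas} essentially line by line, replacing each $k$-CMeans tool by its $k$-CMedian counterpart. Specifically, I would (i) formulate and prove a median analogue of Lemma~\ref{lem-induction}, asserting that among all trees generated by Algorithm~\ref{alg-kcmedian} there is, with constant probability, a root-to-leaf path whose associated points $\{p_{v_1},\dots,p_{v_k}\}$ satisfy, for each $j$,
\begin{eqnarray}
\frac{1}{|Opt_j|}\sum_{p\in Opt_j}\|p-p_{v_j}\|\le (1+O(\epsilon))\,\delta_j + O(\epsilon)\, j\,\frac{\mu_{opt}}{\beta_j},
\end{eqnarray}
where $\delta_j=\frac{1}{|Opt_j|}\sum_{p\in Opt_j}\|p-m_j\|$ and $\beta_j=|Opt_j|/n$; and (ii) prove a median analogue of Lemma~\ref{lem-equal} showing that any $k$-tuple obeying such bounds yields a $(1+O(\epsilon))$-approximation of $k$-CMedian. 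Part (ii) is the easy direction: unlike the $k$-means case we cannot invoke Claim~1 (the parallel-axis identity fails for the $\ell_1$-type cost), but by the triangle inequality $\sum_{p\in Opt_j}\|p-p_{v_j}\|\le \sum_{p\in Opt_j}\|p-m_j\|+|Opt_j|\cdot\|m_j-p_{v_j}\|$, and then the same summation over $j$ (using $\sum_j|Opt_j|\delta_j=n\mu_{opt}$, $\beta_j\delta_j\le\mu_{opt}$, and $j\le k$ a constant) gives the claimed ratio; the running-time and candidate-count bookkeeping is identical to the $k$-CMeans argument since each node of the tree still has $2^{poly(k/\epsilon)}$ children and the tree has height $k$.

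The substance is part (i), the induction. The base case $j=1$ uses Theorem~\ref{the-1med} (in place of Lemma~\ref{lem-dis}) on the largest cluster $Opt_1$, whose fraction is at least $1/k$, to extract an approximate median $p_{v_1}$. For the induction step at level $j$, I would reuse Claim~2 verbatim — the radius-candidate set $\mathcal{R}$ in Algorithm~\ref{alg-tree2} is constructed so that some $r_j\in[j\epsilon\mu_{opt}/\beta_j,(1+\epsilon/2)j\epsilon\mu_{opt}/\beta_j]$ (after substituting $\mu$ for $\sqrt{\epsilon}\delta$ and adjusting the geometric factor $2^{t}$ vs.\ $2^{t/2}$) — and reuse Claim~3 (via Markov's inequality and the induction hypothesis) to bound $|Opt_l\setminus\bigcup_{w<j}B_{j,w}|$ for $l<j$. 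Then I would split into the two cases analogous to Algorithm~\ref{alg-tree}: (b) $|\mathcal{A}|=|Opt_j\setminus\bigcup_{w<j}B_{j,w}|$ is small, in which case the $j-1$ peeling balls form a partial partition of $Opt_j$ covering at least a $(1-O(\epsilon))$ fraction, each contained in a ball of radius $r_j$, so Lemma~\ref{lem-wsl} directly produces a grid point $\tau=p_{v_j}$ in the flat through $\{p_{v_1},\dots,p_{v_{j-1}}\}$ with cost $\le(1+\frac{9}{4}\epsilon)\delta_j+(1+\epsilon)r_j$, which is absorbed into the target bound by the choice $L=r_j$; and (a) $|\mathcal{A}|$ is large (at least some $poly(\epsilon/k)$ fraction of $Opt_j$), in which case I would run the incremental sampling of Theorem~\ref{the-1med} starting from the flat $\mathcal{F}_0$ spanned by $\{p_{v_1},\dots,p_{v_{j-1}}\}$, adding $O(1/\epsilon^3\log 1/\epsilon)$ sampled points of $\mathcal{A}$ so that (with constant probability) either $\mathcal{F}$ becomes within $\frac{2}{|Opt_j|}\sum_{p\in Opt_j}\|p-m_j\|$ of $m_j$ — whereupon Lemma~\ref{lem-cone} with $L=r_j$ bounds the projection cost — or $|Opt_j\setminus((\bigcup_{w}B_{j,w})\cup U)|\le\epsilon|Opt_j|$, which is exactly the hypothesis of Lemma~\ref{lem-cone}; in both subcases a grid built in $\mathcal{F}$ as in Lemma~\ref{lem-wsl} contains the desired $p_{v_j}$. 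The sample size $s=\frac{k^3}{\epsilon^{11}}\ln\frac{k^2}{\epsilon^6}$ is chosen (via Lemma~\ref{lem-select}) so that the sample meets $\mathcal{A}$ in enough points; the $\log n$ overhead of Theorem~\ref{the-1med} for estimating each cluster's cost is avoided because $r_j$ already supplies that estimate.

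The main obstacle I anticipate is case (a): reconciling the incremental-flat machinery of Theorem~\ref{the-1med} (which is designed for a single point set viewed in isolation) with the peeling framework, where $\mathcal{F}$ is forced to already contain the previously fixed points $p_{v_1},\dots,p_{v_{j-1}}$ and the "point set" $\mathcal{A}$ is only a fraction of $Opt_j$. Concretely, I must verify that starting the incremental process from $\mathcal{F}_0$ (rather than from a single sampled point) does not break the invariant that the projection $\tilde m_j$ of $m_j$ onto the current flat has controlled cost, and that the dichotomy "$\mathcal{F}$ close to $m_j$" vs.\ "$|Opt_j\setminus U|$ small" still terminates within $O(1/\epsilon^3\log1/\epsilon)$ steps when the ambient point set is $Opt_j$ but the samples are drawn only from $\mathcal{A}$ — this is where the lower bound $|\mathcal{A}|\ge poly(\epsilon/k)\,|Opt_j|$ and a careful union bound over the (constantly many) iterations are essential. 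A secondary, milder nuisance is propagating the additive error: each level contributes an $(1+\epsilon)$ multiplicative blow-up and an $O(\epsilon)j\mu_{opt}/\beta_j$ additive term, and I need the constants (the $\frac{9}{4}\epsilon$ from Lemma~\ref{lem-wsl}, the $2\epsilon$ from Lemma~\ref{lem-cone}, the $\epsilon/2$ slack in $r_j$ from Claim~2, the $\epsilon^3/j$ discarded mass) to compose into a clean $(1+O(\epsilon))$ overall ratio after summing over the $k$ levels, exactly as in the $k$-CMeans computation leading to~(\ref{for-11}).
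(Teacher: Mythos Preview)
Your proposal is correct and follows essentially the same approach as the paper. In fact the paper does not spell out a detailed proof of Theorem~\ref{the-ptas2} at all: it states only that ``using the same idea for proving Theorem~\ref{the-ptas}'' the result follows, with the Algorithm overview in Section~\ref{sec-ptasmedian} sketching exactly the two-case split (small $|\mathcal{A}|$ handled by the Weaker Simplex Lemma on the flat through $\{p_{v_1},\dots,p_{v_{j-1}}\}$; large $|\mathcal{A}|$ handled by growing that flat via the incremental sampling of Theorem~\ref{the-1med} and invoking Lemma~\ref{lem-cone}), and explicitly making the same remark you do about the $\log n$ factor being absorbed by the radius guess~$r_j$. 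Your write-up is therefore more detailed than what the paper itself provides, and the two points you flag as obstacles --- that the flat is pre-seeded with $\{p_{v_1},\dots,p_{v_{j-1}}\}$ and that the induction must track the cost $\frac{1}{|Opt_j|}\sum_{p}\|p-p_{v_j}\|$ rather than $\|p_{v_j}-m_j\|$ (since Claim~1 fails for medians) --- are exactly the places where Lemma~\ref{lem-wsl} and Lemma~\ref{lem-cone} are designed to plug in.
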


\subsection{Upper Bound Estimation for $k$-CMedian}
\label{sec-conmedian}

In this section, we show how to obtain an upper bound of the optimal objective value of $k$-CMedian.


\begin{theorem}
\label{the-medconstant}
Let $P=\{p_1, \cdots, p_n\}$ be the input points of $k$-CMedian, and $\mathcal{C}$ be the set of $k$ median points of a $\lambda$-approximation of $k$-median on $P$ (without considering the constraint) for some constant $\lambda\geq 1$. Then the Cartesian product $[\mathcal{C}]^k$ contains at least one $k$-tuple which is able to induce a $(3\lambda+2)$-approximation of $k$-CMedian (together with the solution for the corresponding Partition step).
\end{theorem}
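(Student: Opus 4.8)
The plan is to mirror the argument used for $k$-CMeans in Theorem~\ref{the-constant}, replacing the squared-distance inequalities by direct applications of the triangle inequality (which is precisely why the constant drops from $18\lambda+16$ to $3\lambda+2$). Let $\omega=\frac{1}{n}\sum_{i=1}^n\|p_i-c(p_i)\|$ be the ordinary $k$-median cost on $P$ of the $\lambda$-approximate solution $\mathcal{C}$, where $c(p_i)$ denotes the point of $\mathcal{C}$ nearest to $p_i$. Since the unconstrained $k$-median optimum never exceeds the constrained optimum $\mu_{opt}$ on the same point set, we have $\omega\le\lambda\mu_{opt}$. As in the proof of Theorem~\ref{the-constant}, I would build the auxiliary instance $\tilde{P}=\{\tilde{p}_1,\dots,\tilde{p}_n\}$ with $\tilde{p}_i=c(p_i)$, so that $\frac{1}{n}\sum_i\|p_i-\tilde{p}_i\|=\omega$, and let $\tilde{\mu}_{opt}$ denote the optimal $k$-CMedian value on $\tilde{P}$ and $\mu_{opt}([\mathcal{C}]^k)$ the minimum $k$-CMedian cost on $P$ when the median points are restricted to lie in $[\mathcal{C}]^k$.

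First I would establish the median analogue of Lemma~\ref{lem-ieq}, namely $\tilde{\mu}_{opt}\le\omega+\mu_{opt}$ and $\mu_{opt}([\mathcal{C}]^k)\le\omega+2\tilde{\mu}_{opt}$. For the first inequality, take the optimal constrained clustering $\{Opt_1,\dots,Opt_k\}$ of $P$ with median points $m_1,\dots,m_k$; since relocating points does not affect which subsets of indices are feasible, the same partition is a feasible clustering of $\tilde{P}$, and by the triangle inequality $\|\tilde{p}_i-m_l\|\le\|\tilde{p}_i-p_i\|+\|p_i-m_l\|$ for $p_i\in Opt_l$. Averaging over $P$ bounds a feasible $\tilde{P}$-clustering cost by $\omega+\mu_{opt}$, hence $\tilde{\mu}_{opt}\le\omega+\mu_{opt}$. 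For the second inequality, take the optimal constrained clustering $\{\tilde{O}_1,\dots,\tilde{O}_k\}$ of $\tilde{P}$ with median points $\tilde{m}_1,\dots,\tilde{m}_k$, and let $\tilde{c}_l\in\mathcal{C}$ be nearest to $\tilde{m}_l$. Because every $\tilde{p}_i$ lies in $\mathcal{C}$, we get $\|\tilde{m}_l-\tilde{c}_l\|\le\|\tilde{m}_l-\tilde{p}_i\|$, so $\|\tilde{p}_i-\tilde{c}_l\|\le 2\|\tilde{p}_i-\tilde{m}_l\|$; then the clustering of $P$ that assigns $p_i$ to $\tilde{c}_l$ whenever $\tilde{p}_i\in\tilde{O}_l$ has cost at most $\frac{1}{n}\sum_i\|p_i-\tilde{p}_i\|+\frac{1}{n}\sum_l\sum_{\tilde{p}_i\in\tilde{O}_l}\|\tilde{p}_i-\tilde{c}_l\|\le\omega+2\tilde{\mu}_{opt}$, which upper-bounds $\mu_{opt}([\mathcal{C}]^k)$ since $(\tilde{c}_1,\dots,\tilde{c}_k)\in[\mathcal{C}]^k$.

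Chaining the two inequalities gives $\mu_{opt}([\mathcal{C}]^k)\le\omega+2(\omega+\mu_{opt})=3\omega+2\mu_{opt}\le(3\lambda+2)\mu_{opt}$, so some $k$-tuple in $[\mathcal{C}]^k$ induces a $(3\lambda+2)$-approximation of $k$-CMedian, completing the proof. The one point that needs care — and essentially the only place the argument could go wrong — is the claim that moving points to their nearest centers preserves the feasibility of any given clustering: this holds for all the constrained problems treated here because each constraint ($r$-gather, capacity, $l$-diversity, chromatic, fault-tolerant, semi-supervised) depends only on the identities of the points grouped together and not on their coordinates, and for the median version of uncertain-data clustering the same reduction applies once the ``nodes'' are treated as the atomic objects being relocated. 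This observation is routine but must be stated explicitly, since it is what licenses comparing $\mu_{opt}$, $\tilde{\mu}_{opt}$, and $\mu_{opt}([\mathcal{C}]^k)$ across the two instances.
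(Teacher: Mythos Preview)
Your proposal is correct and follows essentially the same route as the paper: it establishes the two inequalities $\tilde{\mu}_{opt}\le\omega+\mu_{opt}$ and $\mu_{opt}([\mathcal{C}]^k)\le\omega+2\tilde{\mu}_{opt}$ (the paper's Lemmas~\ref{lem-mieq1} and~\ref{lem-mieq2}) via the triangle inequality and the observation that $\tilde{p}_i\in\mathcal{C}$, then chains them with $\omega\le\lambda\mu_{opt}$. Your explicit remark that relocating points preserves feasibility of the constraint is a welcome clarification that the paper leaves implicit.
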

%

Let $\{c_1, \cdots, c_k\}$ be the $k$ median points in  $\mathcal{C}$, and $\omega$ be  the corresponding objective value of the $k$-median approximate solution on $P$.
Recall that $\{m_{1}, \cdots, m_{k}\}$ are the $k$ unknown optimal constrained median points  of $P$, and $\mathcal{OPT}=\{Opt_{1}, \cdots,$ $Opt_{k}\}$ are the corresponding $k$ optimal constrained clusters. To prove Theorem \ref{the-medconstant}, 
we  create a new instance of $k$-CMedian in the following way:  
for each point $p_i\in P$, move it to its nearest point, say $c_{t}$,  in $\{c_1, \cdots, c_k\}$; let $\tilde{p}_i$ denote the new $p_i$ 
 (note that $c_t$ and $\tilde{p}_i$ overlap with each other). Then the set $\tilde{P}=\{\tilde{p}_1, \cdots, \tilde{p}_n\}$ forms a new instance of $k$-CMedian. Let $\mu_{opt}$ and $\tilde{\mu}_{opt}$ be the optimal cost of $P$ and $\tilde{P}$ respectively, and $\mu_{opt}([\mathcal{C}]^k)$ 
 be the minimum cost of $P$ by restricting its $k$ constrained median points to being a $k$-tuple in $[\mathcal{C}]^k$.
 The following two lemmas are keys to proving Theorem \ref{the-medconstant}.

\begin{lemma}
\label{lem-mieq1}
$\tilde{\mu}_{opt} \leq \omega+ \mu_{opt}$.
\end{lemma}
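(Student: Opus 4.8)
The plan is to mimic exactly the argument used for the first inequality of Lemma~\ref{lem-ieq} in the $k$-CMeans case, but with the squared-distance objective replaced by the (un-squared) distance objective of $k$-CMedian; the key simplification is that the triangle inequality alone suffices here, so we do not need the quadratic inequality $(a+b)^2 \le 2a^2 + 2b^2$ and hence pick up no factor of $2$. Concretely, fix any point $p_i \in P$ and let $Opt_l$ be the optimal constrained cluster of $P$ containing $p_i$. By the triangle inequality,
\[
||\tilde{p}_i - m_l|| \le ||\tilde{p}_i - p_i|| + ||p_i - m_l||.
\]

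Next I would sum this over all $i$ (equivalently, over all clusters $Opt_l$ and all $p_i \in Opt_l$) and divide by $n$, obtaining
\[
\frac{1}{n}\sum_{l=1}^k \sum_{p_i \in Opt_l} ||\tilde{p}_i - m_l|| \;\le\; \frac{1}{n}\sum_{i=1}^n ||\tilde{p}_i - p_i|| + \frac{1}{n}\sum_{l=1}^k \sum_{p_i \in Opt_l} ||p_i - m_l||.
\]
On the right-hand side, the first term equals $\omega$ by the construction of $\tilde P$ (each $p_i$ was moved to its nearest center $c_t$, and $\omega$ is precisely the $k$-median cost of that assignment), and the second term equals $\mu_{opt}$ by definition of the optimal constrained clustering $\mathcal{OPT}$ with median points $m_1,\dots,m_k$. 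For the left-hand side, observe that assigning each $\tilde p_i$ to the cluster center $m_l$ (where $l$ is the index of the optimal cluster of the corresponding $p_i$) gives a feasible clustering of $\tilde P$: the partition of $\tilde P$ induced this way is identical, as an index partition, to $\mathcal{OPT}$, so it satisfies the same constraint $\mathbb{C}$. Hence the left-hand side is an upper bound on $\tilde\mu_{opt}$, the optimal constrained $k$-median cost of $\tilde P$ (note $\tilde\mu_{opt}$ is defined as a minimum over feasible clusterings \emph{and} over choices of median points, so in particular it is at most the cost of this particular feasible clustering with centers $m_l$). Combining, $\tilde\mu_{opt} \le \omega + \mu_{opt}$.

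I do not anticipate a genuine obstacle here; the only point that needs a word of care is the feasibility claim for the clustering of $\tilde P$ — namely that ``move points, keep the same index partition'' preserves the constraint. This is immediate for every constraint considered in the paper because all of them ($r$-gather, capacity, $l$-diversity, chromatic, fault-tolerant, semi-supervised) are constraints on which original points share a cluster (sizes, colors, priori labels), and those attributes are carried along with $p_i \mapsto \tilde p_i$. So the mild subtlety is purely a matter of stating the reduction at the right level of generality; the inequality itself is a two-line triangle-inequality-and-average computation.
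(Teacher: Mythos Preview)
Your proof is correct and follows essentially the same route as the paper: apply the triangle inequality to each $p_i$, average over all points, identify the two right-hand terms as $\omega$ and $\mu_{opt}$, and observe that the left-hand side upper-bounds $\tilde\mu_{opt}$ because the induced clustering of $\tilde P$ is feasible. Your extra sentence justifying feasibility of the constraint is slightly more explicit than the paper's version, but the argument is otherwise identical.
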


\begin{proof}
%
For each $p_i\in Opt_l$, 
by triangle inequality we have
\begin{eqnarray}
||\tilde{p}_i-m_l||\leq ||\tilde{p}_i-p_i||+||p_i-m_l||.\label{for-mc1}
\end{eqnarray}
For both sides of (\ref{for-mc1}), taking the averages over $i$ and $l$, we get
\begin{eqnarray}
\frac{1}{n}\sum^k_{l=1}\sum_{p_i\in Opt_l}||\tilde{p}_i-m_l||  \leq  \frac{1}{n}\sum^n_{i=1} ||\tilde{p}_i-p_i||+\frac{1}{n}\sum^k_{l=1}\sum_{p_i\in Opt_l}||p_i-m_l||.\label{for-mc2}
\end{eqnarray}
Note that the left-hand side of (\ref{for-mc2}) is not smaller than $\tilde{\mu}_{opt}$, since $\tilde{\mu}_{opt}$ is the optimal object value of $k$-CMedian on $\tilde{P}$. For the right-hand side of (\ref{for-mc2}), the first term $\frac{1}{n} \sum^n_{i=1} ||\tilde{p}_i-p_i||=\omega$ (by the construction of $\tilde{P}$), and the second term $\frac{1}{n}\sum^k_{l=1}\sum_{p_i\in Opt_l}||p_i-m_l||=\mu_{opt}$. Thus, we  have $\tilde{\mu}_{opt} \leq \omega+ \mu_{opt}$. 
\qed
\end{proof}

\begin{lemma}
\label{lem-mieq2}
$\mu_{opt}([\mathcal{C}]^k)\leq \omega+2\tilde{\mu}_{opt}$.
\end{lemma}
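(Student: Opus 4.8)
The plan is to mimic the second inequality of Lemma~\ref{lem-ieq}, replacing squared Euclidean distances by plain Euclidean distances; since the median cost is linear in the distances, the triangle inequality applies directly and the loss factor improves from $8$ to $2$. First I would set up the optimal solution on $\tilde{P}$: let $\{\tilde{m}_1,\dots,\tilde{m}_k\}$ be the $k$ optimal constrained median points of $\tilde{P}$ and $\{\tilde{O}_1,\dots,\tilde{O}_k\}$ the corresponding clusters, so that $\tilde{\mu}_{opt}=\frac1n\sum_{l=1}^k\sum_{\tilde{p}_i\in\tilde{O}_l}\|\tilde{p}_i-\tilde{m}_l\|$. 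For each $l$, let $\tilde{c}_l\in\mathcal{C}$ be the point of $\mathcal{C}$ nearest to $\tilde{m}_l$; then $(\tilde{c}_1,\dots,\tilde{c}_k)\in[\mathcal{C}]^k$.

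The key pointwise estimate is that for any $\tilde{p}_i\in\tilde{O}_l$, since $\tilde{p}_i$ coincides with one of the points of $\mathcal{C}$ (by the construction of $\tilde{P}$) and $\tilde{c}_l$ is the nearest point of $\mathcal{C}$ to $\tilde{m}_l$, we have $\|\tilde{m}_l-\tilde{c}_l\|\le\|\tilde{m}_l-\tilde{p}_i\|$; combining this with the triangle inequality gives
\[
\|\tilde{p}_i-\tilde{c}_l\|\le\|\tilde{p}_i-\tilde{m}_l\|+\|\tilde{m}_l-\tilde{c}_l\|\le 2\|\tilde{p}_i-\tilde{m}_l\|.
\]
Next I would transfer this clustering back to $P$: assign $p_i$ to $\tilde{c}_l$ whenever $\tilde{p}_i\in\tilde{O}_l$. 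This partition of $P$ has exactly the same index sets as $\{\tilde{O}_l\}$, hence it satisfies the constraint $\mathbb{C}$ (the constraints considered here depend only on the combinatorial partition), and its cost is an upper bound for $\mu_{opt}([\mathcal{C}]^k)$. Estimating that cost with one more triangle inequality and the bound above,
\[
\tfrac1n\sum_{l=1}^k\sum_{\tilde{p}_i\in\tilde{O}_l}\|p_i-\tilde{c}_l\|\le\tfrac1n\sum_{i=1}^n\|p_i-\tilde{p}_i\|+\tfrac2n\sum_{l=1}^k\sum_{\tilde{p}_i\in\tilde{O}_l}\|\tilde{p}_i-\tilde{m}_l\|=\omega+2\tilde{\mu}_{opt},
\]
where $\frac1n\sum_i\|p_i-\tilde{p}_i\|=\omega$ by the definition of $\tilde{P}$. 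This yields $\mu_{opt}([\mathcal{C}]^k)\le\omega+2\tilde{\mu}_{opt}$.

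As for the main difficulty: the computations are routine, so the only point requiring care is the feasibility step, namely arguing that re-clustering $P$ according to the index sets of $\{\tilde{O}_l\}$ (with centers $\tilde{c}_l$) preserves the constraint $\mathbb{C}$ — this is precisely what legitimizes the whole ``move-to-nearest-center'' reduction, and is the same observation already used implicitly for $k$-CMeans. Finally, combining Lemma~\ref{lem-mieq1} with this lemma gives $\mu_{opt}([\mathcal{C}]^k)\le 3\omega+2\mu_{opt}\le(3\lambda+2)\mu_{opt}$, using $\omega\le\lambda\mu_{opt}$, which establishes Theorem~\ref{the-medconstant}.
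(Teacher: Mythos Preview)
Your proposal is correct and follows essentially the same argument as the paper's proof: define $\tilde{c}_l$ as the nearest point of $\mathcal{C}$ to $\tilde{m}_l$, use $\|\tilde{p}_i-\tilde{c}_l\|\le 2\|\tilde{p}_i-\tilde{m}_l\|$ (since $\tilde{p}_i\in\mathcal{C}$), then transfer the clustering back to $P$ via one more triangle inequality. Your explicit remark about feasibility of the induced partition is a welcome clarification that the paper leaves implicit.
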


\begin{proof}
Consider $k$-CMedian on $\tilde{P}$. Let $\{\tilde{m}_1, \cdots, \tilde{m}_k\}$ be the optimal constraint median points, and $\{\tilde{O}_1, \cdots, \tilde{O}_k\}$ be the corresponding optimal constraint clusters of $\tilde{P}$.  Let $\{\tilde{c}_1, \cdots, \tilde{c}_k\}$ be the $k$-tuple in $[\mathcal{C}]^k$ with $\tilde{c}_l$ being the nearest point in $\mathcal{C}$ to $\tilde{m}_l$. Thus, by an argument similar to the one for (\ref{for-mc1}), we have the following inequality, where $\tilde{p}_i$  is assumed to be clustered in $\tilde{O}_l$.
\begin{eqnarray}
||\tilde{p}_i-\tilde{c}_l|| &\leq &||\tilde{p}_i-\tilde{m}_l||+||\tilde{m}_l-\tilde{c}_l||\leq2||\tilde{p}_i-\tilde{m}_l||. \label{for-mc3}
\end{eqnarray}
In (\ref{for-mc3}), the last one follows from the facts that $\tilde{c}_l$ is the nearest point in $\mathcal{C}$ to $\tilde{m}_l$  and $\tilde{p}_i \in \mathcal{C}$, which implies $||\tilde{m}_l-\tilde{c}_l||\leq ||\tilde{m}_l-\tilde{p}_i||$.  For both sides of (\ref{for-mc3}), taking the averages over  $i$ and $l$, we have
\begin{eqnarray}
\frac{1}{n}\sum^k_{l=1}\sum_{\tilde{p}_i\in\tilde{O}_l}||\tilde{p}_i-\tilde{c}_l|| &\leq&  2\frac{1}{n}\sum^k_{l=1}\sum_{\tilde{p}_i\in\tilde{O}_l}||\tilde{p}_i-\tilde{m}_l||. \label{for-mc4}
\end{eqnarray}

%
Now, consider the following $k$-CMedian on $P$.  For each $p_i$, if  $ \tilde{p}_i\in\tilde{O}_l$, we cluster it to the corresponding median point $\tilde{c}_l$.
Then the objective value of the clustering is
\begin{eqnarray}
\frac{1}{n}\sum^k_{l=1}\sum_{\tilde{p}_i\in\tilde{O}_l}||p_i-\tilde{c}_l|| &\leq&\frac{1}{n} \sum^k_{l=1}\sum_{\tilde{p}_i\in\tilde{O}_l}(||p_i-\tilde{p}_i||+||\tilde{p}_i-\tilde{c}_l||)\nonumber\\
&\leq&\frac{1}{n} \sum^k_{l=1}\sum_{\tilde{p}_i\in\tilde{O}_l}||p_i-\tilde{p}_i||+2\frac{1}{n}\sum^k_{l=1}\sum_{\tilde{p}_i\in\tilde{O}_l}||\tilde{p}_i-\tilde{m}_l||.\label{for-mc7}
\end{eqnarray}
The left-hand side of (\ref{for-mc7}), {\em i.e.,} $\frac{1}{n}\sum^k_{l=1}\sum_{\tilde{p}_i\in\tilde{O}_l}||p_i-\tilde{c}_l||$, is no smaller than $\mu_{opt}([\mathcal{C}]^k)$ (by the definition), and the right-hand side of (\ref{for-mc7}) is equal to $\omega+2\tilde{\mu}_{opt}$. Thus, we  have $\mu_{opt}([\mathcal{C}]^k)\leq \omega+2\tilde{\mu}_{opt}$.
\qed
\end{proof}

 \begin{proof}[\textbf{of Theorem \ref{the-medconstant}}]
By Lemma \ref{lem-mieq1} and Lemma \ref{lem-mieq2}, we know that $\mu_{opt}([\mathcal{C}]^k)\leq 3\omega+2\mu_{opt}$. It is obvious that the optimal objective value of the $k$-median clustering is no larger than that of $k$-CMedian on the same set of points in $P$. This  implies that $\omega\leq \lambda \mu_{opt}$. Thus, we have 
\begin{eqnarray}
\mu_{opt}([\mathcal{C}]^k)\leq (3\lambda+2)\mu_{opt}.
\end{eqnarray}
The above inequality means that there exists one $k$-tuple in $[\mathcal{C}]^k$, which is able to induce a $(3\lambda+2)$-approximation.
\qed
\end{proof}


\subsection{Selection Algorithms for $k$-CMedian}
\label{sec-application2}

For each of the six constrained clustering problems studied in Section \ref{sec-application}, the same results (including the approximation ratio and time complexity) can be extended to the corresponding constrained $k$-median version with slight modification ({\em e.g.,} assigning the edge cost to be the Euclidean distance rather than squared Euclidean distance when computing the minimum cost circulation on the graph $G$). Thus, we only focus on the probabilistic clustering problem.\\

\noindent\textbf{Probabilistic $k$-Median Clustering ($k$-PMedian) \cite{CM08}.} Let $V=\{v_1, \cdots,$ $v_n\}$ be a set of nodes;  each node $v_{i}$ is associated with 
a point set $D_i=\{p^i_1, \cdots, p^i_h\}\subset \mathbb{R}^d$, where each $p^i_l$ has a probability $t^i_l\geq 0$ satisfying the condition   $\sum^h_{l=1}t^i_l\leq 1$.
Let $w_i=\sum^h_{l=1}t^i_l$ for $1\leq i\leq n$ and $W=\sum^n_{i=1}w_i$.
 $k$-PMedian is the problem of finding $k$ points $\{m_1, \cdots, m_k\}$ in $\mathbb{R}^d$ such that $\sum^n_{i=1}\min_{1\leq j\leq k}dist\{v_i,m_j\}$ is minimized, where $dist\{v_i,m_j\}=$ $\sum^h_{l=1}$ $t^i_l ||p^i_l-m_j||$.

Note that for the $k$-means version of probabilistic clustering, Cormode and McGregor\cite{CM08} have showed that it can be reduced to an ordinary $k$-means clustering problem 
after replacing each $D_i$ by its weighted mean point. However, this strategy
can only yield a $(3+\epsilon)$-approximation for the $k$-median version \cite{CM08,XX10}. We briefly sketch our idea for solving  $k$-PMedian  below.

Actually, $k$-PMedian is equivalent to the $k$-median clustering on the weighted point set $\bigcup^n_{i=1}D_i$ with the constraint that for each  $1\leq i\leq n$, all the points in $D_i$ should be clustered into the same cluster. Thus, we can use our  Peeling-and-Enclosing algorithm for $k$-CMedian in Section \ref{sec-ptasmedian} to generate a set of candidates for the constrained $k$ median points; the difference is that the points have weights, and thus in each sampling step we sample points with probabilities proportional to their weights. To accommodate such a difference, several minor  modifications need to be made to Lemma \ref{lem-wsl} and Lemma \ref{lem-cone}: all distances are changed to weighted distances, and the involved set sizes (such as $|P|$) are changed to $nh$. 

As for the running time of the Peeling-and-Enclosing algorithm, it still builds the trees with heights equal to $k$. But the number of children for each node is different.  Recall that in the proof of  Claim 2, in order to obtain an estimation for $\beta_j=\frac{|Opt_j|}{n}$, we need to try $O(\log n)$ times since $\frac{1}{n}\leq \beta_j\leq 1$; but for $k$-PMedian, the range of $\beta_j$ becomes $[\frac{w_{min}}{W}, 1]$ where $w_{min}=\min_{1\leq i\leq n}w_i$ (note that $W=\sum^n_{i=1}w_i\leq n$). Thus, the running time of Peeling-and-Enclosing algorithm becomes $O(nh (\log \frac{n}{w_{min}})^{k+1} d)$. Furthermore, for each $k$-tuple candidate, we perform the Partition step through assigning each $D_i$ to the $m_j$ with the smallest $dist\{v_i,m_j\}$. Obviously, the Partition step can be finished within linear time.  Thus we have the following theorem.


\begin{theorem}
\label{the-uncertain}
There exists an algorithm yielding a $(1+\epsilon)$-approximation for $k$-PMedian with constant probability, in $O(2^{poly(\frac{k}{\epsilon})}$ $nh$ $ (\log \frac{n}{w_{min}})^{k+1}$ $d)$ time,  where $w_{min}=\min_{1\leq i\leq n}w_i$. 
\end{theorem}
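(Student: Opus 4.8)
The plan is to assemble the result from three ingredients already developed in the paper: (i) the Peeling-and-Enclosing algorithm for $k$-CMedian (Theorem~\ref{the-ptas2}), which produces $O(2^{poly(k/\epsilon)}(\log n)^k)$ candidate $k$-tuples of median points, one of which induces a $(1+O(\epsilon))$-approximation once the Partition step is solved; (ii) the upper-bound estimation for $k$-CMedian (Theorem~\ref{the-medconstant}), which supplies the constant-factor bound $\Delta$ needed to seed Algorithm~\ref{alg-kcmedian}; and (iii) a problem-specific Partition (selection) algorithm for $k$-PMedian. The main conceptual point is to recognize that $k$-PMedian is exactly an instance of $k$-CMedian: take the weighted point set $\bigcup_{i=1}^n D_i$ and impose the constraint that all points of a single $D_i$ must land in the same cluster. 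Under this reduction, $dist\{v_i,m_j\}=\sum_{l=1}^h t^i_l\|p^i_l-m_j\|$ is precisely the contribution of node $v_i$'s points to the $k$-median objective, so minimizing $\sum_i \min_j dist\{v_i,m_j\}$ is minimizing the constrained $k$-median cost.

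The first step I would carry out is to make the weighted/grouped version of the Peeling-and-Enclosing machinery precise. This requires revisiting Lemma~\ref{lem-wsl} and Lemma~\ref{lem-cone}: every occurrence of a cardinality like $|P|$, $|P_l|$, $|Opt_{j+1}|$ becomes a total weight, every averaged distance becomes a weighted average, and Markov's inequality is applied to the weighted distribution rather than the uniform one. These are formal substitutions and the proofs go through verbatim, because the only properties used are linearity of the weighted mean and Markov's inequality. Similarly, the random sampling lemmas (Lemma~\ref{lem-dis}, Lemma~\ref{lem-select}, Theorem~\ref{the-1med}) are invoked on a sample drawn with probability proportional to the weights $t^i_l$; since the relevant fraction $\beta_j$ is now $\frac{|Opt_j|_w}{W}$ (weighted), the sampling guarantees are unchanged in form. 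I would also note that the constraint ``$D_i$ stays together'' is automatically respected during candidate generation because candidate generation only ever produces $k$-tuples of points in $\mathbb{R}^d$ — the constraint is enforced purely in the Partition step.

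Next I would handle the running time, which is where the only genuine deviation from the generic theorem occurs. In the generic $k$-CMedian analysis, the radius-candidate set $\mathcal{R}$ in Algorithm~\ref{alg-tree2} has size $O(\frac{\log n}{\epsilon})$ because $\beta_j=\frac{|Opt_j|}{n}\in[\frac1n,1]$ forces $O(\log n)$ geometric guesses (Claim~2 in the proof of Lemma~\ref{lem-induction}). For $k$-PMedian the weighted fraction satisfies $\beta_j\in[\frac{w_{min}}{W},1]$, so the number of geometric guesses is $O(\log\frac{W}{w_{min}})=O(\log\frac{n}{w_{min}})$, using $W\le n$. Multiplying through the tree of height $k$, the candidate count becomes $O(2^{poly(k/\epsilon)}(\log\frac{n}{w_{min}})^k)$ and the time $O(2^{poly(k/\epsilon)}\,nh\,(\log\frac{n}{w_{min}})^{k+1}\,d)$, the extra factor $h$ coming from the fact that the weighted point set $\bigcup_i D_i$ has $nh$ points so each distance-to-candidate sweep costs $O(nhd)$. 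The Partition step is then trivial: for a fixed candidate $\{m_1,\dots,m_k\}$, each node $v_i$ is assigned independently to the $m_j$ minimizing $dist\{v_i,m_j\}$ (no cross-node constraint exists), which takes $O(nhkd)=O(nhd)$ time for fixed $k$, dominated by the candidate-generation cost. Finally, Theorem~\ref{the-medconstant} gives the constant $c$ needed as input, costing only an unconstrained $k$-median approximation (e.g.\ \cite{KSS}) on the $nh$ weighted points, again within budget.

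The main obstacle I anticipate is not any single hard step but the care needed to verify that the weighted reformulation of the \emph{weaker simplex lemma} (Lemma~\ref{lem-wsl}) and of Lemma~\ref{lem-cone} is truly faithful — in particular that the ``significantly large subset'' hypothesis ($|P\setminus\bigcup P_l|\le\epsilon|P|$) and the pigeonhole/Markov arguments in Claim~4 through Claim~6 all remain valid when ``size'' is replaced by ``weight'' and when an individual $D_i$ is forced to be indivisible. The indivisibility constraint interacts with the peeling spheres: a group $D_i$ could straddle a peeling sphere's boundary, so one must argue (exactly as in the unweighted constrained case) that the peeling spheres are only used to \emph{bound} the displacement of sub-cluster medians, not to define the final clustering, so straddling groups cause no problem. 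Once that is checked, the rest is bookkeeping, and the theorem follows by composing Theorem~\ref{the-ptas2} (weighted version) with the linear-time Partition step and Theorem~\ref{the-medconstant}, exactly in the pattern of Theorems~\ref{the-rgather}--\ref{the-semi}.
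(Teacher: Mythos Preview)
Your proposal is correct and follows essentially the same route as the paper: reduce $k$-PMedian to a weighted, group-constrained $k$-CMedian on $\bigcup_i D_i$, run the Peeling-and-Enclosing algorithm with weight-proportional sampling and weighted versions of Lemma~\ref{lem-wsl} and Lemma~\ref{lem-cone}, adjust the radius-guessing range to $\beta_j\in[\frac{w_{min}}{W},1]$ to obtain the $(\log\frac{n}{w_{min}})^{k+1}$ factor, and solve the Partition step by independently assigning each $D_i$ to its closest candidate median. If anything, your discussion of the weighted adaptation and the straddling-group issue is more explicit than the paper's own sketch, which simply asserts that the required changes are ``minor modifications.''
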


\section{Future Work}

Following this work, some interesting problems deserve to be further studied in the future. For example, we reduce the partition step to the minimum cost circulation problem for several constrained clustering problems in Section~\ref{sec-application}; however, since the goal is to find an approximate solution, one may consider using the geometric information to solve the Partition step approximately. In Euclidean space, several techniques have been developed for solving approximate matching problems efficiently \cite{AON,SA122}. But it is still not clear whether such techniques  can be extended to solve the constrained matching problems (such as the $r$-gather or $l$-diversity) considered in this paper, especially in high dimensional space. We leave it as an open problem for future work.




\section{Appendix}
\label{sec-detail}

\subsection{Proof of Lemma \ref{lem-shift}}
\label{sec-detail_shift}

Similar to Lemma \ref{lem-simplex}, we prove this lemma by mathematical induction on $j$. 

\textbf{Base case:} For $j=1$, since $o_1=o$, we just need to let $\tau=o'_1$. Then, we have 
\begin{eqnarray}
||\tau-o||=||o'_1-o||=||o'_1-o_1||\leq L\leq\sqrt{\epsilon}\delta+(1+\epsilon)L. 
\end{eqnarray}
Thus, the base case holds.

\textbf{Induction step:} Assume that the lemma holds for any $j\leq j_0$ for some $j_{0} \ge 1$ ({\em i.e.}, the induction hypothesis). Now we consider the case of $j=j_0+1$. Similar to the proof of Lemma~\ref{lem-simplex}, we assume that $\frac{|Q_l |}{|Q|}\geq \frac{\epsilon}{4j}$ for each $1\leq l\leq j$. Otherwise, through a similar idea from Lemma~\ref{lem-simplex}, it can be reduced to the case with smaller $j$, and solved by the induction hypothesis.  Hence, in the following discussion, we assume that $\frac{|Q_l |}{|Q|}\geq \frac{\epsilon}{4j}$ for each $1\leq l\leq j$. 

First, we know that $o=\sum^j_{l=1}\frac{|Q_l |}{|Q|} o_l$. Let $o'=\sum^j_{l=1}\frac{|Q_l |}{|Q|} o'_l$. Then, we have 
\begin{eqnarray}
||o-o'||=||\sum^j_{l=1}\frac{|Q_l |}{|Q|} o_l-\sum^j_{l=1}\frac{|Q_l |}{|Q|} o'_l||\leq \sum^j_{l=1}\frac{|Q_l |}{|Q|}||o_l-o'_l||\leq L. \label{for-8}
\end{eqnarray}
Thus, if we can find a grid point $\tau$ having $||\tau-o'||\leq \sqrt{\epsilon}\delta+\epsilon L$, by inequality (\ref{for-8}), we will have $||\tau-o||\leq||\tau-o'||+||o'-o||\leq\sqrt{\epsilon}\delta+(1+\epsilon)L$. So  we only need to find a grid point close enough to $o'$. 

To find such a $\tau$, we consider the distance from $o'_{l}$ to $o'$ for any $1\leq l\leq j$. We have
\begin{eqnarray}
||o'_l-o'||\leq ||o'_l-o_l||+||o_l-o||+||o-o'||\leq 2\sqrt{\frac{j}{\epsilon}}\delta+2L, \label{for-9}
\end{eqnarray}
where the first inequality follows from triangle inequality, and the second inequality follows from the facts that $||o'_l-o_l||$ and $||o-o'||$ are both bounded by $L$, and $||o_l-o||\leq  2\sqrt{\frac{j}{\epsilon}}\delta$ (by Lemma \ref{lem-close}).

This implies that we can use a similar idea in Lemma \ref{lem-simplex} to construct a ball $\mathcal{B}$ centered at any  $o'_{l_0}$ and with radius $r=\max_{1\leq l\leq j}\{||o'_l-o'_{l_0}||\}$. Also, the simplex $\mathcal{V}'$ is inside $\mathcal{B}$. Note that 
 \begin{eqnarray}
 ||o'_l-o'_{l_0}||\leq ||o'_l-o'||+||o'-o'_{l_0}||\leq 4\sqrt{\frac{j}{\epsilon}}\delta+4L\label{for-simplex2_radii}
 \end{eqnarray}
 by (\ref{for-9}), which implies $r\leq 4\sqrt{\frac{j}{\epsilon}}\delta+4L$. Similar to Lemma \ref{lem-simplex}, we can build a grid inside $\mathcal{B}$ with grid length $\frac{\epsilon r}{4j}$, and the number of grid points is $O((8j/\epsilon)^j)$. Moreover, $o'$ must lie inside $\mathcal{V}'$ by the definition. In this grid, we can find a grid point $\tau$ such that $||\tau-o'||\leq \frac{\epsilon}{4\sqrt{j}}r\leq \sqrt{\epsilon}\delta+\epsilon L$. Thus, $||\tau-o||\leq||\tau-o'||+||o'-o||\leq\sqrt{\epsilon}\delta+(1+\epsilon)L$, and the induction step, as well as the lemma, holds.

%
%
%
%
%

\subsection{Proof of  Claim 2 for Lemma \ref{lem-induction}}
\label{sec-detail_claim3}

Since $1\geq \beta_j\geq \frac{1}{n}$, there is one integer $t$ between $1$ and $\log n$, such that $2^{t-1}\leq\frac{1}{\beta_j}\leq 2^t$. Thus $ 2^{t/2-1}\sqrt{\epsilon}\delta_{opt}\leq\sqrt{\frac{\epsilon}{\beta_j}}\delta_{opt}\leq 2^{t/2}\sqrt{\epsilon}\delta_{opt}$. Together with $\delta\in [\delta_{opt}, (1+\epsilon)\delta_{opt}]$, we have
\begin{eqnarray}
2^{t/2-1}\sqrt{\epsilon}\frac{\delta}{1+\epsilon}\leq\sqrt{\frac{\epsilon}{\beta_j}}\delta_{opt}\leq 2^{t/2}\sqrt{\epsilon}\delta .
\end{eqnarray}
 Thus if  setting $\hat{r}_j=2^{t/2}\sqrt{\epsilon}\delta$, we have 
\begin{eqnarray}
\sqrt{\frac{\epsilon}{\beta_j}}\delta_{opt}\leq \hat{r}_j\leq 2(1+\epsilon)\sqrt{\frac{\epsilon}{\beta_j}}\delta_{opt}.\label{for-revise-appendix1} 
 \end{eqnarray}
We consider the interval $\mathcal{I}=[\frac{j}{2(1+\epsilon)}\hat{r}_j, j\hat{r}_j]$. (\ref{for-revise-appendix1}) ensures that $j\sqrt{\frac{\epsilon}{\beta_j}}\delta_{opt}\in\mathcal{I}$. Also, we build a grid in the interval with grid length $\frac{\epsilon}{2}\frac{1}{2(1+\epsilon)}j\hat{r}_j$, {\em i.e.,} $\mathcal{R}_j=\{\frac{1+l\frac{\epsilon}{2}}{2(1+\epsilon)}j\hat{r}_j\mid 0\le l\le 4+\frac{2}{\epsilon}\}$. Moreover, the grid length $\frac{\epsilon}{2}\frac{1}{2(1+\epsilon)}j\hat{r}_j\leq \frac{\epsilon}{2}j\sqrt{\frac{\epsilon}{\beta_j}}\delta_{opt}$, which implies that 
%
%
%
%
%
%
%
there exists $r_j\in \mathcal{R}_j$ such that
\begin{eqnarray}
j\sqrt{\frac{\epsilon}{\beta_j}}\delta_{opt}\leq r_j\leq (1+\frac{\epsilon}{2})j\sqrt{\frac{\epsilon}{\beta_j}}\delta_{opt}.
\end{eqnarray}
Note that $\mathcal{R}_j\subset \mathcal{R}$, where $\mathcal{R}=\cup^{\log n}_{t=0}\{\frac{1+l\frac{\epsilon}{2}}{2(1+\epsilon)}j2^{t/2}\sqrt{\epsilon}\delta\mid 0\le l\le 4+\frac{2}{\epsilon}\}$. Thus, the Claim is true.
%

\subsection{Proof of  Claim 3 for Lemma \ref{lem-induction}}
\label{sec-detail_claim4}
Note that $\delta^2_{opt}=\sum^k_{j=1}\beta_j\delta^2_j$, and $\beta_j\le \beta_l$  for each $1\leq l\leq j-1$. Thus, we have $\delta_{l}\leq \sqrt{\frac{1}{\beta_l}}\delta_{opt}\le \sqrt{\frac{1}{\beta_j}}\delta_{opt}$. Together with $j\sqrt{\frac{\epsilon}{\beta_j}}\delta_{opt}\leq r_j$ (Claim 2) and $||p_{v_l}-m_l ||\leq \epsilon\delta_l+(1+\epsilon)l\sqrt{\frac{\epsilon}{\beta_l}}\delta_{opt} $ (by the induction hypothesis), we have 
\begin{eqnarray}
r_j-||p_{v_l}-m_l ||&\ge& j\sqrt{\frac{\epsilon}{\beta_j}}\delta_{opt}-(\epsilon\delta_l+(1+\epsilon)(j-1)\sqrt{\frac{\epsilon}{\beta_l}}\delta_{opt} )\nonumber\\
&\geq& (1-(j-1)\epsilon)\sqrt{\frac{\epsilon}{\beta_j}}\delta_{opt}-\epsilon\delta_l\nonumber\\
&\geq& (1-(j-1)\epsilon-\sqrt{\epsilon})\sqrt{\frac{\epsilon}{\beta_j}}\delta_{opt}. \label{for-revise-appendix2}
\end{eqnarray}
Since $\epsilon\in(0, \frac{1}{4k^2})$ in the input of Algorithm~\ref{alg-kcmeans}, we know $r_j-||p_{v_l}-m_l ||>0$. That is, $m_l$ is covered by the ball $B_{j,l}$.

%

For each $1\leq l\leq j-1$, we have $|Opt_l \setminus (\bigcup^{j-1}_{w=1}B_{j,w})|\leq|Opt_l \setminus B_{j,l} | $. For any $p\in Opt_l \setminus B_{j,l}$, $||p-m_l||\geq r_j-||p_{v_l}-m_l ||$. 
By Markov's inequality, we have
\begin{eqnarray}
|Opt_l \setminus B_{j,l} | \leq \frac{\delta^2_l}{(r_j-||p_{v_l}-m_l ||)^2}|Opt_l |.
\end{eqnarray}



Together with (\ref{for-revise-appendix2}), we have
\begin{eqnarray}
|Opt_l \setminus B_{j,l} | &\leq& \frac{\delta^2_l}{(1-(j-1)\epsilon-\sqrt{\epsilon})^2\frac{\epsilon}{\beta_j}\delta^2_{opt}}|Opt_l |\nonumber\\
&\leq& \frac{\delta^2_l}{(1-(j-1)\epsilon-\sqrt{\epsilon})^2\frac{\epsilon}{\beta_j}\beta_l \delta^2_l}|Opt_l | \nonumber\\
&=& \frac{\beta_j}{(1-(j-1)\epsilon-\sqrt{\epsilon})^2\epsilon\beta_l}|Opt_l |\nonumber\\
&=&\frac{\beta_j n}{(1-(j-1)\epsilon-\sqrt{\epsilon})^2\epsilon}\leq\frac{\beta_j n}{(1-j\sqrt{\epsilon})^2\epsilon},  \label{for-revise-appendix3}
\end{eqnarray}
%
where the second inequality follows from the fact that $\beta_l \delta^2_l\leq \delta^2_{opt}$, and the fourth equation follows from that $\frac{|Opt_l|}{\beta_l}=n$. Again, $\epsilon\in(0, \frac{1}{4k^2})$ implies that $ \frac{\beta_j n}{(1-j\sqrt{\epsilon})^2\epsilon}\leq \frac{4\beta_j n}{\epsilon}$. 
Thus, in total, we have
\begin{eqnarray}
|Opt_l \setminus B_{j,l} | &\leq& \frac{4\beta_j n}{\epsilon}.
\end{eqnarray}
Hence, the Claim is true.





\end{document}